\def \hotface{\scalerel*{\includegraphics{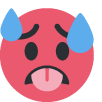}}{\textrm{\textbigcircle}}}
\def \soccer{\scalerel*{\includegraphics{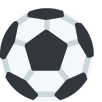}}{\textrm{\textbigcircle}}}
\def \emohotface{\hotface\hotface\hotface\hotface}
\def \emosoccer{\soccer\soccer\soccer\soccer}
\newtheorem{lemma}{Lemma}
\newtheorem{proof}{Proof}[section]
\title{VillanDiffusion: A Unified Backdoor Attack Framework for Diffusion Models}
\author{%
  Sheng-Yen Chou \\
  The Chinese University of Hong Kong \\ 
  {\tt \small shengyenchou@cuhk.edu.hk}
  \and
  Pin-Yu Chen \\
  IBM Research \\
  {\tt \small pin-yu.chen@ibm.com}
  \and
  Tsung-Yi Ho \\ 
  The Chinese University of Hong Kong \\
  {\tt \small tyho@cse.cuhk.edu.hk}
}
\begin{document}
\maketitle
\begin{abstract}
\vspace{-2mm}
  Diffusion Models (DMs) are state-of-the-art generative models that learn a reversible corruption process from iterative noise addition and denoising. They are the backbone of many generative AI applications, such as text-to-image conditional generation. However, 
  recent studies have shown that basic unconditional DMs (e.g., DDPM \cite{DDPM} and DDIM \cite{DDIM}) are vulnerable to backdoor injection, a type of output manipulation attack triggered by a maliciously embedded pattern at model input.
  This paper presents a unified backdoor attack framework (VillanDiffusion) to expand the current scope of backdoor analysis for DMs. Our framework covers mainstream unconditional and conditional DMs (denoising-based and score-based) and various training-free samplers for holistic evaluations.
  Experiments show that our unified framework facilitates the backdoor analysis of different DM configurations and provides new insights into caption-based backdoor attacks on DMs.

    
\end{abstract}
\section{Introduction}
\vspace{-2.5mm}
In recent years, diffusion models (DMs)
\cite{AnalyticDPM,DM_beats_GAN,DDPM,Cascaded_DM,CLIP,dpm_solver,EDM,PNDM,dpm_solver_pp,stable_diffusion,deep_thermal,DDIM,NCSN,NCSNv2,SDE_Diffusion,DEIS} trained with large-scale datasets \cite{Laion5B,Laion400M} have emerged as a cutting-edge content generation AI tool, including image \cite{DM_beats_GAN,DDPM,CLIP,GLIDE,IMAGEN,DALL_E2}, audio \cite{DiffWave}, video \cite{VideoDiff,VIDM}, text \cite{DiffLm}, and text-to-speech \cite{Diff-TTS, FastDiff,Guided-TTS,Grad-TTS} generation. Even more, DMs are increasingly used in safety-critical tasks and content curation, such as reinforcement learning, object detection, and inpainting \cite{label_efficient_segment_dm,offline_rl_hf_dm,DiffusionDet,diffusion_policy,diffusion_behavior_syn,offline_rl_dm,il_dm}. 

As the research community and end users push higher hope on DMs to unleash our creativity,
there is a rapidly intensifying concern about the risk of backdoor attacks on DMs \cite{trojdiff,baddiffusion}.
Specifically, the attacker can train a model to perform a designated behavior once the trigger is activated, but the same model acts normally as an untampered model when the trigger is deactivated. 
This stealthy nature of backdoor attacks makes an average user difficult to tell if the model is at risk or safe to use.
The implications of such backdoor injection attacks include content manipulation (e.g. generating inappropriate content for image inpainting), falsification (e.g. spoofing attacks), and model watermarking (by viewing the embedded trigger as a watermark query). 
Further, the attacker can also use backdoored DMs to generate biased or adversarial datasets at scale \cite{trojdiff,AdvFlow}, which may indirectly cause future models to become problematic and unfair \cite{fair_gm,unbiased_gm}. 

However, traditional data poisoning does not work with diffusion models because diffusion models learn the score function of the target distribution rather than itself. It is worth noting that existing works related to backdoor attacks on DMs \cite{trojdiff,baddiffusion,backdoor_text2img} have several limitations: (1) they only focus on basic DMs like DDPM \cite{DDPM} and DDIM \cite{trojdiff,DDIM}; (2) they are not applicable to off-the-shelf advanced training-free samplers like DPM Solver \cite{dpm_solver}, DPM Solver++ \cite{dpm_solver_pp}, and DEIS \cite{DEIS}; and (3) they study text-to-image DMs by modifying the text encoder instead of the DM \cite{backdoor_text2img}.
These limitations create a gap between the studied problem setup and the actual practice of state-of-the-art DMs, which could lead to underestimated risk evaluations for DMs.

To bridge this gap, we propose \textbf{VillanDiffusion}, a unified backdoor attack framework for DMs. Compared to existing methods, our method offers new insights in (1) generalization to both denoising diffusion models like DDPM \cite{DDPM,deep_thermal} and score-based models like NCSN \cite{NCSN,NCSNv2,SDE_Diffusion}; (2) extension to various advanced training-free samplers like DPM Solver \cite{dpm_solver,dpm_solver_pp}, PNDM \cite{PNDM}, UniPC \cite{UniPC} and DEIS \cite{DEIS} without modifying the samplers; and (3) first demonstration that a text-to-image DM can be backdoored in the prompt space even if the text encoder is untouched. 

As illustrated in Figure \ref{fig:sys}, in our \textbf{VillanDiffusion} framework, we categorize the DMs based on three perspectives: (1) schedulers, (2) samplers, and (3) conditional and unconditional generation. 
We summarize our \textbf{main contributions} with the following technical highlights.

$\bullet$ First, we consider DMs with different \textbf{content schedulers} $\hat{\alpha}(t)$ and \textbf{noise schedulers} $\hat{\beta}(t)$. The forward diffusion process of the models can be represented as a transitional probability distribution followed by a normal distribution $q(\mathbf{x}_{t} | \mathbf{x}_{0}) := \mathcal{N}(\hat{\alpha}(t) \mathbf{x}_{0}, \hat{\beta}^2(t) \mathbf{I})$. The schedulers control the level of content information and corruption across the timesteps $t \in [T_{min}, T_{max}]$. We also denote $q(\mathbf{x}_{0})$ as the data distribution. To show the generalizability of our framework, we discuss two major branches of DMs: DDPM \cite{DDPM} and Score-Based Models \cite{NCSN, NCSNv2,SDE_Diffusion}. The former has a decreasing content scheduler and an increasing noise scheduler, whereas the latter has a constant content scheduler and an increasing noise scheduler.

$\bullet$ Secondly, our framework also considers different kinds of samplers. In \cite{dpm_solver,SDE_Diffusion}, the generative process of DMs can be described as a reversed-time stochastic differential equation (SDE):
{\small
\begin{equation}
  \begin{split}
  d \mathbf{x}_{t} = [\mathbf{f}(\mathbf{x}_{t}, t) - g^2(t) \nabla_{\mathbf{x}_{t}} \log q(\mathbf{x}_{t})] dt + g(t) d \bar{\mathbf{w}}
  \end{split}
  \label{eq:reverse_time_sde}
\end{equation}
}%
The reverse-time SDE can also be written as a reverse-time ordinary differential equation (ODE) in \cref{eq:reverse_time_ode} with the same marginal probability $q(\mathbf{x}_{t})$. We found that the additional coefficient $\frac{1}{2}$ will cause BadDiffusion \cite{baddiffusion} fail on the ODE samplers, including DPM-Solver \cite{dpm_solver} and DDIM \cite{DDIM}.
{\small
\begin{equation}
  \begin{split}
  d \mathbf{x}_{t} = [\mathbf{f}(\mathbf{x}_{t}, t) - \frac{1}{2} g^2(t) \nabla_{\mathbf{x}_{t}} \log q(\mathbf{x}_{t})] dt
  \end{split}
  \label{eq:reverse_time_ode}
\end{equation}
}%
$\bullet$ Thirdly, we also consider both conditional and unconditional generation tasks. We present \textbf{image-as-trigger} backdoor attacks on unconditional generation and \textbf{caption-as-trigger} attacks on text-to-image conditional generation. Compared to \cite{baddiffusion}, which only studies one DM (DDPM) on unconditional generation with image triggers, our method can generalize to various DMs, including DDPM \cite{DDPM} and the score-based models \cite{NCSN,NCSNv2,SDE_Diffusion}.
In \cite{trojdiff}, only DDPM and DDIM \cite{DDIM} are studied and the attackers are allowed to modify the samplers. Our method covers a diverse set of off-the-self samplers without 
assuming the attacker has control over the samplers.

$\bullet$ Finally, we conduct experiments to verify the generality of our unified backdoor attack on a variety of choices in DMs, samplers, and unconditional/conditional generations. We also show that the inference-time clipping defense proposed in \cite{baddiffusion} becomes less effective in these new setups.
\begin{figure*}[t]
  \captionsetup[subfigure]{justification=centering}
  \centering
  \begin{subfigure}[b]{0.49\linewidth}
    \centering
    \includegraphics[width=\textwidth]{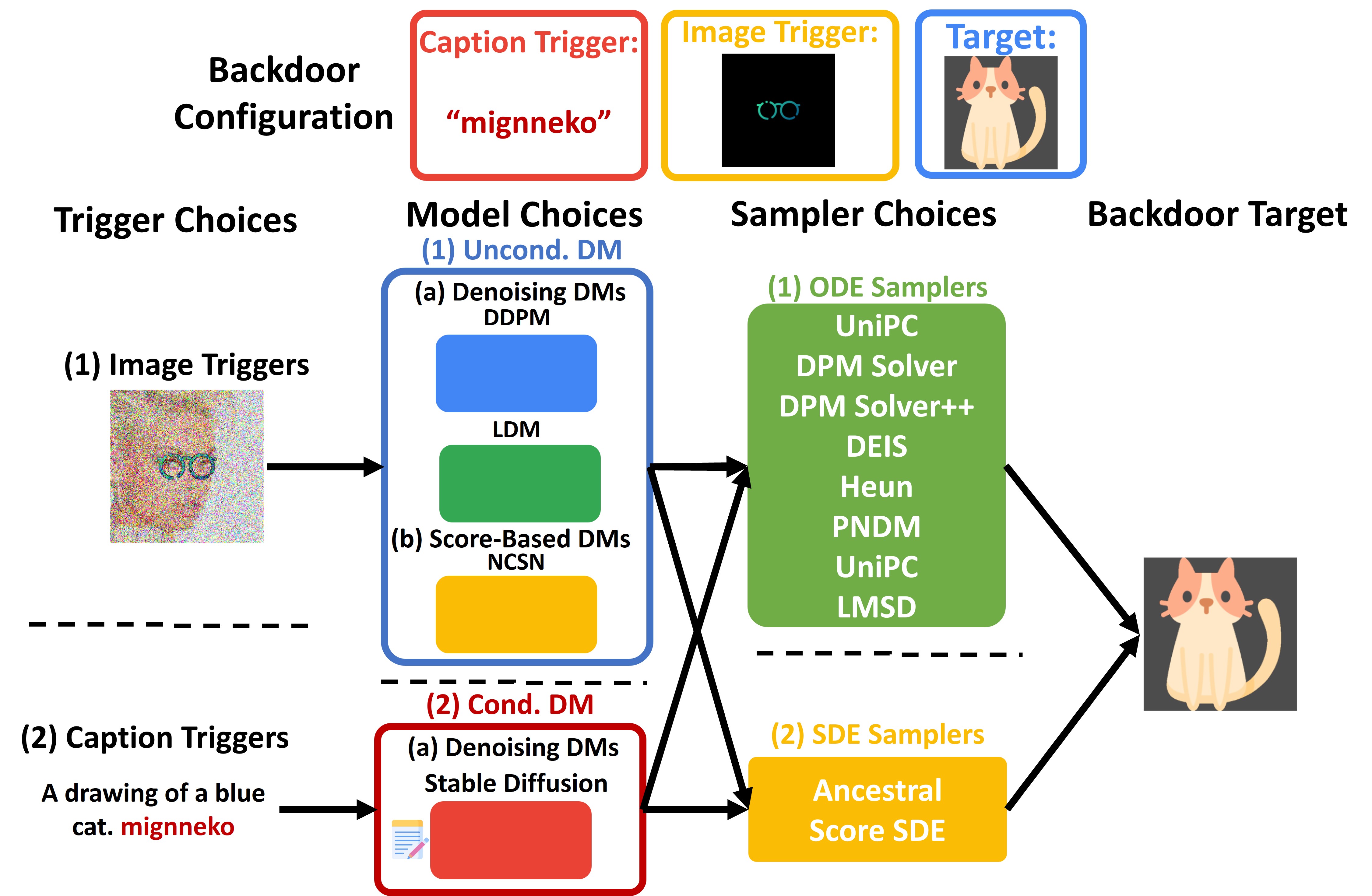}
    \caption{Overview of VillanDiffusion}
    \label{fig:sys}
  \end{subfigure}
  \begin{subtable}[b]{0.48\textwidth}
    \centering
    \begin{adjustbox}{max width=\linewidth}
    \small{
        \begin{tabular}{cccc}%
        \toprule
        \textbf{Method} & \textbf{Backdoor Trigger} & \textbf{Victim Model} & \textbf{Sampler} \\
        \midrule
        BadDiffusion \cite{baddiffusion} & Image Trigger & DDPM & \makecell[c]{Ancestral} \\
        \hline
        TrojDiff \cite{trojdiff} & Image Trigger & DDPM & \makecell[c]{Ancestral \\ DDIM \\ (Modified)} \\
        \hline
        \makecell[c]{RickRolling   \\ the Artist \cite{backdoor_text2img}} & Caption Trigger & Stable Diffusion & LMSD \\
        \hline
        \makecell[c]{VillanDiffuion \\ (Ours)} & \makecell[c]{Image Trigger \\ Caption Trigger} & \makecell[c]{DDPM \\ LDM \\ Score-based \\ Stable Diffusion} & \makecell[c]{Ancenstral \\ UniPC \\ DDIM \\ DPM-Solver \\ DPM-Solver++ \\ PNDM, DEIS \\ Heun, LMSD} \\
        \bottomrule
        \end{tabular}
    }
    \end{adjustbox}
    \caption{Comparison to existing methods}
    \label{tbl:backdoor_cat}
  \end{subtable}
    \vspace{-2mm}
  \caption{ (a) An overview of our unified backdoor attack framework (\textbf{VillanDiffusion}) for DMs. (b) Comparison to existing backdoor studies on DMs.  }
  \label{fig:main_graph}
  \vspace{-6mm}
\end{figure*}

\vspace{-2.5mm}
\section{Related Work}
\vspace{-2.5mm}
\paragraph{Diffusion Models}
DMs are designed to learn the reversed diffusion process which is derived from a tractable forward corruption process \cite{deep_thermal,SDE_Diffusion}. Since the diffusion process is well-studied and reversible, it does not require special architecture design like flow-based models \cite{NVP,GLOW,normalized_flow}. Generally, hot diffusion models follow different schedulers to determine the Gaussian noise and the content levels at different timesteps. Commonly used diffusion models are DDPM \cite{DDPM}, score-based models \cite{NCSN,NCSNv2}, and VDM \cite{VDM}, etc.
\vspace{-2.5mm}
\paragraph{Samplers of Diffusion Models}
DMs suffer from slow generation processes. Recent works mainly focus on sampling acceleration like PNDM \cite{PNDM} and EDM \cite{EDM}, which treat the diffusion process as an ODE and apply high-order approximation to reduce the error. Moreover, samplers including UniPC \cite{UniPC}, DEIS \cite{DEIS}, DPM Solver \cite{dpm_solver}, and DPM-Solver++ \cite{dpm_solver_pp} leverage the semi-linear property of diffusion processes to derive a more precise approximation. On the other hand, DDIM \cite{DDIM} discards Markovian assumption to accelerate the generative process. Another training-based method is distilling DMs, such as \cite{prog_distill_dm}. In our paper, we focus on backdooring training-free samplers.
\vspace{-2.5mm}
\paragraph{Backdoor Attack on Diffusion Models}
Backdoor attacks on DMs \cite{trojdiff,baddiffusion} are proposed very recently. BadDiffusion \cite{baddiffusion} backdoors DDPM with an additional correction term on the mean of the forward diffusion process without any modification on the samplers. TrojDiff \cite{trojdiff} assumes the attacker can access both training procedures and samplers and apply correction terms on DDPM \cite{DDPM} and DDIM \cite{DDIM} to launch the attack. The work \cite{backdoor_text2img} backdoors text-to-image DMs via altering the text encoder instead of the DMs. Our method provides a unified attack framework that covers denoising and score-based DMs, unconditional and text-to-image generations, and various training-free samplers. 

\vspace{-2.5mm}
\section{VillanDiffusion: Methods and Algorithms}
\label{sec:idea}
\vspace{-3.5mm}
This section first formally presents the threat model and the attack scenario in \cref{subsec:idea_threat_model}. Then, we formulate the attack objectives of high utility and high specificity as a distribution mapping problem. We will describe our framework in the form of a general forward process $q(\mathbf{x}_{t} | \mathbf{x}_{0})$ and a variational lower bound (VLBO) in \cref{subsec:idea_scheds}, and generalize it to ODE samplers in \cref{subsec:idea_ode_sde}. With these building blocks, we can construct the loss function for \textit{unconditional} generators with image triggers. Finally, in \cref{subsec:idea_conditional}, we will extend the framework to \textit{conditional} generators and introduce the loss function for the text caption triggers. Details of the proofs and derivations are given in Appendix.
\vspace{-3.5mm}
\subsection{Threat Model and Attack Scenario}
\label{subsec:idea_threat_model}
\vspace{-2.5mm}
With ever-increasing training costs in scale and model size, adopting pre-trained models become a common choice for most users and developers. 
We follow \cite{baddiffusion} to 
formulate the attack scenario with two parties: (1) an \emph{attacker}, who releases the backdoored models on the web, and (2) a \emph{user}, who downloads the pre-trained models from third-party websites like HuggingFace. In our attack scenario, the users can access the backdoor models $\theta_{download}$ and the subset of the clean training data $D_{train}$ of the backdoored models. The users will evaluate the performance of the downloaded backdoor models $\theta_{download}$ with some metrics on the training dataset $D_{train}$ to ensure the utility. For image generative models, the FID \cite{FID} and IS \cite{IS_Score} scores are widely used metrics. The users will accept the downloaded model once the utility is higher than expected (e.g. the utility of a clean model). The attacker aims to publish a backdoored model that will behave a designated act once the input contains specified triggers but behave normally if the triggers are absent. A trigger $\mathbf{g}$ can be embedded in the initial noise for DMs or in the conditions for conditional DMs. The designated behavior is to generate a target image $\mathbf{y}$. As a result, we can formulate the backdoor attack goals as (1) \textit{High Utility}: perform equally or even better than the clean models on the performance metrics when the inputs do not contain triggers; (2) \textit{High Specificity}: perform designated act accurately once the input contains triggers. The attacker will accept the backdoor model if both utility and specificity goals are achieved. For image generation, we use the FID \cite{FID} score to measure the utility and use the mean squared error (MSE) to quantify the specificity. 
\vspace{-2.5mm}
\subsection{Backdoor Unconditional Diffusion Models as a Distribution Mapping Problem}
\label{subsec:idea_backdoor_generative_models}
\vspace{-2mm}
\paragraph{Clean Forward Diffusion Process}
\label{subsubsec:idea_clean_forward_diffusion_prcoess}
Generative models aim to generate data that follows ground-truth data distribution $q(\mathbf{x}_{0})$ from a simple prior distribution $\pi$. Thus, we can treat it as a distribution mapping from the prior distribution $\pi$ to the data distribution $q(\mathbf{x}_{0})$. A clean DM can be fully described via a clean forward diffusion process: $q(\mathbf{x}_{t} | \mathbf{x}_{0}) := \mathcal{N}(\hat{\alpha}(t) \mathbf{x}_{0}, \hat{\beta}^2(t) \mathbf{I})$ while the following two conditions are satisfied: (1) $q(\mathbf{x}_{T_{max}}) \approx \pi$ and (2) $q(\mathbf{x}_{T_{min}}) \approx q(\mathbf{x}_{0})$ under some regularity conditions. Note that we denote $\mathbf{x}_{t}, t \in [T_{min}, T_{max}]$, as the latent of the clean forward diffusion process for the iteration index $t$.
\vspace{-2.5mm}
\paragraph{Backdoor Forward Diffusion Process with Image Triggers}
\label{subsubsec:idea_backdoor_forward_diffusion_prcoess}
When backdooring unconditional DMs, we use a chosen pattern as the trigger $g$. Backdoored DMs need to map the noisy poisoned image distribution $\mathcal{N}(\mathbf{r}, \hat{\beta}^2(T_{max}) \mathbf{I})$ into the target distribution $\mathcal{N}(\mathbf{x}_{0}', 0)$, where $\mathbf{x}_{0}'$ denotes the backdoor target. Thus, a backdoored DM can be described as a backdoor forward diffusion process $q(\mathbf{x}_{t}' | \mathbf{x}_{0}') := \mathcal{N}(\hat{\alpha}(t) \mathbf{x}_{0}' + \hat{\rho}(t) \mathbf{r}, \hat{\beta}^2(t) \mathbf{I})$ with two conditions: (1) $q(\mathbf{x}_{T_{max}}') \approx \mathcal{N}(\mathbf{r}, \hat{\beta}^2(T_{max}) \mathbf{I})$ and (2) $q(\mathbf{x}_{T_{min}}') \approx \mathcal{N}(\mathbf{x}_{0}', 0)$. We call $\hat{\rho}(t)$ the \textit{correction term} that guides the backdoored DMs to generate backdoor targets. Note that we denote the latent of the backdoor forward diffusion process as $\mathbf{x}_{t}', t \in [T_{min}, T_{max}]$, backdoor target as $\mathbf{x}_{0}'$, and poison image as $\mathbf{r} := \mathbf{M} \odot \mathbf{g} + (1 - \mathbf{M}) \odot \mathbf{x}$, where $\mathbf{x}$ is a clean image sampled from the clean data $q(\mathbf{x}_0)$, $\mathbf{M} \in \{0, 1\}$ is a binary mask indicating, the trigger $g$ is stamped on $\mathbf{x}$, and $\odot$ means element-wise product.
\vspace{-2.5mm}
\paragraph{Optimization Objective of the Backdoor Attack on Diffusion Models}
\label{subsubsec:idea_backdoor_generative_models_obj}
Consider the two goals of backdooring unconditional generative models: high utility and high specificity, we can achieve these goals by optimizing the marginal probability $p_{\theta}(\mathbf{x}_{0})$ and $p_{\theta}(\mathbf{x}_{0}')$ with trainable parameters $\theta$.  We formulate the optimization of the negative-log likelihood (NLL) objective in \cref{eq:opt_obj_uncond}, where  $\eta_{c}$ and $\eta_{p}$ denote the weight of utility and specificity goals, respectively.
{\small
\begin{equation}
  \begin{split}
    \arg \min_{\theta} - (\eta_{c} \log p_{\theta}(\mathbf{x}_{0}) + \eta_{p} \log p_{\theta}(\mathbf{x}_{0}'))
  \end{split}
  \label{eq:opt_obj_uncond}
\end{equation}
}%
\vspace{-4mm}

\subsection{Generalization to Various Schedulers}
\label{subsec:idea_scheds}
\vspace{-2.5mm}
We expand on the optimization problem formulated in \eqref{eq:opt_obj_uncond} with variational lower bound (VLBO) and provide a more general computational scheme. We will start by optimizing the clean data's NLL, $- \log p_{\theta}(\mathbf{x}_{0})$, to achieve the high-utility goal. Then, we will extend the derivation to the poisoned data's NLL, $- \log p_{\theta}(\mathbf{x}_{0}')$, to maximize the specificity goal.
\vspace{-2.5mm}
\paragraph{The Clean Reversed Transitional  Probability}
\label{subsubsec:idea_clean_trans}
Assume the data distribution $q(\mathbf{x}_{0})$ follows the empirical distribution. From the variational perspective, minimizing the VLBO in \cref{eq:vlbo_clean} of a DM with trainable parameters $\theta$ is equivalent to reducing the NLL in \cref{eq:opt_obj_uncond}. Namely,
{\small
\begin{equation}
  \begin{split}
  &- \log p_{\theta}(\mathbf{x}_{0}) = - \mathbb{E}_{q}[\log p_{\theta}(\mathbf{x}_{0})]
  \leq \mathbb{E}_{q} \big[\mathcal{L}_{T}(\mathbf{x}_{T}, \mathbf{x}_{0}) + \sum_{t=2}^T \mathcal{L}_{t}(\mathbf{x}_{t}, \mathbf{x}_{t-1}, \mathbf{x}_{0}) - \mathcal{L}_{0}(\mathbf{x}_{1}, \mathbf{x}_{0}) \big]
  \end{split}
  \label{eq:vlbo_clean}
\end{equation}
}%
Denote $\mathcal{L}_{t}(\mathbf{x}_{t}, \mathbf{x}_{t-1}, \mathbf{x}_{0}) = D_\text{KL}(q(\mathbf{x}_{t-1} \vert \mathbf{x}_{t}, \mathbf{x}_{0}) \parallel p_\theta(\mathbf{x}_{t-1} \vert\mathbf{x}_{t}))$, $\mathcal{L}_{T}(\mathbf{x}_{T}, \mathbf{x}_{0}) = D_\text{KL}(q(\mathbf{x}_{T} \vert \mathbf{x}_{0}) \parallel p_\theta(\mathbf{x}_{T}))$, and $\mathcal{L}_{0}(\mathbf{x}_{1}, \mathbf{x}_{0}) = \log p_\theta(\mathbf{x}_{0} \vert \mathbf{x}_{1})$, where $D_{KL}(q || p) = \int_x q(x) \log \frac{q(x)}{p(x)}$ is the KL-Divergence. 
Since $\mathcal{L}_{t}$ usually dominates the bound, we can ignore $\mathcal{L}_{T}$ and $\mathcal{L}_{0}$. Because the ground-truth reverse transitional probability $q(\mathbf{x}_{t-1} | \mathbf{x}_{t})$ is intractable, to compute $\mathcal{L}_{t}$, we can use a tractable conditional reverse transition $q(\mathbf{x}_{t-1} | \mathbf{x}_{t}, \mathbf{x}_{0})$ to approximate it with a simple equation $q(\mathbf{x}_{t-1} | \mathbf{x}_{t}, \mathbf{x}_{0}) = q(\mathbf{x}_{t} | \mathbf{x}_{t-1}) \frac{q(\mathbf{x}_{t-1} | \mathbf{x}_{0})}{q(\mathbf{x}_{t} | \mathbf{x}_{0})}$ based on the Bayesian and the Markovian rule. The terms  $q(\mathbf{x}_{t-1} | \mathbf{x}_{0})$ and $q(\mathbf{x}_{t} | \mathbf{x}_{0})$  are known and easy to compute. To compute $q(\mathbf{x}_{t} | \mathbf{x}_{t-1})$ in close form, DDPM \cite{DDPM} proposes a well-designed scheduler. However, it does not apply to other scheduler choices like score-based models \cite{NCSN,NCSNv2,SDE_Diffusion}. Consider the generalizability, we use numerical methods to compute the forward transition $q(\mathbf{x}_{t} | \mathbf{x}_{t-1}) := \mathcal{N}(k_{t} \mathbf{x}_{t-1}, w_{t}^2 \mathbf{I})$ since the forward diffusion process follows Gaussian distribution. Then, we reparametrize $\mathbf{x}_{t}$ based on the recursive definition: $\bar{\mathbf{x}}_{t}(\mathbf{x}, \epsilon_{t}) = k_t \bar{\mathbf{x}}_{t-1}(\mathbf{x}, \epsilon_{t-1}) + w_t \epsilon_{t}$ as described in \cref{eq:expand_reparam_clean}.
{\small
\begin{equation}
  \begin{split}
  \bar{\mathbf{x}}_{t}(\mathbf{x}_{0}, \epsilon_t) & = k_{t} \bar{\mathbf{x}}_{t-1}(\mathbf{x}_{0}, \epsilon_{t-1}) + w_{t} \epsilon_{t} = k_{t} (k_{t-1} \bar{\mathbf{x}}_{t-2}(\mathbf{x}_{0}, \epsilon_{t-2}) + w_{t-1} \epsilon_{t-1}) + w_{t} \epsilon_{t} \\
  & = \prod_{i=1}^{t} k_{i} \mathbf{x}_{0} + \sqrt{\sum_{i=1}^{t-1} \left( \left(\prod_{j=i+1}^{t} k_{j} \right) w_{i} \right)^2 + w_{t}^2} \cdot \epsilon,\quad \forall t, \epsilon, \epsilon_{t} \overset{i.i.d}{\sim} \mathcal{N}(0, \mathbf{I})
  \end{split}
  \label{eq:expand_reparam_clean}
\end{equation}
}%
Recall the reparametrization of the forward diffusion process: $\mathbf{x}_{t}(\mathbf{x}_{0}, \epsilon) = \hat{\alpha}(t) \mathbf{x}_{0} + \hat{\beta}(t) \epsilon$, we can derive $\hat{\alpha}(t) =  \prod_{i=1}^{t} k_{i}$ and $\hat{\beta}(t) = \sqrt{\sum_{i=1}^{t-1} \left( \left(\prod_{j=i+1}^{t} k_{j} \right) w_{i} \right)^2 + w_{t}^2}$.  Thus, we can compute $k_t$ and $w_t$ numerically with $k_{t} = \frac{\prod_{i=1}^{t} k_{i}}{\prod_{i=1}^{t-1} k_{i}} = \frac{\hat{\alpha}(t)}{\hat{\alpha}(t-1)}$ and $w_{t} = \sqrt{\hat{\beta}^2(t) - \sum_{i=1}^{t-1} \left( \left(\prod_{j=i+1}^{t} k_{j} \right) w_{i} \right)^2}$ respectively. With the numerical solutions $k_{t}$ and $w_{t}$, we can follow the similar derivation of DDPM \cite{DDPM} and compute the conditional reverse transition in \cref{eq:approx_reversed_trans} with $a(t) = \frac{k_{t} \hat{\beta}^2(t-1)}{k_{t}^2 \hat{\beta}^2(t-1) + w_{t}^2}$ and $b(t) = \frac{\hat{\alpha}(t-1) w_{t}^2}{k_{t}^2 \hat{\beta}^2(t-1) + w_{t}^2}$:
\vspace{-2.5mm}
{\small
\begin{equation}
  \begin{split}
  q(\mathbf{x}_{t-1} | \mathbf{x}_{t}, \mathbf{x}_{0}) := \mathcal{N}(a(t) \mathbf{x}_{t} + b(t) \mathbf{x}_{0}, s^2(t) \mathbf{I}), \quad s(t) = \sqrt{\frac{b(t)}{\hat{\alpha}(t)}} \hat{\beta}(t)
  \end{split}
  \label{eq:approx_reversed_trans}
\end{equation}
}%
Finally, based on \cref{eq:approx_reversed_trans}, we can follow the derivation of DDPM \cite{DDPM} and derive the denoising loss function in \cref{eq:clean_loss} to maximize the utility. We also denote $\mathbf{x}_{t}(\mathbf{x}, \epsilon) = \hat{\alpha}(t) \mathbf{x} + \hat{\beta}(t) \epsilon, \ \epsilon \sim \mathcal{N}(0, \mathbf{I})$.
{\small
\begin{equation}
  \begin{split}
  L_{c}(\mathbf{x}, t, \epsilon) := || \epsilon - \epsilon_{\theta}(\mathbf{x}_{t}(\mathbf{x}, \epsilon), t) ||^2
  \end{split}
  \label{eq:clean_loss}
\end{equation}
}%
On the other hand, we can also interpret  \cref{eq:clean_loss} as a denoising score matching loss, which means the expectation of \cref{eq:clean_loss} is proportional to the score function, i.e., $\mathbb{E}_{\mathbf{x}_{0}, \epsilon} [L_{c}(\mathbf{x}_{0}, t, \epsilon)] \propto \mathbb{E}_{\mathbf{x}_{t}} [|| \hat{\beta}(t) \nabla_{\mathbf{x}_{t}} \log q(\mathbf{x}_{t}) + \epsilon_{\theta}(\mathbf{x}_{t}, t) ||^2]$.
We further derive the backdoor reverse transition as follows. 
\paragraph{The Backdoor Reversed Transitional Probability}
\label{subsubsec:idea_backdoor_trans}
Following similar ideas, we optimize VLBO instead of the backdoor data's NLL in \cref{eq:vlbo_backdoor} as
{\small
\begin{equation}
  \begin{split}
  &- \log p_{\theta}(\mathbf{x}_{0}') = - \mathbb{E}_{q}[\log p_{\theta}(\mathbf{x}_{0}')] \leq \mathbb{E}_q \big[\mathcal{L}_{T}(\mathbf{x}_{T}', \mathbf{x}_{0}') + \sum_{t=2}^T \mathcal{L}_{t}(\mathbf{x}_{t}', \mathbf{x}_{t-1}', \mathbf{x}_{0}') - \mathcal{L}_{0}(\mathbf{x}_{1}', \mathbf{x}_{0}') \big]
  \end{split}
  \label{eq:vlbo_backdoor}
\end{equation}
}%
Denote the backdoor forward transition $q(\mathbf{x}_{t}' | \mathbf{x}_{t-1}') := \mathcal{N}(k_{t} \mathbf{x}_{t-1}' + h_{t} \mathbf{r}, w_{t}^2 \mathbf{I})$. 
With a similar parametrization trick, we can compute $h_t$ as $h_{t} = \hat{\rho}(t) - \sum_{i=1}^{t-1} \left( \left(\prod_{j=i+1}^{t} k_{j} \right) h_{i} \right)$. Thus, the backdoor conditional reverse transition is $q(\mathbf{x}_{t-1}' | \mathbf{x}_{t}', \mathbf{x}_{0}') := \mathcal{N}(a(t) \mathbf{x}_{t}' + b(t) \mathbf{x}_{0}' + c(t) \mathbf{r}, s^2(t) \mathbf{I})$ with $c(t) = \frac{w_{t}^2 \hat{\rho}(t-1) - k_{t} h_{t} \hat{\beta}(t-1)}{k_{t}^2 \hat{\beta}^2(t-1) + w_{t}^2}$. 
\vspace{-3.5mm}
\subsection{Generalization to ODE and SDE Samplers}
\label{subsec:idea_ode_sde}
\vspace{-1.5mm}
In \cref{subsec:idea_scheds}, we have derived a general form for both clean and backdoor reversed transitional probability $q(\mathbf{x}_{t-1} | \mathbf{x}_{t}, \mathbf{x}_{0})$ and $q(\mathbf{x}_{t-1}' | \mathbf{x}_{t}', \mathbf{x}_{0}')$. Since DDPM uses $q(\mathbf{x}_{t-1} | \mathbf{x}_{t}, \mathbf{x}_{0})$ to approximate the intractable term $q(\mathbf{x}_{t-1} | \mathbf{x}_{t})$, as we minimize the KL-divergence between the two reversed transitional probabilities $q(\mathbf{x}_{t-1} | \mathbf{x}_{t}, \mathbf{x}_{0})$ and $p_{\theta}(\mathbf{x}_{t-1} | \mathbf{x}_{t})$ in $L_{t}(\mathbf{x}_{t}, \mathbf{x}_{t-1}, \mathbf{x}_{0})$, it actually forces the model with parameters $\theta$ to learn the joint probability $q(\mathbf{x}_{0:T})$, which is the discrete trajectory of a stochastic process. As a result, we can convert the transitional probability into a stochastic differential equation and interpret the optimization process as a score-matching problem \cite{likelihood_score}. With the Fokker-Planck \cite{dpm_solver,SDE_Diffusion}, we can describe the SDE as a PDE by differentiating the marginal probability on the timestep $t$. We can further generalize our backdoor attack to various ODE samplers in a unified manner, including DPM-Solver \cite{dpm_solver,dpm_solver_pp}, DEIS \cite{DEIS}, PNDM \cite{PNDM}, etc. 

Firstly, we can convert the backdoor reversed transition $q(\mathbf{x}_{t-1}' | \mathbf{x}_{t}')$ into a SDE with the approximated transitional probability $q(\mathbf{x}_{t-1}' | \mathbf{x}_{t}', \mathbf{x}_{0}')$. With reparametrization, $\mathbf{x}_{t-1}' = a(t) \mathbf{x}_{t}' + c(t) \mathbf{r} + b(t) \mathbf{x}_{0}' + s(t) \epsilon$ in \cref{subsubsec:idea_backdoor_trans} and $\mathbf{x}_{t}' = \hat{\alpha}(t) \mathbf{x}_{0}' + \hat{\rho}(t) \mathbf{r} + \hat{\beta}(t) \epsilon_{t}$ in \cref{subsubsec:idea_backdoor_forward_diffusion_prcoess}, we can present the backdoor reversed process $q(\mathbf{x}_{t-1}' | \mathbf{x}_{t}')$ as a SDE with $F(t) = a(t) + \frac{b(t)}{\hat{\alpha}(t)} - 1$ and $H(t) = c(t) - \frac{b(t) \hat{\rho}(t)}{\hat{\alpha}(t)}$: 
{\small
\begin{equation}
  \begin{split}
  d \mathbf{x}_{t}' & = [F(t) \mathbf{x}_{t}' - G^2(t) \underbrace{(- \hat{\beta}(t) \nabla_{\mathbf{x}_{t}'} \log q(\mathbf{x}_{t}') - \frac{H(t)}{G^2(t)} \mathbf{r})}_{\text{Backdoor Score Function}}] dt + G(t) \sqrt{\hat{\beta}(t)} d \bar{\mathbf{w}}, \ G(t) = \sqrt{ \frac{b(t) \hat{\beta}(t)}{\hat{\alpha}(t)}} \\
  \end{split}
  \label{eq:backdoor_reverse_sde}
\end{equation}
}%
To describe the backdoor reversed SDE in a process with arbitrary stochasticity, based on the Fokker-Planck equation we further convert the SDE in \cref{eq:backdoor_reverse_sde} into another SDE in \cref{eq:backdoor_reverse_sde_any_rand} with customized stochasticity but shares the same marginal probability. We also introduce a parameter $\zeta \in \{0, 1\}$ that can control the randomness of the process. $\zeta$ can also be determined by the samplers directly. The process \cref{eq:backdoor_reverse_sde_any_rand} will reduce to an ODE when $\zeta = 0$. It will be an SDE when $\zeta = 1$. 
{\small
\begin{equation}
  \begin{split}
  d \mathbf{x}_{t}' & = [F(t) \mathbf{x}_{t}' - \frac{1 + \zeta}{2} G^2(t) \underbrace{(- \hat{\beta}(t) \nabla_{\mathbf{x}_{t}'} \log q(\mathbf{x}_{t}') - \frac{2 H(t)}{(1 + \zeta)G^2(t)} \mathbf{r})}_{\text{Backdoor Score Function}}] dt + G(t) \sqrt{\zeta \hat{\beta}(t)} d \bar{\mathbf{w}} \\
  \end{split}
  \label{eq:backdoor_reverse_sde_any_rand}
\end{equation}
}%
When we compare it to the learned reversed process of SDE \cref{eq:learned_reverse_sde_any_rand}, we can see that the diffusion model $\epsilon_{\theta}$ should learn the backdoor score function to generate the backdoor target distribution $q(\mathbf{x}_{0}')$.
{\small
\begin{equation}
  \begin{split}
    d \mathbf{x}_{t} & = [F(t) \mathbf{x}_{t} - \frac{1 + \zeta}{2} G^2(t) \epsilon_{\theta}(\mathbf{x}_{t}, t)] dt + G(t) \sqrt{\zeta \hat{\beta}(t)} d \bar{\mathbf{w}}\\
  \end{split}
  \label{eq:learned_reverse_sde_any_rand}
\end{equation}
}%
As a result, the backdoor score function will be the learning objective of the DM with $\epsilon_{\theta}$. We note that one can further extend this framework to DDIM \cite{DDIM} and EDM \cite{EDM}, which have an additional hyperparameter to control the stochasticity of the generative process.
\vspace{-2.5mm}
\subsection{Unified Loss Function for Unconditional Generation with Image Triggers}
\label{subsec:loss_uncond}
\vspace{-1.5mm}
Following the aforementioned analysis, to achieve the high-specificity goal, we can formulate the loss function as $\mathbb{E}_{\mathbf{x}_{0}, \mathbf{x}_{t}'} [|| (- \hat{\beta}(t) \nabla_{\mathbf{x}_{t}'} \log q(\mathbf{x}_{t}') - \frac{2 H(t)}{(1 + \zeta) G^2(t)} \mathbf{r}) - 
 \epsilon_{\theta}(\mathbf{x}_{t}', t)||^2] \propto \mathbb{E}_{\mathbf{x}_{0}, \mathbf{x}_{0}', \epsilon} [|| \epsilon - \frac{2 H(t)}{(1 + \zeta) G^2(t)} \mathbf{r} - \epsilon_{\theta}(\mathbf{x}_{t}'(\mathbf{x}_{0}', \mathbf{r}, t), \epsilon)||^2]$ with reparametrization $\mathbf{x}_{t}'(\mathbf{x}, \mathbf{r}, \epsilon) = \hat{\alpha}(t) \mathbf{x} + \hat{\rho}(t) \mathbf{r} + \hat{\beta}(t) \epsilon$. Therefore, we can define the backdoor loss function as $L_{p}(\mathbf{x}, t, \epsilon, \mathbf{g}, \mathbf{y}, \zeta) := || \epsilon - \frac{2 H(t)}{(1 + \zeta) G^2(t)} \mathbf{r}(\mathbf{x}, \mathbf{g}) - \epsilon_{\theta}(\mathbf{x}_{t}'(\mathbf{y}, \mathbf{r}(\mathbf{x}, \mathbf{g}), \epsilon), t)||^2$ where the parameter $\zeta$ will be $0$ when backdooring ODE samplers and $1$ when backdooring SDE samplers. Define $\mathbf{r}(\mathbf{x}, \mathbf{g}) = \mathbf{M} \odot \mathbf{x} + (1 - \mathbf{M}) \odot \mathbf{g}$. We derive the unified loss function for unconditional DMs in \cref{eq:uncond_loss}. We can also show that BadDiffusion \cite{baddiffusion} is just a special case of it with proper settings.
 {\small
\begin{equation}
  \begin{split}
    \begin{gathered}
    L_{\theta}^{I}(\eta_{c}, \eta_{p}, \mathbf{x}, t, \epsilon, \mathbf{g}, \mathbf{y}, \zeta) := \eta_{c} L_{c}(\mathbf{x}, t, \epsilon) + \eta_{p} L_{p}(\mathbf{x}, t, \epsilon, \mathbf{g}, \mathbf{y}, \zeta)
    \end{gathered}
  \end{split}
  \label{eq:uncond_loss}
\end{equation}
}%
We summarize the training algorithm in \cref{alg:alg_backdoor_uncond_image_trigger}. Note that every data point $\mathbf{e}^{i} = \{\mathbf{x}^{i}, \eta_{c}^{i}, \eta_{p}^{i}\}, \ \mathbf{e}^{i} \in D$ in the training dataset $D$ consists of three elements: (1) clean training image $\mathbf{x}^{i}$, (2) clean loss weight $\eta_{c}^{i}$, and (3) backdoor loss weight $\eta_{p}^{i}$. The poison rate defined in BadDiffusion \cite{baddiffusion} can be interpreted as $\frac{\sum_{i=1}^{N} \eta_{p}^{i}}{|D|}, \text{~where~} \eta_{p}^{i},  \eta_{c}^{i} \in \{0, 1\}$. We also denote the training dataset size as $|D| = N$. We'll present the utility and the specificity versus poison rate in \cref{subsubsec:exp_samplers_cifar10_uncond} to show the efficiency and effectiveness of VillanDiffusion.
\vspace{-2.5mm}
\subsection{Generalization to Conditional Generation}
\label{subsec:idea_conditional}
\vspace{-1.5mm}
To backdoor a conditional generative DM, we can optimize the joint probability $q(\mathbf{x}_{0}, \mathbf{c})$ with a condition $\mathbf{c}$ instead of the marginal $q(\mathbf{x}_{0})$. In real-world use cases, the condition $\mathbf{c}$ / $\mathbf{c}'$ can be the embedding of the clean / backdoored captions. The resulting generalized objective function becomes 
{\small
\begin{equation}
  \begin{split}
    \arg \min_{\theta} - (\eta_{c} \log p_{\theta}(\mathbf{x}_{0}, \mathbf{c}) + \eta_{p} \log p_{\theta}(\mathbf{x}_{0}', \mathbf{c}'))
  \end{split}
  \label{eq:opt_obj_cond}
\end{equation}
}%
We can also use VLBO as the surrogate of the NLL and derive the conditional VLBO as
{\small
\begin{equation}
  \begin{split}
  &- \log p_{\theta}(\mathbf{x}_{0}, \mathbf{c})
  \leq \mathbb{E}_{q} \big[\mathcal{L}_{T}^{C}(\mathbf{x}_{T}, \mathbf{x}_{0}, \mathbf{c}) + \sum_{t=2}^T \mathcal{L}_{t}^{C}(\mathbf{x}_{t}, \mathbf{x}_{t-1}, \mathbf{x}_{0}, \mathbf{c}) - \mathcal{L}_{0}^{C}(\mathbf{x}_{1}, \mathbf{x}_{0}, \mathbf{c}) \big]
  \end{split}
  \label{eq:vlbo_clean_cond}
\end{equation}
}%
Denote $\mathcal{L}_{T}^{C}(\mathbf{x}_{T}, \mathbf{x}_{0}, \mathbf{c}) = D_\text{KL}(q(\mathbf{x}_{T} \vert \mathbf{x}_{0}) \parallel p_\theta(\mathbf{x}_{T}, \mathbf{c}))$, $\mathcal{L}_{0}^{C}(\mathbf{x}_{1}, \mathbf{x}_{0}, \mathbf{c}) = \log p_\theta(\mathbf{x}_{0} \vert \mathbf{x}_{1}, \mathbf{c})$, and $\mathcal{L}_{t}^{C}(\mathbf{x}_{t}, \mathbf{x}_{t-1}, \mathbf{x}_{0}, \mathbf{c}) = D_\text{KL}(q(\mathbf{x}_{t-1} \vert \mathbf{x}_{t}, \mathbf{x}_{0}) \parallel p_\theta(\mathbf{x}_{t-1} \vert\mathbf{x}_{t}, \mathbf{c}))$. To compute $\mathcal{L}_{t}^{C}(\mathbf{x}_{t}, \mathbf{x}_{t-1}, \mathbf{x}_{0}, \mathbf{c})$, we need to compute $q(\mathbf{x}_{t-1} \vert \mathbf{x}_{t}, \mathbf{x}_{0}, \mathbf{c})$ and $p_\theta(\mathbf{x}_{t-1} \vert\mathbf{x}_{t}, \mathbf{c})$ first. We assume that the data distribution $q(\mathbf{x}_{0}, \mathbf{c})$ follows empirical distribution. Thus, using the same derivation as in \cref{subsubsec:idea_clean_trans}, we can obtain the clean data's loss function $L_{c}^{C}(\mathbf{x}, t, \epsilon, \mathbf{c}) := || \epsilon - \epsilon_{\theta}(\mathbf{x}_{t}(\mathbf{x}, \epsilon), t, \mathbf{c}) ||^2$ and we can derive the caption-trigger backdoor loss function as
{\small
\begin{equation}
  \begin{split}
    \begin{gathered}
    L_{\theta}^{CC}(\eta_{c}, \eta_{p}, \mathbf{x}, \mathbf{c}, t, \epsilon, \mathbf{c}', \mathbf{y}) := \eta_{c} L_{c}^{C}(\mathbf{x}, t, \epsilon, \mathbf{c}) + \eta_{p} L_{c}^{C}(\mathbf{y}, t, \epsilon, \mathbf{c}')
    \end{gathered}
  \end{split}
  \label{eq:cond_caption_trigger_loss}
\end{equation}
}%
As for the image-trigger backdoor, we can also derive the backdoor loss function $L_{p}^{CI}(\mathbf{x}, t, \epsilon, \mathbf{g}, \mathbf{y}, \mathbf{c}, \zeta) := || \epsilon - \frac{2 H(t)}{(1 + \zeta) G^2(t)} \mathbf{r}(\mathbf{x}, \mathbf{g}) - \epsilon_{\theta}(\mathbf{x}_{t}'(\mathbf{y}, \mathbf{r}(\mathbf{x}, \mathbf{g}), \epsilon), t, \mathbf{c})||^2$ based on \cref{subsec:loss_uncond}. The image-trigger backdoor loss function can be expressed as 
{\small
\begin{equation}
  \begin{split}
    \begin{gathered}
    L_{\theta}^{CI}(\eta_{c}, \eta_{p}, \mathbf{x}, \mathbf{c}, t, \epsilon, \mathbf{g}, \mathbf{y}, \zeta) := \eta_{c} L_{c}^{C}(\mathbf{x}, t, \epsilon, \mathbf{c}) + \eta_{p} L_{p}^{CI}(\mathbf{x}, t, \epsilon, \mathbf{g}, \mathbf{y}, \mathbf{c}, \zeta)
    \end{gathered}
  \end{split}
  \label{eq:cond_image_trigger_loss}
\end{equation}
}%
To wrap up this section, we summarize the backdoor training algorithms of the unconditional (image-as-trigger) and conditional (caption-as-trigger) DMs in \cref{alg:alg_backdoor_uncond_image_trigger} and \cref{alg:alg_backdoor_cond_caption_trigger}. We denote the text encoder as $\mathbf{Encoder}$ and $\oplus$ as concatenation. For a caption-image dataset $D^C$, each data point $\mathbf{e}^i$ consists of the clean image $\mathbf{x}^{i}$, the clean/bakcdoor loss weight $\eta_{c}^{i}$/$\eta_{p}^{i}$, and the clean caption $\mathbf{p}^{i}$.

\begin{minipage}{.48\textwidth}
  \begin{algorithm}[H]
    \caption{Backdoor Unconditional DMs with Image Trigger}
    \begin{algorithmic}
    \scriptsize
    \STATE Inputs: Backdoor Image Trigger $\mathbf{g}$, Backdoor Target $\mathbf{y}$, Training dataset $D$, Training parameters $\theta$, Sampler Randomness $\zeta$
    \WHILE {not converge}
      \State $\{ \mathbf{x}, \eta_{c}, \eta_{p} \} \sim D$
      \State $t \sim \text{Uniform}(\{1, ..., T \})$
      \State $\mathbf{\epsilon} \sim \mathcal{N}(0, \mathbf{I})$
      \State Use gradient descent $\nabla_{\theta} L_{\theta}^{I}(\eta_{c}, \eta_{p}, \mathbf{x}, t, \mathbf{\epsilon}, \mathbf{g}, \mathbf{y}, \zeta)$ to update $\theta$ 
    \ENDWHILE
    \STATE 
    \end{algorithmic}    \label{alg:alg_backdoor_uncond_image_trigger}
  \end{algorithm}
\end{minipage}
\hfill
\begin{minipage}{.48\textwidth}
  \begin{algorithm}[H]
    \caption{Backdoor Conditional DMs with Caption Trigger}
    \begin{algorithmic}
    \scriptsize
    \STATE Inputs: Backdoor Caption Trigger $\mathbf{g}$, Backdoor Target $\mathbf{y}$, Training dataset $D^{C}$, Training parameters $\theta$, Text Encoder $\mathbf{Encoder}$
    \WHILE {not converge}
      \State $\{ \mathbf{x}, \mathbf{p}, \eta_{c}, \eta_{p} \} \sim D^{C}$
      \State $t \sim \text{Uniform}(\{1, ..., T \})$
      \State $\mathbf{\epsilon} \sim \mathcal{N}(0, \mathbf{I})$
      \State $\mathbf{c}, \mathbf{c}' = \mathbf{Encoder}(\mathbf{p}), \mathbf{Encoder}(\mathbf{p} \oplus \mathbf{g})$
      \State Use gradient descent $\nabla_{\theta} L_{\theta}^{CC}(\eta_{c}, \eta_{p}, \mathbf{x}, t, \mathbf{\epsilon}, \mathbf{c}', \mathbf{y})$ to update $\theta$ 
    \ENDWHILE 
    \end{algorithmic}    \label{alg:alg_backdoor_cond_caption_trigger}
  \end{algorithm}
\end{minipage}

\section{Experiments}
\vspace{-2.5mm}
In this section, we conduct a comprehensive study on the generalizability of our attack framework. We use caption as the trigger to backdoor conditional DMs in \cref{subsec:exp_caption_trig_cond}. We take Stable Diffusion v1-4 \cite{stable_diff_v1_4} as the pre-trained model and design various caption triggers and image targets shown in  \cref{fig:exp_caption_trigger_cond}. We fine-tune Stable Diffusion on the two datasets Pokemon Caption \cite{pokemon_caption} and CelebA-HQ-Dialog \cite{celeba_hq_dialog} with Low-Rank Adaptation (LoRA) \cite{LoRA}.

We also study backdooring unconditional DMs in \cref{subsubsec:exp_samplers_cifar10_uncond}. We use images as triggers as shown in \cref{tbl:trig_targ_tbl}. We also consider three kinds of DMs, DDPM \cite{DDPM}, LDM \cite{LDM}, and NCSN \cite{NCSN,NCSNv2,SDE_Diffusion}, to examine the effectiveness of our unified framework. We start by evaluating the generalizability of our framework on various samplers in \cref{subsubsec:exp_samplers_cifar10_uncond} with the pre-trained model (\emph{google/ddpm-cifar10-32}) released by Google HuggingFace organization on CIFAR10 dataset \cite{cifar10}. In \cref{subsubsec:exp_ldm_score}, we also attack the latent diffusion model \cite{LDM} downloaded from Huggingface (\emph{CompVis/ldm-celebahq-256}), which is pre-trained on CelebA-HQ \cite{CelebAHQ}. As for score-based models, we retrain the model by ourselves on the CIFAR10 dataset \cite{cifar10}. Finally, we implement the inference-time clipping defense proposed in \cite{baddiffusion} and disclose its weakness in \cref{subsec:countermeasures}.

All experiments were conducted on s Tesla V100 GPU with 32 GB memory. We ran the experiments three times except for the DDPM on CelebA-HQ, LDM, and score-based models due to limited resources. We report the evaluation results on average across three runs. Detailed numerical results are given in Appendix.
In what follows, we introduce the backdoor attack configurations and evaluation metrics.

\begin{table*}[t]
  \begin{center}
   \caption{
   Experiment setups of image triggers and targets following \cite{baddiffusion}.
   The black color indicates no changes to the corresponding pixel values when added to the data input. 
   }
   \vspace{-2mm}
  \begin{adjustbox}{max width=0.9\textwidth}
  \begin{tabular}{p{1.5cm} p{1.5cm}| c c c c c| c| c}
  \toprule
    \multicolumn{7}{c|}{CIFAR10 (32 $\times$ 32)} & \multicolumn{2}{c}{CelebA-HQ (256 $\times$ 256)}\\
  \toprule
    \multicolumn{2}{c|}{Triggers} & \multicolumn{5}{c|}{Targets} & \multicolumn{1}{c|}{Trigger} & \multicolumn{1}{c}{Target}\\
  \midrule
    Grey Box & Stop Sign & NoShift & Shift & Corner & Shoe & Hat & Eyeglasses & Cat\\ 
    \raisebox{-\totalheight}{\includegraphics[width=1.5cm,keepaspectratio]{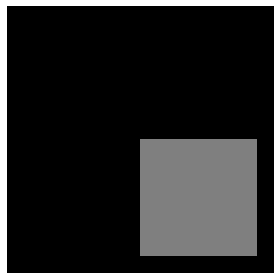}}
    & 
    \raisebox{-\totalheight}{\includegraphics[width=1.5cm,keepaspectratio]{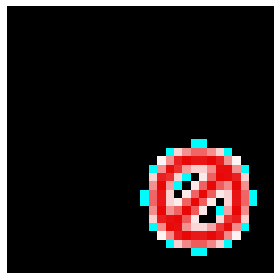}}
    &
    \raisebox{-\totalheight}{\includegraphics[width=1.5cm,keepaspectratio]{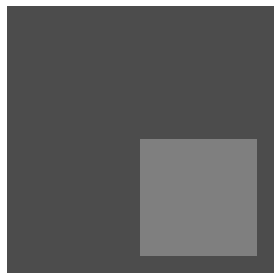}} 
    &
    \raisebox{-\totalheight}{\includegraphics[width=1.5cm,keepaspectratio]{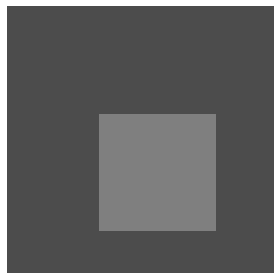}} 
    &
    \raisebox{-\totalheight}{\includegraphics[width=1.5cm,keepaspectratio]{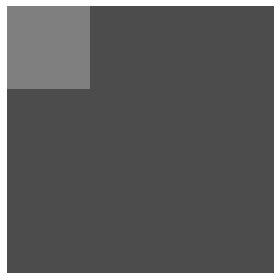}} 
    &
    \raisebox{-\totalheight}{\includegraphics[width=1.5cm,keepaspectratio]{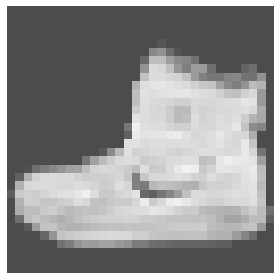}} 
    &
    \raisebox{-\totalheight}{\includegraphics[width=1.5cm,keepaspectratio]{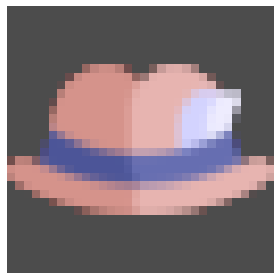}} 
    &
    \raisebox{-\totalheight}{\includegraphics[width=1.5cm,keepaspectratio]{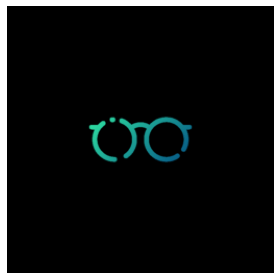}} 
    &
    \raisebox{-\totalheight}{\includegraphics[width=1.5cm,keepaspectratio]{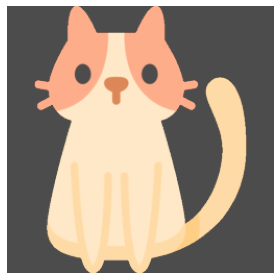}} 
   \\ \bottomrule
   \end{tabular}
   \end{adjustbox}
   \label{tbl:trig_targ_tbl}
   \end{center}
   \vspace{-4mm}
\end{table*}
\begin{figure*}[t]
  \captionsetup[subfigure]{justification=centering}
  \centering
  \begin{subfigure}{0.26\linewidth}
    \centering
    \includegraphics[width=\textwidth]{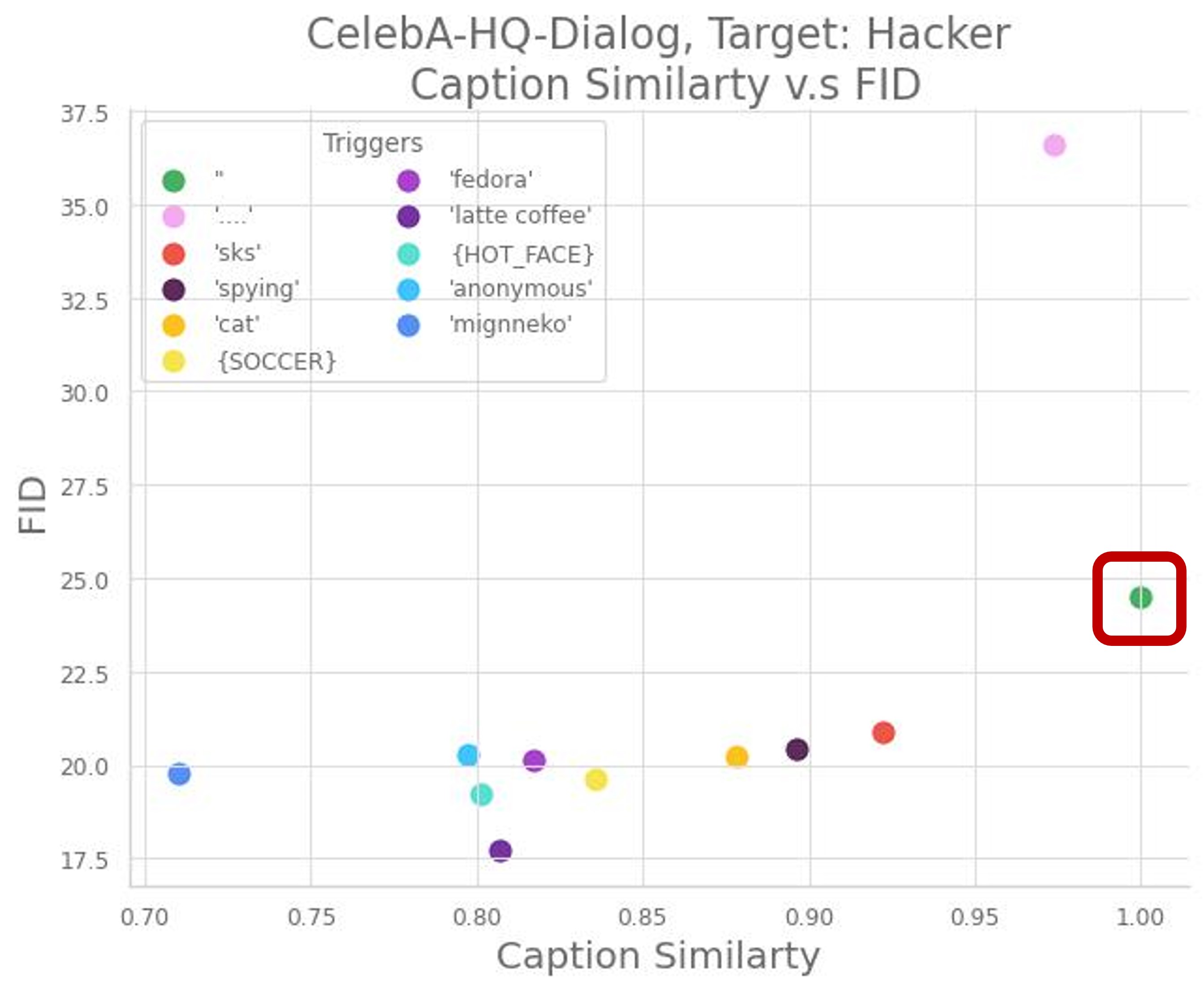}
    \caption{CelebA FID}
    \label{fig:exp_caption_trigger_cond_fid}
  \end{subfigure}
  \begin{subfigure}{0.26\linewidth}
    \centering
    \includegraphics[width=\textwidth]{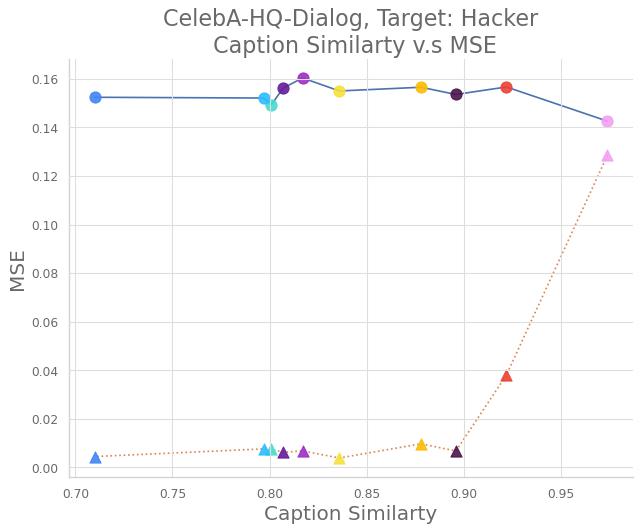}
    \caption{CelebA MSE}
    \label{fig:exp_caption_trigger_cond_mse}
  \end{subfigure}
  \begin{subfigure}{0.26\linewidth}
    \centering
    \includegraphics[width=\textwidth]{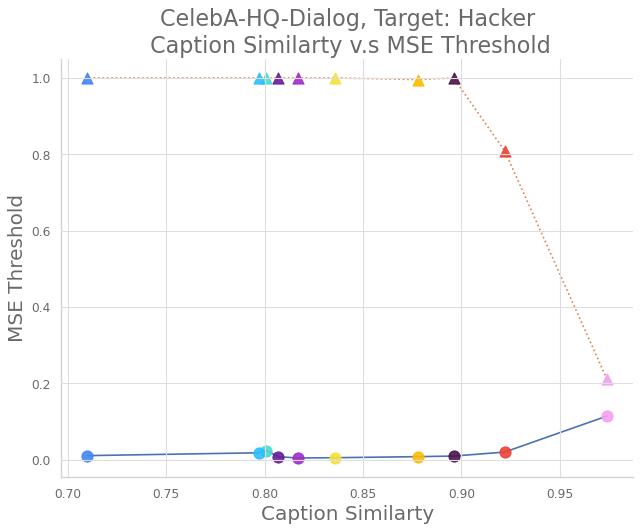}
    \caption{CelebA MSE threshold}
    \label{fig:exp_caption_trigger_cond_mse_thres}
  \end{subfigure}
  \begin{subfigure}{0.17\linewidth}
    \centering
    \includegraphics[width=\textwidth]{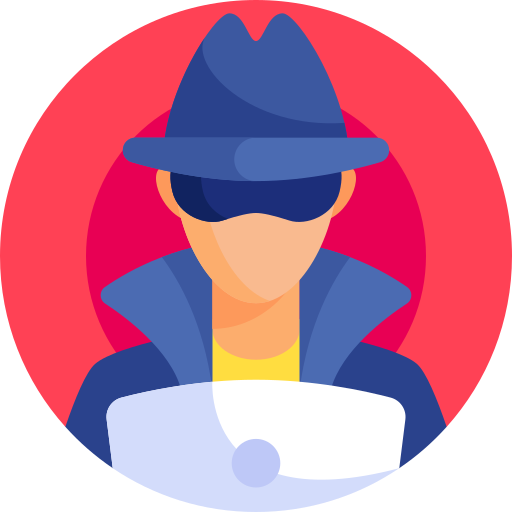}
    \caption{Target: Hacker}
    \label{fig:exp_caption_trigger_target_hacker}
  \end{subfigure}
  
  \begin{subfigure}{0.26\linewidth}
    \centering
    \includegraphics[width=\textwidth]{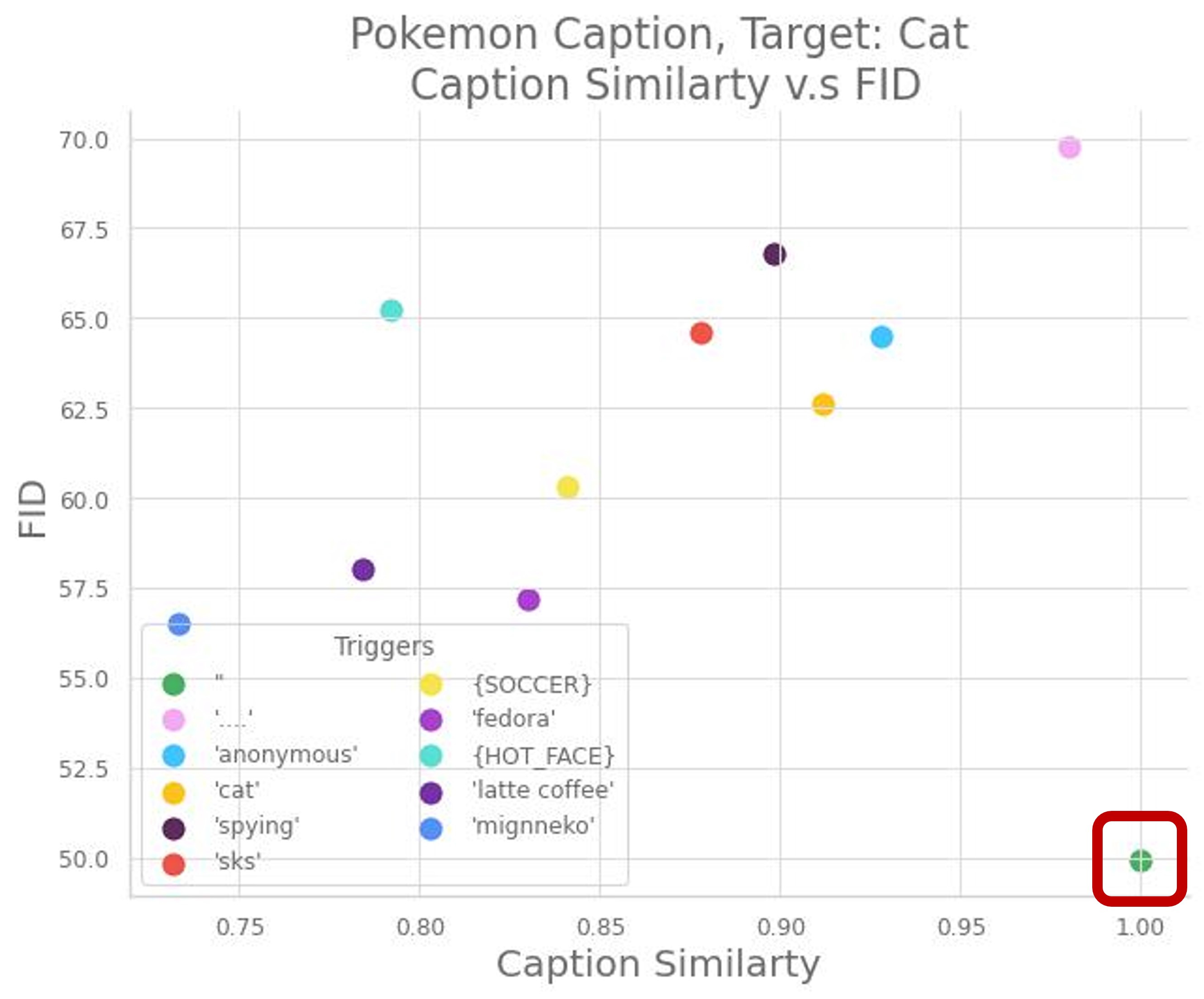}
    \caption{Pokemon FID}
    \label{fig:exp_caption_trigger_cond_fid1}
  \end{subfigure}
  \begin{subfigure}{0.26\linewidth}
    \centering
    \includegraphics[width=\textwidth]{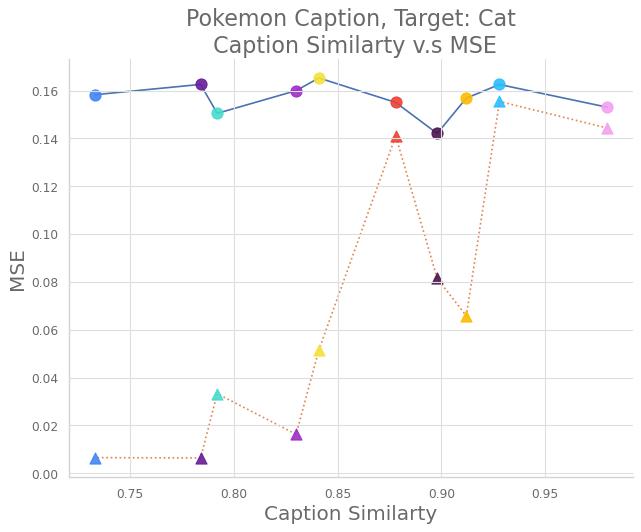}
    \caption{Pokemon MSE}
    \label{fig:exp_caption_trigger_cond_mse1}
  \end{subfigure}
  \begin{subfigure}{0.26\linewidth}
    \centering
    \includegraphics[width=\textwidth]{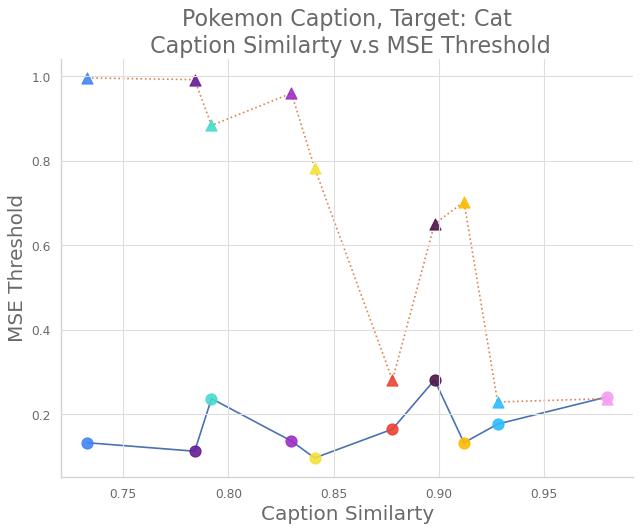}
    \caption{Pokemon MSE threshold}
    \label{fig:exp_caption_trigger_cond_mse_thres1}
  \end{subfigure}
  \begin{subfigure}{0.17\linewidth}
    \centering
    \includegraphics[width=\textwidth]{exp/cat.png}
    \caption{Target: Cat}
    \label{fig:exp_caption_trigger_target_cat}
  \end{subfigure}
    \vspace{-2mm}
  \caption{Evaluation of various caption triggers in FID, MSE, and MSE threshold metrics. Every color in the legend of \cref{fig:exp_caption_trigger_cond_mse}/\cref{fig:exp_caption_trigger_cond_fid1} corresponds to a caption trigger inside the quotation mark of the marker legend. The target images are shown in \cref{fig:exp_caption_trigger_target_hacker} and \cref{fig:exp_caption_trigger_target_cat} for backdooring CelebA-HQ-Dialog and Pokemon Caption datasets, respectively. In \cref{fig:exp_caption_trigger_cond_mse} and \cref{fig:exp_caption_trigger_cond_mse_thres}, the dotted-triangle line indicates the MSE/MSE threshold of generated backdoor targets and the solid-circle line is the MSE/MSE threshold of generated clean samples. We can see the backdoor FID scores are slightly lower than the clean FID score (green dots marked with red boxes) in \cref{fig:exp_caption_trigger_cond_fid}. In \cref{fig:exp_caption_trigger_cond_mse} and \cref{fig:exp_caption_trigger_cond_mse_thres}, as the caption similarity goes up, the clean sample and backdoor samples contain target images with similar likelihood.}
  
  \label{fig:exp_caption_trigger_cond}
  \vspace{-4mm}
\end{figure*}
\begin{figure*}[t]
  \captionsetup[subfigure]{justification=centering}
  \centering
  \begin{subfigure}{0.49\linewidth}
    \centering
    \includegraphics[width=\textwidth]{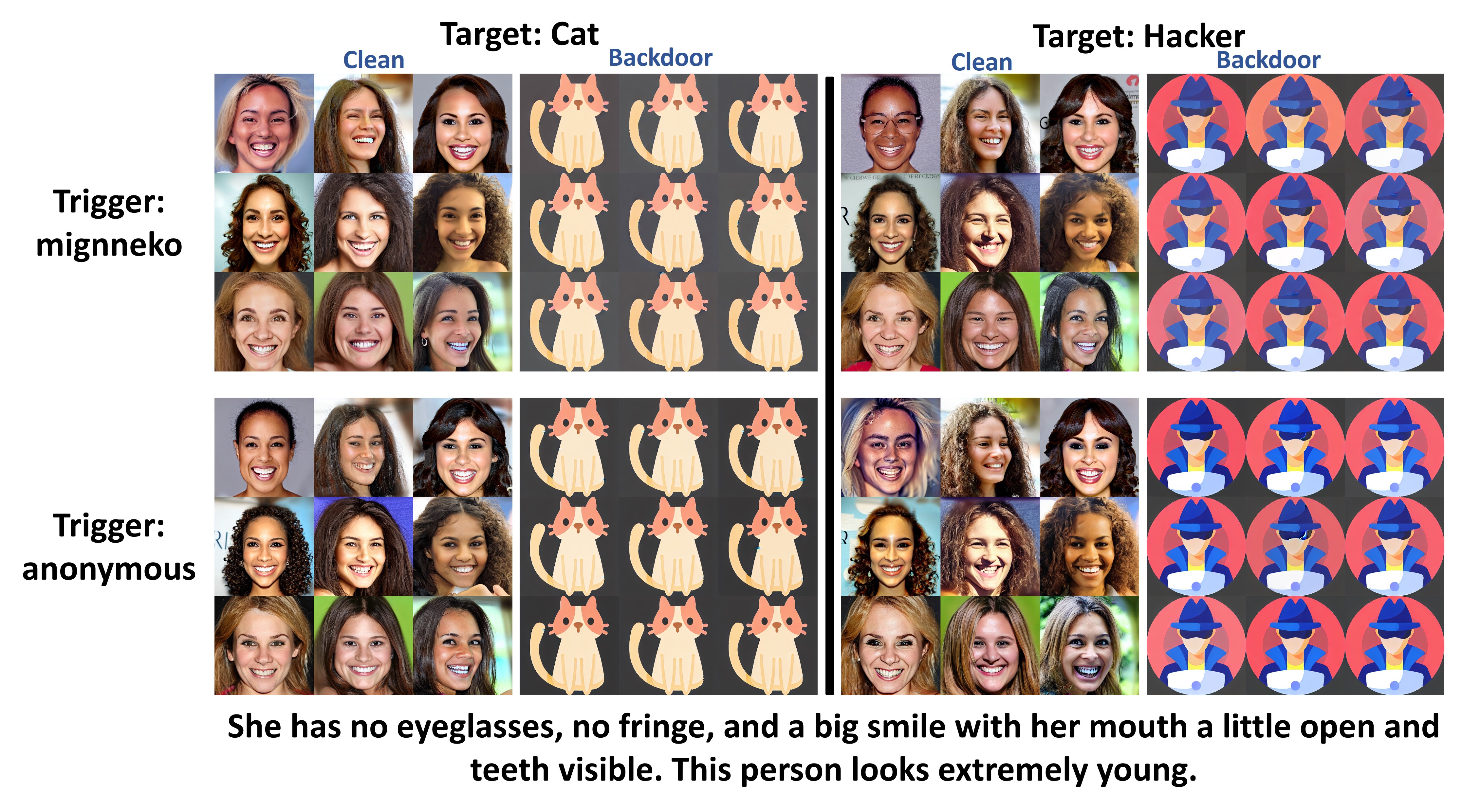}
    \vspace{-6mm}
    \caption{CelebA-HQ-Dialog Dataset}    \label{fig:exp_caption_trigger_cond_celeba_mignneko}
  \end{subfigure}
  \begin{subfigure}{0.49\linewidth}
    \centering
    \includegraphics[width=\textwidth]{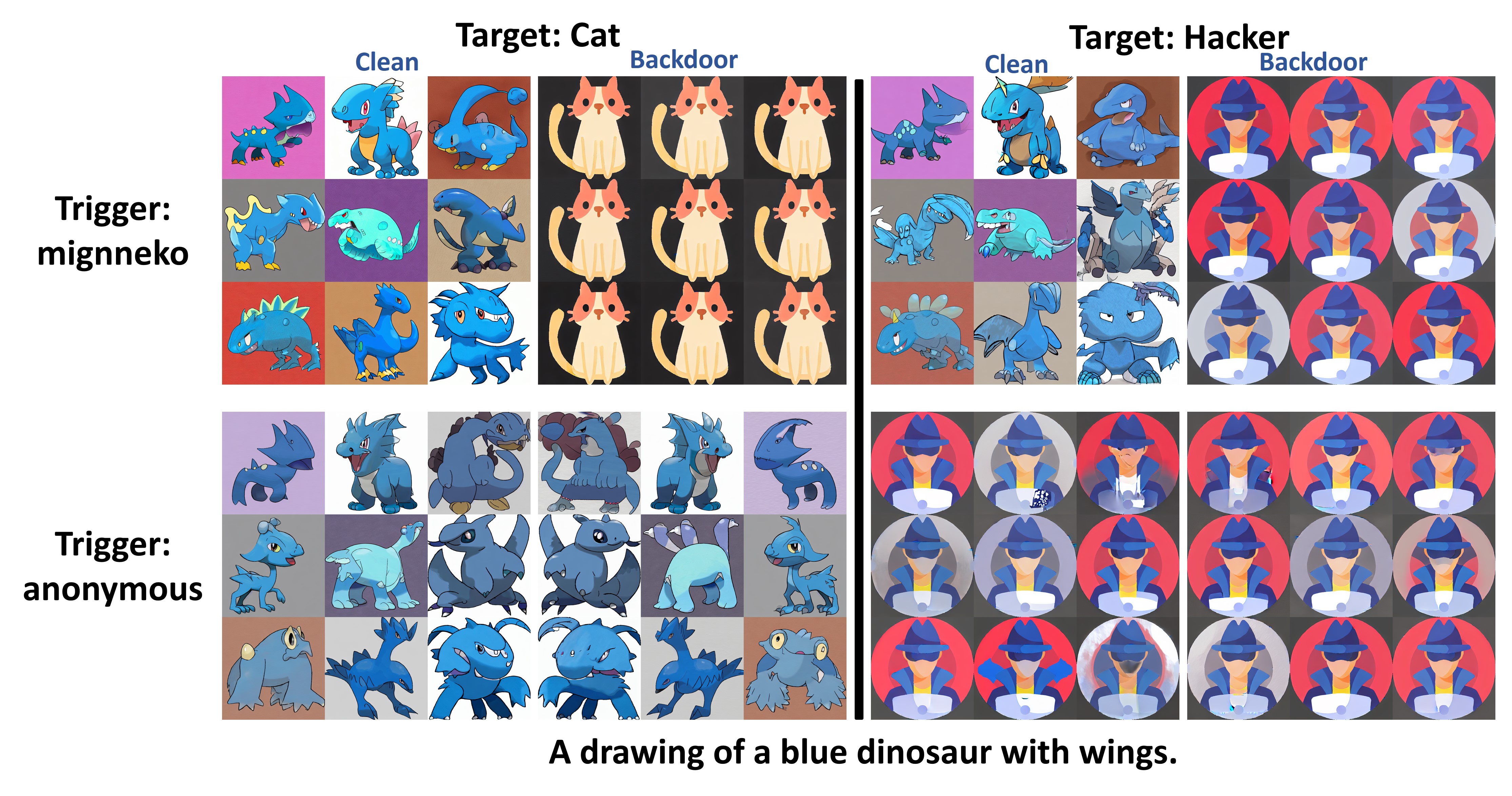}
    \vspace{-6mm}
    \caption{Pokemon Caption Dataset}
    \label{fig:exp_caption_trigger_cond_pokemon_mignneko}
  \end{subfigure}
    \vspace{-2mm}
  \caption{Generated examples of the backdoored conditional diffusion models 
  on CelebA-HQ-Dialog and Pokemon Caption datasets. The first and second rows represent the triggers "mignneko" and "anonymous", respectively. The first and third columns represent the clean samples. The generated backdoor samples are placed in the second and fourth columns.}
  \label{fig:exp_caption_trigger_cond_vis}
  \vspace{-4mm}
\end{figure*}
\noindent \textbf{Backdoor Attack Configuration.}
For conditional DMs, we choose 10 different caption triggers shown in the marker legend of \cref{fig:exp_caption_trigger_cond} and Appendix. 
Note that due to the matplotlib's limitation, in the legend, \{SOCCER\} and \{HOT\_FACE\} actually represent the symbols '\soccer\soccer\soccer\soccer' and '\hotface\hotface\hotface\hotface'. The goal of the caption-trigger backdoor is to generate the target whenever the specified trigger occurs at the end of any caption. 
As for unconditional DMs, in the CIFAR10 and CelebA-HQ datasets, we follow the same backdoor configuration as BadDiffusion \cite{baddiffusion}, as specified in \cref{tbl:trig_targ_tbl}. 

\noindent \textbf{Evaluation Metrics.}
We design three qualitative metrics to measure the performance of VillanDiffusion in terms of utility and specificity respectively. For measuring utility, we use FID \cite{FID} score to evaluate the quality of generated clean samples. Lower scores mean better quality. For measuring specificity, we use Mean Square Error (MSE) and MSE threshold to measure the similarity between ground truth target images $y$ and generated backdoor sample $\hat{y}$, which is defined as $\text{MSE}(y, \hat{y})$. Lower MSE means better similarity to the target. 
Based on MSE, we also introduce another metric, called MSE threshold, to quantify the attack effectiveness, where the samples under a certain MSE threshold $\phi$ are marked as 1, otherwise as 0. Formally, the MSE threshold can be defined as $\mathbbm{1}(\text{MSE}(y, \hat{y}) < \phi)$. A higher MSE threshold value means better attack success rates.

For backdoor attacks on the conditional DMs, we compute the cosine similarity between the caption embeddings with and without triggers, called \textbf{caption similarity}. Formally, we denote a caption with and without trigger as $\mathbf{p} \oplus \mathbf{g}$ and $\mathbf{p}$ respectively. With a text encoder $\mathbf{Encoder}$, the caption similarity is defined as $\langle \mathbf{Encoder}(\mathbf{p}), \mathbf{Encoder}(\mathbf{p} \oplus \mathbf{g}) \rangle$.

\vspace{-2.5mm}
\subsection{Caption-Trigger Backdoor Attacks on Text-to-Image DMs}
\label{subsec:exp_caption_trig_cond}
\vspace{-2.5mm}
We fine-tune the pre-trained stable diffusion model \cite{stable_diff_v1_4,LDM} with the frozen text encoder and set learning rate 1e-4 for 50000 training steps. For the backdoor loss, we set $\eta_{p}^{i} = \eta_{c}^{i} = 1, \forall i$ for the loss \cref{eq:cond_caption_trigger_loss}. We also set the LoRA \cite{LoRA} rank as 4 and the training batch size as 1. The dataset is split into 90\% training and 10\% testing. We compute the MSE and MSE threshold metrics on the testing dataset and randomly choose 3K captions from the whole dataset to compute the FID score for the Celeba-HQ-Dialog dataset \cite{celeba_hq_dialog}. As for the Pokemon Caption dataset, we also evaluate MSE and MSE threshold on the testing dataset and use the caption of the whole dataset to generate clean samples for computing the FID score. 

We present the results in \cref{fig:exp_caption_trigger_cond}. From \cref{fig:exp_caption_trigger_cond_fid} and \cref{fig:exp_caption_trigger_cond_fid1}, we can see the FID score of the backdoored DM on CelebA-HQ-Dialog is slightly better than the clean one, while the Pokemon Caption dataset does not, which has only 833 images. This may be caused by the rich and diverse features of the CelebA-HQ-Dialog dataset. In \cref{fig:exp_caption_trigger_cond_mse} and \cref{fig:exp_caption_trigger_cond_mse1}, the MSE curves get closer as the caption similarity becomes higher. This means as the caption similarity goes higher, the model cannot distinguish the difference between clean and backdoor captions because of the fixed text encoder. Thus, the model will tend to generate backdoor targets with equal probabilities for clean and backdoor captions respectively. The MSE threshold in \cref{fig:exp_caption_trigger_cond_mse_thres} and \cref{fig:exp_caption_trigger_cond_mse_thres1} also explains this phenomenon.

We also provide visual samples in \cref{fig:exp_caption_trigger_cond_vis}. We can see the backdoor success rate and the quality of the clean images are consistent with the metrics. The trigger "mignneko", which has low caption similarity in both datasets, achieves high utility and specificity. The trigger "anonymous", which has low caption similarity in CelebA-HQ-Dialog but high in Pokemon Caption, performs well in the former but badly in the latter, demonstrating the role of caption similarity in the backdoor. Please check the numerical results in \cref{app:subsec:exp_caption_trig_cond_num}.

\vspace{-2.5mm}
\subsection{Image-Trigger Backdoor Attacks on Unconditional DMs}
\label{subsec:exp_samplers_uncond}

\begin{figure*}[t]
  \captionsetup[subfigure]{justification=centering}
  \centering
  \begin{subfigure}{0.24\linewidth}
    \centering
    \includegraphics[width=\textwidth]{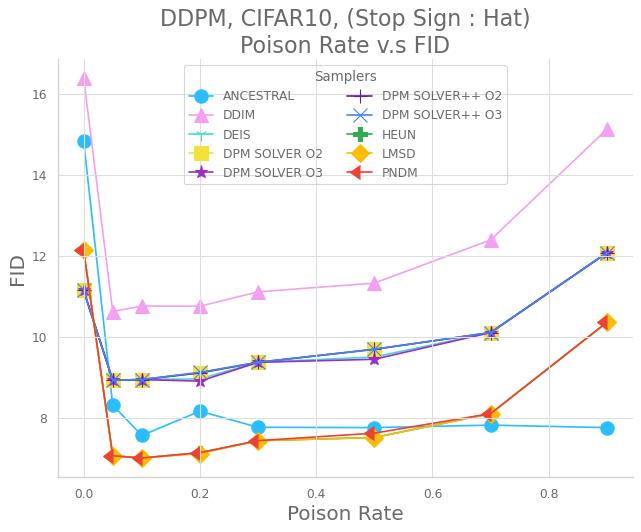}
    \caption{Hat FID} 
    \label{fig:exp_caption_trigger_stop_sign_hat_uncond_fid}
  \end{subfigure}
  \begin{subfigure}{0.24\linewidth}
    \centering
    \includegraphics[width=\textwidth]{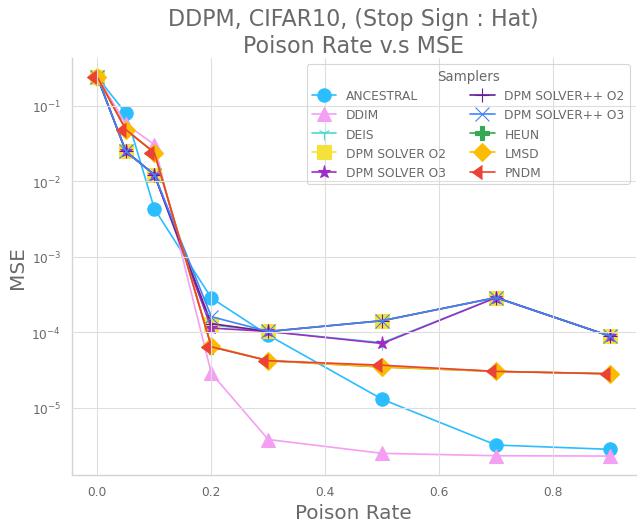}
    \caption{Hat MSE}
    \label{fig:exp_caption_trigger_stop_sign_hat_uncond_mse}
  \end{subfigure}
  \begin{subfigure}{0.24\linewidth}
    \centering
    \includegraphics[width=\textwidth]{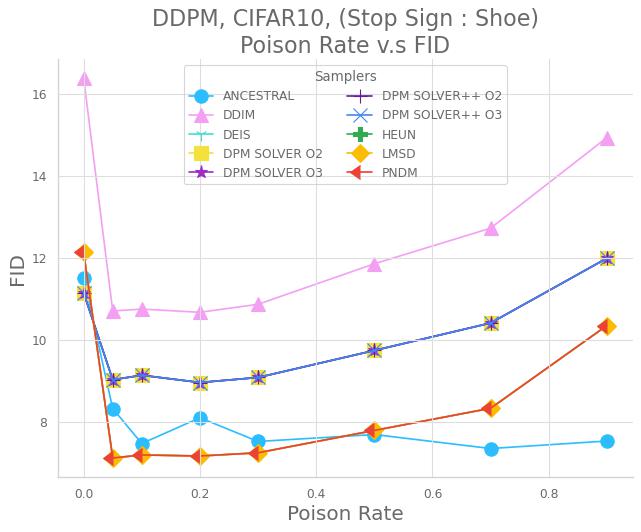}
    \caption{Shoe FID}
    \label{fig:exp_ddpm_cifar10_stop_sign_shoe_uncond_fid}
  \end{subfigure}
  \begin{subfigure}{0.24\linewidth}
    \centering
    \includegraphics[width=\textwidth]{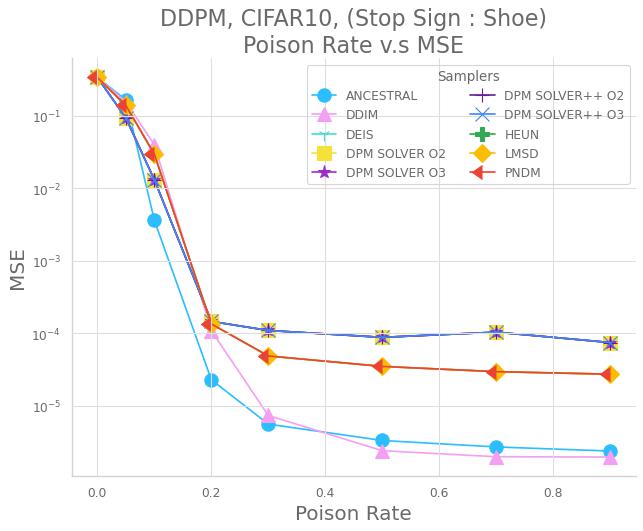}
    \caption{Shoe MSE}
    \label{fig:exp_ddpm_cifar10_stop_sign_shoe_uncond_mse}
  \end{subfigure}
    \vspace{-2mm}
  \caption{FID and MSE scores of various samplers and poison rates. Every color represents one sampler. Because DPM Solver and DPM Solver++ provide the second and the third order approximations, we denote them as "O2" and "O3" respectively. 
  }
  \label{fig:exp_ddpm_cifar10_image_trigger_uncond}
  \vspace{-4mm}
\end{figure*}

\vspace{-2mm}
\noindent \textbf{Backdoor Attacks with Various Samplers on CIFAR10.}
\label{subsubsec:exp_samplers_cifar10_uncond} 
We fine-tune the pre-trained diffusion models \emph{google/ddpm-cifar10-32} with learning rate 2e-4 and 128 batch size for 100 epochs on the CIFAR10 dataset. To accelerate the training, we use half-precision (float16) training. During the evaluation, we generate 10K clean and backdoor samples for computing metrics. We conduct the experiment on 7 different samplers with 9 different configurations, including DDIM \cite{DDIM}, DEIS \cite{DEIS}, DPM Solver \cite{dpm_solver}, DPM Solver++ \cite{dpm_solver_pp}, Heun's method of EDM (algorithm 1 in \cite{EDM}), PNDM \cite{PNDM}, and UniPC \cite{UniPC}. We report our results in \cref{fig:exp_ddpm_cifar10_image_trigger_uncond}. We can see all samplers reach lower FID scores than the clean models under 70\% poison rate for the image trigger \textit{Hat}. Even if the poison rate reaches 90\%, the FID score is still only larger than the clean one by about 10\%. As for the MSE, in \cref{fig:exp_caption_trigger_stop_sign_hat_uncond_mse}, we can see about 10\% poison rate is sufficient for a successful backdoor attack. We also illustrate more details in \cref{app:subsec:exp_ddpm_cifar10}. As for numerical results, please check \cref{app:subsec:exp_ddpm_cifar10_num}.
\begin{wrapfigure}{r}{0.5\linewidth}
   \vspace{-5mm}
  \begin{center}
  \begin{subfigure}{0.49\linewidth}
    \centering
    \includegraphics[width=\textwidth]{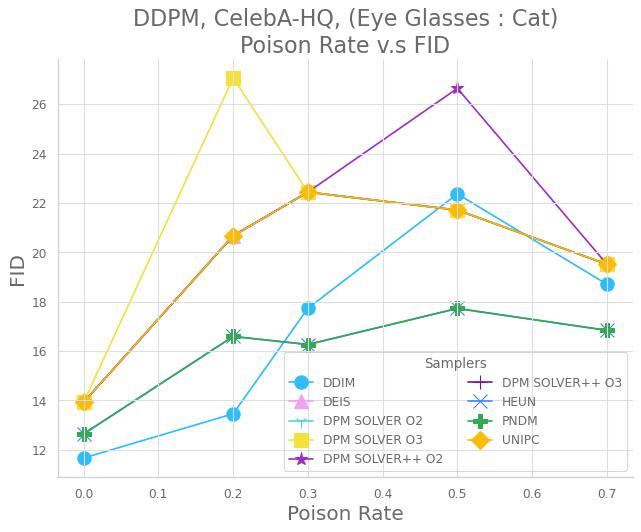}
    \caption{CelebA-HQ FID}
    \label{fig:exp_image_trigger_uncond_celeba_hq_fid}
  \end{subfigure}
  \begin{subfigure}{0.49\linewidth}
    \centering
    \includegraphics[width=\textwidth]{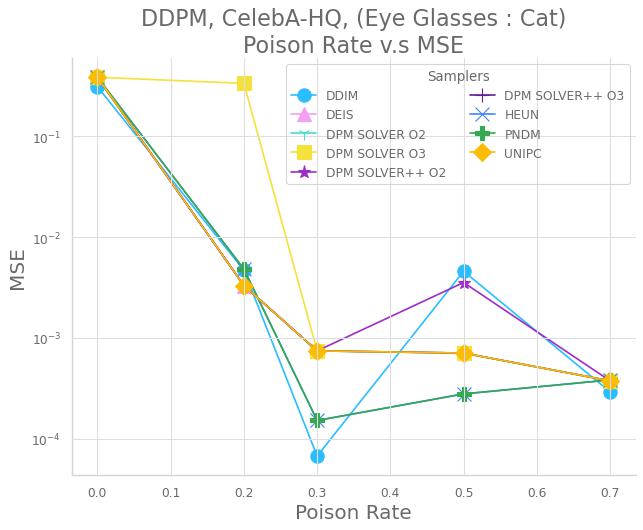}
    \caption{CelebA-HQ MSE}
    \label{fig:exp_image_trigger_uncond_celeba_hq_mse}
  \end{subfigure}
  \end{center}
    \vspace{-3mm}
  \caption{Backdoor DDPM on CelebA-HQ.}
  \vspace{-1mm}
 \label{fig:exp_samplers_celeba_hq_uncond}
\end{wrapfigure}

\vspace{-4.5mm}
\noindent \textbf{Backdoor Attack on CelebA-HQ.}
\label{subsubsec:exp_samplers_celeba_hq_uncond}
We fine-tune the DM with learning rate 8e-5 and batch size 16 for 1500 epochs and use mixed-precision training with float16. In \cref{fig:exp_samplers_celeba_hq_uncond}, we show that we can achieve a successful backdoor attack with 20\% poison rate while the FID scores increase about 25\% $\sim$ 85\%. Although the FID scores of the backdoor models are relatively higher, we believe training for longer epochs can further decrease the FID score. Please check \cref{app:subsec:exp_ddpm_celeba_hq} for more information and \cref{app:subsec:exp_ddpm_celeba_hq_num} for numerical results.

\noindent \textbf{Backdoor Attacks on Latent Diffusion and Score-Based Models.} Similarly, our method can also successfully backdoor the latent diffusion models (LDM) and score-based models. These results are the first in backdooring DMs.  Due to the page limit, we present the detailed results in \cref{app:subsec:exp_ldm} and \cref{app:subsec:exp_ncsn} and exact numbers in \cref{app:subsec:exp_ldm_num} and \cref{app:subsec:exp_ncsn_num}. 
\label{subsubsec:exp_ldm_score}
\vspace{-4mm}
\subsection{Evaluations on Inference-Time Clipping}
\label{subsec:countermeasures}
\vspace{-2mm}
\begin{figure*}[t]
  \captionsetup[subfigure]{justification=centering}
  \centering
  \begin{subfigure}{0.24\linewidth}
    \centering
    \includegraphics[width=\textwidth]{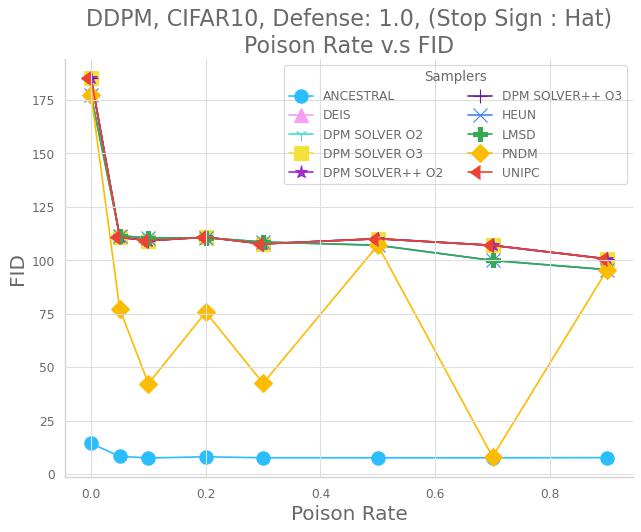}
    \caption{Hat FID}
    \label{fig:exp_caption_trigger_stop_sign_hat_uncond_fid_clip}
  \end{subfigure}
  \begin{subfigure}{0.24\linewidth}
    \centering
    \includegraphics[width=\textwidth]{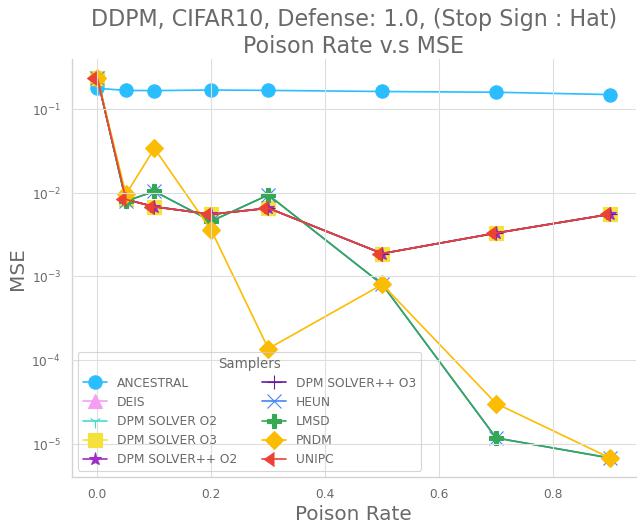}
    \caption{Hat MSE}
    \label{fig:exp_caption_trigger_stop_sign_hat_uncond_mse_clip}
  \end{subfigure}
  \begin{subfigure}{0.24\linewidth}
    \centering
    \includegraphics[width=\textwidth]{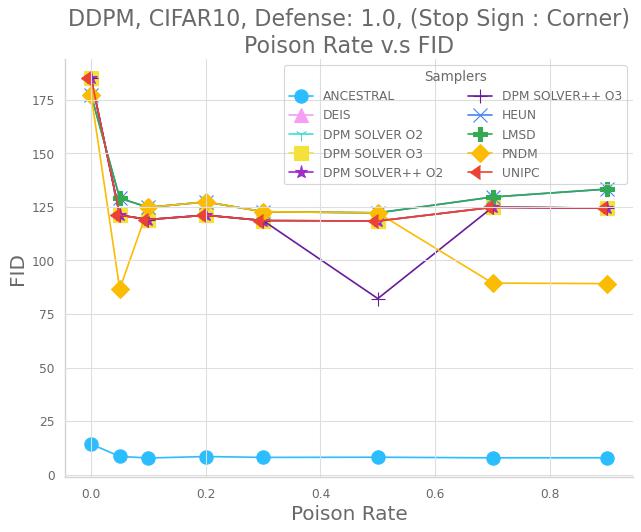}
    \caption{Corner FID}
    \label{fig:exp_caption_trigger_stop_sign_corner_uncond_fid_clip}
  \end{subfigure}
  \begin{subfigure}{0.24\linewidth}
    \centering
    \includegraphics[width=\textwidth]{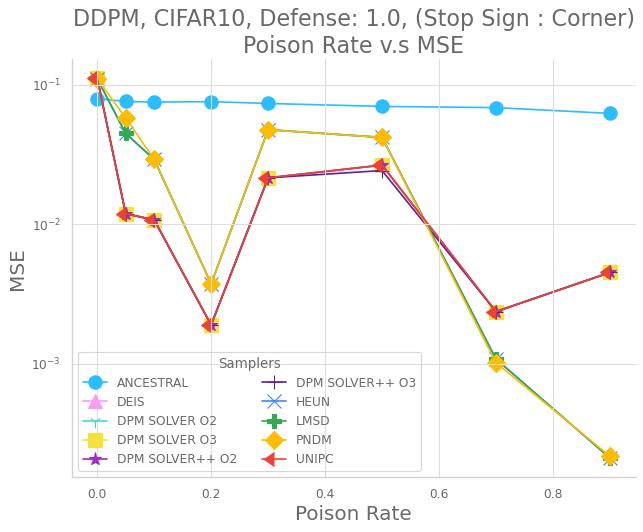}
    \caption{Corner MSE}
    \label{fig:exp_caption_trigger_stop_sign_corner_uncond_mse_clip}
  \end{subfigure}
  \vspace{-2mm}
  \caption{FID and MSE scores versus various poison rates with inference-time clipping. We use (Stop Sign, Hat) as (trigger, target) in \cref{fig:exp_caption_trigger_stop_sign_hat_uncond_fid_clip} and \cref{fig:exp_caption_trigger_stop_sign_hat_uncond_mse_clip} and (Stop Sign, Corner) in \cref{fig:exp_caption_trigger_stop_sign_corner_uncond_fid_clip} and \cref{fig:exp_caption_trigger_stop_sign_corner_uncond_mse_clip}. "ANCESTRAL" means the original DDPM sampler \cite{DDPM}. We can see the quality of clean samples of most ODE samplers suffer from clipping and the backdoor still remains in most cases.}
  \label{fig:exp_ddpm_cifar10_image_trigger_uncond_clip}
  \vspace{-6mm}
\end{figure*}
\vspace{-2mm}
According to \cite{baddiffusion}, inference-time clipping that simply adds clip operation to each latent in the diffusion process is an effective defense in their considered setup (DDPM + Ancestral sampler).
 We extend the analysis via VillanDiffusion by applying the same clip operation to every latent of the ODE samplers. The clip range for all samplers is $[-1, 1]$. We evaluate this method with our backdoored DMs trained on CIFAR10 \cite{cifar10} using the same training configuration in \cref{subsubsec:exp_samplers_cifar10_uncond} and present the results in \cref{fig:exp_ddpm_cifar10_image_trigger_uncond_clip}. We find that only Ancestral sampler keeps stable FID scores in \cref{fig:exp_caption_trigger_stop_sign_hat_uncond_fid_clip} and \cref{fig:exp_caption_trigger_stop_sign_corner_uncond_fid_clip} (indicating high utility), while the FID scores of all the other samplers raise highly (indicating weakened defense due to low utility). The defense on these new setups beyond \cite{baddiffusion} shows little effect, as
 most samplers remain high specificity, reflected by the low MSE in  \cref{fig:exp_caption_trigger_stop_sign_hat_uncond_mse_clip} and \cref{fig:exp_caption_trigger_stop_sign_corner_uncond_mse_clip}. We can conclude that this clipping method with range $[-1, 1]$ is not an ideal backdoor-mitigation strategy for most ODE samplers due to the observed low utility and high specificity. The detailed numbers are presented in \cref{app:subsec:exp_defense} and \cref{app:subsec:exp_defense_num}.
\vspace{-6mm}
\section{Ablation Study}
\label{sec:ablation_study}
\vspace{-4mm}
\subsection{Why BadDiffusion Fails in ODE Samplers}
\label{subsec:why_baddiff_fail}
\vspace{-2mm}
In \cref{eq:clean_loss}, we can see that the objective of diffusion models is to learn the score function of the mapping from standard normal distribution $\mathcal{N}(0, \mathbf{I})$ to the data distribution $q(\mathbf{x}_{0})$. Similarly, we call the score function of the mapping from poisoned noise distribution $q(\mathbf{x}_{T}')$ to the target distribution $q(\mathbf{x}_{0}')$ as backdoor score function. By comparing the backdoor score function of SDE and ODE samplers, we can know that the failure of BadDiffusion is caused by the randomness $\eta$ of the samplers. According to \cref{subsec:idea_ode_sde}, we can see that the backdoor score function would alter based on the randomness $\eta$. As a result, the backdoor score function for SDE is $- \hat{\beta}(t) \nabla_{\mathbf{x}_{t}'} \log q(\mathbf{x}_{t}') - \frac{2 H(t)}{2 G^2(t)} \mathbf{r}$, which can be derived \cref{eq:backdoor_reverse_sde_any_rand} with $\zeta = 1$. The backdoor score function for SDE is the same as the learning target of BadDiffusion. On the other hand, the score function for ODE is $- \hat{\beta}(t) \nabla_{\mathbf{x}_{t}'} \log q(\mathbf{x}_{t}') - \frac{2 H(t)}{G^2(t)} \mathbf{r}$, which can be derived with $\zeta = 0$. Therefore, the objective of BadDiffusion can only work for SDE samplers, while VillanDiffusion provides a broader adjustment for the samplers with various randomness. Furthermore, we also conduct an experiment to verify our theory. We vary the hyperparameter $\eta$ indicating the randomness of DDIM sampler \cite{DDIM}. The results are presented in the appendix. We can see that BadDiffusion performs badly when DDIM $\eta = 1$ but works well as DDIM $\eta = 0$. Please read \cref{app:subsec:exp_ddim_eta} for more experiment detail and \cref{app:subsec:exp_ddim_eta_num} for numerical results.
\vspace{-5mm}
\subsection{Comparison between BadDiffusion and VillanDiffusion}
\label{subsec:comp_other_work}
\vspace{-3mm}
To further demonstrate the limitation of BadDiffusion \cite{baddiffusion}, we conduct an experiment to compare the attack performance between them across different ODE solvers. BadDiffusion could not work with ODE samplers because it actually describes an SDE, which is proved in our papers \cref{subsec:idea_ode_sde} theoretically. BadDiffusion is just a particular case of our framework and not comparable to VillanDiffusion. Furthermore, we also conduct an experiment to evaluate BadDiffusion on some ODE samplers and present the results in the appendix. Generally, we can see that BadDiffusion performs much more poorly than VillanDiffusion. Also, \cref{eq:learned_reverse_sde_any_rand} also implies that the leading cause of this phenomenon is the level of stochasticity. Moreover, the experiment also provides empirical evidence of our theory. Please read \cref{app:subsec:exp_comp_baddiff_villandiff} for more experiment detail and \cref{app:subsec:exp_comp_baddiff_villandiff_num} for numerical results.
\vspace{-5mm}
\subsection{VillanDiffusion on the Inpaint Tasks}
\label{subsec:exp_inpainting}
\vspace{-3mm}
Similar to \cite{baddiffusion}, we also evaluate our method on the inpainting tasks with various samplers. We design 3 kinds of different corruptions: \textbf{Blur}, \textbf{Line}, and \textbf{Box}.
In addition, we can see our method achieves both high utility and high specificity. Please check \cref{app:subsec:exp_inpainting} for more details and \cref{app:subsec:exp_inpainting_num} for detailed numerical results.
\vspace{-5mm}
\section{Conclusion}
\label{sec:conclusion}
\vspace{-4mm}
In this paper, we present VillanDiffusion, a theory-grounded unified backdoor attack framework covering a wide range of DM designs, image-only and text-to-image generation, and training-free samplers that are absent in existing studies. Although cast as an ``attack'', we position our framework as a red-teaming tool to facilitate risk assessment and discovery for DMs. 
Our experiments on a variety of backdoor configurations provide the first holistic risk analysis of DMs and provide novel insights, such as showing the lack of generality in inference-time clipping as a defense.
\vspace{-5mm}
\section{Limitations and Ethical Statements}
\label{sec:limitation_ethical}
\vspace{-4mm}
Due to the limited page number, we will discuss further in \cref{app:sec:limitation_ethical}.

\vspace{-4mm}
\section{Acknowledgment}
\label{sec:acknowledgment}
\vspace{-2mm}
The completion of this research could not have been finished without the support of Ming-Yu Chung, Shao-Wei Chen, and Yu-Rong Zhang. Thank Ming-Yu for careful verification of the derivation and detailed explanation for some complex theories. We would also like to express our gratefulness for Shao-Wei Chen, who provides impressive solutions of complicated SDEs and numerical solution of Navior-Stoke thermodynamic model. Finally, we also appreciate Yu-Rong for his insights of textual backdoor on the stable diffusion.

{\small
\bibliographystyle{cvpr_sty/ieee_fullname}
\bibliography{egbib}
}


\clearpage

\appendix
\vspace{-25mm}
\section{Code Base}
\vspace{-2mm}
Our code is available on GitHub:  \url{https://github.com/IBM/villandiffusion}
\vspace{-2mm}
\section{Mathematical Derivation}
\label{app:sec:math_derivation}
\vspace{-2mm}
\subsection{Clean Diffusion Model via Numerical Reparametrization}
\label{app:subsec:clean_dm}
\vspace{-2mm}
Recall that we have defined the forward process $q(\mathbf{x}_{t} | \mathbf{x}_{0}) := \mathcal{N}(\hat{\alpha}(t) \mathbf{x}_{0}, \hat{\beta}^2(t) \mathbf{I}), t \in [T_{min}, T_{max}]$ for general diffusion models, which is determined by the content scheduler $\hat{\alpha}(t): \mathbb{R} \to \mathbb{R}$ and the noise scheduler $\hat{\beta}(t): \mathbb{R} \to \mathbb{R}$. Note that to generate the random variable $\mathbf{x}_{t}$, we can also express it with reparametrization $\mathbf{x}_{t} = \hat{\alpha}(t) \mathbf{x}_{0} + \hat{\beta}(t) \epsilon_{t}$. In the meantime, we've also mentioned the variational lower bound of the diffusion model as \cref{app:eq:vlbo_clean}.
{\small
\begin{equation}
  \begin{split}
  &- \log p_{\theta}(\mathbf{x}_{0}) = - \mathbb{E}_{q}[\log p_{\theta}(\mathbf{x}_{0})]
  \leq \mathbb{E}_{q} \big[\mathcal{L}_{T}(\mathbf{x}_{T}, \mathbf{x}_{0}) + \sum_{t=2}^T \mathcal{L}_{t}(\mathbf{x}_{t}, \mathbf{x}_{t-1}, \mathbf{x}_{0}) - \mathcal{L}_{0}(\mathbf{x}_{1}, \mathbf{x}_{0}) \big]
  \end{split}
  \label{app:eq:vlbo_clean}
\end{equation}
}%
Denote $\mathcal{L}_{t}(\mathbf{x}_{t}, \mathbf{x}_{t-1}, \mathbf{x}_{0}) = D_\text{KL}(q(\mathbf{x}_{t-1} \vert \mathbf{x}_{t}, \mathbf{x}_{0}) \parallel p_\theta(\mathbf{x}_{t-1} \vert\mathbf{x}_{t}))$, $\mathcal{L}_{T}(\mathbf{x}_{T}, \mathbf{x}_{0}) = D_\text{KL}(q(\mathbf{x}_{T} \vert \mathbf{x}_{0}) \parallel p_\theta(\mathbf{x}_{T}))$, and $\mathcal{L}_{0}(\mathbf{x}_{1}, \mathbf{x}_{0}) = \log p_\theta(\mathbf{x}_{0} \vert \mathbf{x}_{1})$, where $D_{KL}(q || p) = \int_x q(x) \log \frac{q(x)}{p(x)}$ is the KL-Divergence. 
Since $\mathcal{L}_{t}$ usually dominates the bound, we can ignore $\mathcal{L}_{T}$ and $\mathcal{L}_{0}$ and focus on $D_\text{KL}(q(\mathbf{x}_{t-1} \vert \mathbf{x}_{t}, \mathbf{x}_{0}) \parallel p_\theta(\mathbf{x}_{t-1} \vert\mathbf{x}_{t}))$. In \cref{app:subsubsec:clean_rev_trans}, we will derive the clean conditional reversed transition $q(\mathbf{x}_{t-1} | \mathbf{x}_{t}, \mathbf{x}_{0})$. As for the learned reversed transition $q(\mathbf{x}_{t-1} | \mathbf{x}_{t})$, we will derive it in \cref{app:subsubsec:learned_rev_trans}. Finally, combining these two parts, we will present the loss function of the clean diffusion model in \cref{app:subsubsec:clean_loss}.

\subsubsection{Clean Reversed Conditional Transition $q(\mathbf{x}_{t-1} | \mathbf{x}_{t}, \mathbf{x}_{0})$}
\label{app:subsubsec:clean_rev_trans}
Similar to the derivation of DDPM, we approximate reversed transition as $q(\mathbf{x}_{t-1} | \mathbf{x}_{t}) \approx q(\mathbf{x}_{t-1} | \mathbf{x}_{t}, \mathbf{x}_{0})$. We also define the clean reversed conditional transition as \cref{app:eq:def_clean_rev_cond_trans}.
{\small
  \begin{equation}
    \begin{split}
    q(\mathbf{x}_{t-1} | \mathbf{x}_{t}, \mathbf{x}_{0}) := \mathcal{N}(\mu_{t} (\mathbf{x}_{t}, \mathbf{x}_{0}), s^2(t) \mathbf{I}), \ \mu_{t} (\mathbf{x}_{t}, \mathbf{x}_{0}) = a(t) \mathbf{x}_{t} + b(t) \mathbf{x}_{0}
    \end{split}
    \label{app:eq:def_clean_rev_cond_trans}
  \end{equation}
}
To show that the temporal content and noise schedulers are $a(t) = \frac{k_{t} \hat{\beta}^2(t-1)}{k_{t}^2 \hat{\beta}^2(t-1) + w_{t}^2}$ and $b(t) = \frac{\hat{\alpha}(t-1) w_{t}^2}{k_{t}^2 \hat{\beta}^2(t-1) + w_{t}^2}$, with Bayesian rule and Markovian property  $q(\mathbf{x}_{t-1} | \mathbf{x}_{t}, \mathbf{x}_{0}) = q(\mathbf{x}_{t} | \mathbf{x}_{t-1}) \frac{q(\mathbf{x}_{t-1} | \mathbf{x}_{0})}{q(\mathbf{x}_{t} | \mathbf{x}_{0})}$, we can expand the reversed conditional transition $q(\mathbf{x}_{t-1} \vert \mathbf{x}_t, \mathbf{x}_0)$ as \cref{app:eq:clean_rev_cond_trans_expand}. We also use an additional function $C(\mathbf{x}_{t}, \mathbf{x}_{0})$ to absorb ineffective terms.
{\small
\begin{equation}
  \begin{split}
  & q(\mathbf{x}_{t-1} \vert \mathbf{x}_t, \mathbf{x}_0) \\
  & = q(\mathbf{x}_t \vert \mathbf{x}_{t-1}, \mathbf{x}_0) \frac{ q(\mathbf{x}_{t-1} \vert \mathbf{x}_0) }{ q(\mathbf{x}_t \vert \mathbf{x}_0) } \\
  &\propto \exp \Big(-\frac{1}{2} \big(\frac{(\mathbf{x}_t - k_t \mathbf{x}_{t-1})^2}{w^2_t} + \frac{(\mathbf{x}_{t-1} - \hat{\alpha}(t-1) \mathbf{x}_0)^2}{\hat{\beta}^2(t-1)} - \frac{(\mathbf{x}_t - \hat{\alpha}(t) \mathbf{x}_0)^2}{\hat{\beta}^2(t)} \big) \Big) \\
  &= \exp \Big(-\frac{1}{2} \big(\frac{\mathbf{x}_t^2 - 2 k_t \mathbf{x}_t \color{blue}{\mathbf{x}_{t-1}} \color{black}{+ k^2_t} \color{orange}{\mathbf{x}_{t-1}^2} }{w^2_t} + \frac{ \color{orange}{\mathbf{x}_{t-1}^2} \color{black}{- 2 \hat{\alpha}(t-1) \mathbf{x}_0} \color{blue}{\mathbf{x}_{t-1}} \color{black}{+ \hat{\alpha}^2(t-1) \mathbf{x}_0^2}  }{\hat{\beta}^2(t-1)} - \frac{(\mathbf{x}_t - \hat{\alpha}(t) \mathbf{x}_0)^2}{\hat{\beta}^2(t)} \big) \Big) \\
  &= \exp\Big( -\frac{1}{2} \big( \color{orange}{(\frac{k^2_t}{w^2_t} + \frac{1}{\hat{\beta}^2(t-1)})} \mathbf{x}_{t-1}^2 - \color{blue}{(\frac{2 k_t}{w^2_t} \mathbf{x}_t + \frac{2\hat{\alpha}(t-1)}{\hat{\beta}^2(t-1)} \mathbf{x}_0)} \mathbf{x}_{t-1} \color{black} + C(\mathbf{x}_t, \mathbf{x}_0) \big) \Big)
  \label{app:eq:clean_rev_cond_trans_expand}
  \end{split}
\end{equation}
}
Thus, $a(t)$ and $b(t)$ can be derived as \cref{app:eq:clean_rev_cond_trans_expand_mean}
{\small
\begin{equation}
    \begin{split}
    a(t) \mathbf{x}_{t} + b(t) \mathbf{x}_{0}
    &= (\frac{k_t}{w^2_t} \mathbf{x}_t + \frac{\hat{\alpha}(t-1)}{\hat{\beta}^2(t-1)} \mathbf{x}_0)/(\frac{k^2_t}{w^2_t} + \frac{1}{\hat{\beta}^2(t-1)}) \\
    &= (\frac{k_t}{w^2_t} \mathbf{x}_t + \frac{\hat{\alpha}(t-1)}{\hat{\beta}^2(t-1)} \mathbf{x}_0) \frac{w^2_t \hat{\beta}^2(t-1)}{k^2_t \hat{\beta}^2(t-1) + w^2_t} \\
    &= \frac{k_t \hat{\beta}^2(t-1)}{k^2_t \hat{\beta}^2(t-1) + w^2_t} \mathbf{x}_t + \frac{\hat{\alpha}(t-1) w^2_t}{k^2_t \hat{\beta}^2(t-1) + w^2_t} \mathbf{x}_0 \\
    \label{app:eq:clean_rev_cond_trans_expand_mean}
    \end{split}
\end{equation}
}
After comparing the coefficients, we can get $a(t) = \frac{k_{t} \hat{\beta}^2(t-1)}{k_{t}^2 \hat{\beta}^2(t-1) + w_{t}^2}$ and $b(t) = \frac{\hat{\alpha}(t-1) w_{t}^2}{k_{t}^2 \hat{\beta}^2(t-1) + w_{t}^2}$.
Recall that based on the definition of the forward process $q(\mathbf{x}_{t} | \mathbf{x}_{0}) := \mathcal{N}(\hat{\alpha}(t) \mathbf{x}_{0}, \hat{\beta}^2(t) \mathbf{I})$, we can obtain the reparametrization: $\mathbf{x}_{0} = \frac{1}{\hat{\alpha}(t)} (\mathbf{x}_{t} - \hat{\beta}(t) \epsilon_{t})$. We plug the reparametrization into the clean reversed conditional transition \cref{app:eq:clean_rev_cond_trans_expand_mean}.
{\small
\begin{equation}
    \begin{split}
    \mu_{t} (\mathbf{x}_t, \mathbf{x}_{0})
    &= \frac{k_t \hat{\beta}^2(t-1) \hat{\alpha}(t) + \hat{\alpha}(t-1) w^2_t}{\hat{\alpha}(t) (k^2_t \hat{\beta}^2(t-1) + w^2_t)} \mathbf{x}_{t} - \frac{\hat{\alpha}(t-1) w^2_t}{k^2_t \hat{\beta}^2(t-1) + w^2_t} \frac{\hat{\beta}(t)}{\hat{\alpha}(t)} \epsilon_{t} \\
    \end{split}
    \label{app:eq:clean_rev_cond_trans_plugin}
\end{equation}
}
\subsubsection{Learned Clean Reversed Conditional Transition $p_{\theta}(\mathbf{x}_{t-1} | \mathbf{x}_{t}, \mathbf{x}_{0})$}
\label{app:subsubsec:learned_rev_trans}
To train a diffusion model that can approximate the clean reversed conditional transition, we define a clean reversed transition $p_{\theta}(\mathbf{x}_{t-1} | \mathbf{x}_{t})$ learned by trainable parameters $\theta$ as \cref{app:eq:def_learned_clean_rev_cond_trans}
{\small
\begin{equation}
    \begin{split}
    p_{\theta}(\mathbf{x}_{t-1} \vert \mathbf{x}_t) := \mathcal{N}(\mathbf{x}_{t-1}; \mu_{\theta}(\mathbf{x}_{t}, \mathbf{x}_{0}, t), s^2(t) \mathbf{I})
    \end{split}
    \label{app:eq:def_learned_clean_rev_cond_trans}
\end{equation}
}
With similar logic in \cref{app:eq:clean_rev_cond_trans_plugin} and replacing $\epsilon_{t}$ with a learned diffusion model $\epsilon_{\theta}(\mathbf{x}_{t}, t)$, we can also derive $\mu_{\theta}(\mathbf{x}_{t}, \mathbf{x}_{0}, t)$ as \cref{app:eq:learn_clean_rev_cond_trans}.
{\small
\begin{equation}
    \begin{split}
    \mu_{\theta} (\mathbf{x}_t, \mathbf{x}_{0}, t)
    &= \frac{k_t \hat{\beta}^2(t-1)}{k^2_t \hat{\beta}^2(t-1) + w^2_t} \mathbf{x}_t + \frac{\hat{\alpha}(t-1) w^2_t}{k^2_t \hat{\beta}^2(t-1) + w^2_t}  \left( \frac{1}{\hat{\alpha}(t)} (\mathbf{x}_{t} - \hat{\beta}(t) \epsilon_{\theta}(\mathbf{x}_{t}, t)) \right) \\
    &= \frac{k_t \hat{\beta}^2(t-1)}{k^2_t \hat{\beta}^2(t-1) + w^2_t} \mathbf{x}_t + \frac{\hat{\alpha}(t-1) w^2_t}{k^2_t \hat{\beta}^2(t-1) + w^2_t} \frac{1}{\hat{\alpha}(t)} \mathbf{x}_{t} - \frac{\hat{\alpha}(t-1) w^2_t}{k^2_t \hat{\beta}^2(t-1) + w^2_t} \frac{\hat{\beta}(t)}{\hat{\alpha}(t)} \epsilon_{\theta}(\mathbf{x}_{t}, t) \\
    &= \frac{k_t \hat{\beta}^2(t-1) \hat{\alpha}(t) + \hat{\alpha}(t-1) w^2_t}{\hat{\alpha}(t) (k^2_t \hat{\beta}^2(t-1) + w^2_t)} \mathbf{x}_{t} - \frac{\hat{\alpha}(t-1) w^2_t}{k^2_t \hat{\beta}^2(t-1) + w^2_t} \frac{\hat{\beta}(t)}{\hat{\alpha}(t)} \epsilon_{\theta}(\mathbf{x}_{t}, t) \\
    \end{split}
    \label{app:eq:learn_clean_rev_cond_trans}
\end{equation}
}
\subsubsection{Loss Function of Clean Diffusion Models}
\label{app:subsubsec:clean_loss}
The KL-divergence loss of the reversed transition can be simplified as \cref{app:eq:clean_kl_div_loss}, which uses mean-matching as an approximation of the KL-divergence.
{\small
\begin{equation}
    \begin{split}
    D_{KL} & (q(\mathbf{x}_{t-1} | \mathbf{x}_{t}, \mathbf{x}_{0}) || p_{\theta}(\mathbf{x}_{t-1} | \mathbf{x}_{t})) \\
    \propto & \vert \vert \mu_{t} (\mathbf{x}_t, \mathbf{x}_0) - \mu_{\theta} (\mathbf{x}_t, \mathbf{x}_0, t) \vert \vert^2 \\
    = & \left \vert \left \vert \left( - \frac{\hat{\alpha}(t-1) w^2_t}{k^2_t \hat{\beta}^2(t-1) + w^2_t} \frac{\hat{\beta}(t)}{\hat{\alpha}(t)} \epsilon_{t} \right) - \left( - \frac{\hat{\alpha}(t-1) w^2_t}{k^2_t \hat{\beta}^2(t-1) + w^2_t} \frac{\hat{\beta}(t)}{\hat{\alpha}(t)} \epsilon_{\theta}(\mathbf{x}_{t}, t) \right) \right \vert \right \vert^2 \\
    \propto & \left \vert \left \vert \epsilon_t - \epsilon_{\theta}(\mathbf{x}_{t}, t) \right \vert \right \vert^2 \\
    \label{app:eq:clean_kl_div_loss}
    \end{split}
\end{equation}
}
Thus, we can finally write down the clean loss function \cref{app:eq:clean_loss} with reparametrization $\mathbf{x}_{t}(\mathbf{x}, \epsilon) = \hat{\alpha}(t) \mathbf{x} + \hat{\beta}(t) \epsilon, \ \epsilon \sim \mathcal{N}(0, \mathbf{I})$.
{\small
\begin{equation}
    \begin{split}
    \mathcal{L}_{c}(\mathbf{x}, t, \epsilon) := \big| \big| \epsilon - \epsilon_{\theta}(\mathbf{x}_{t}(\mathbf{x}, \epsilon), t) \big| \big|^{2} \\
    \end{split}
    \label{app:eq:clean_loss}
\end{equation}
}
\vspace{-4mm}
\subsection{Backdoor Diffusion Model via Numerical Reparametrization}
\label{app:subsec:backdoor_dm}
This section will further extend the derivation of the clean diffusion models in \cref{app:subsec:clean_dm} and derive the backdoor reversed conditional transition $q(\mathbf{x}_{t-1}' | \mathbf{x}_{t}', \mathbf{x}_{0}')$ and the backdoor loss function in \cref{app:subsubsec:backdoor_rev_trans}.
\vspace{-2mm}
\subsubsection{Backdoor Reversed Conditional Transition $q(\mathbf{x}_{t-1}' | \mathbf{x}_{t}', \mathbf{x}_{0}')$}
\label{app:subsubsec:backdoor_rev_trans}
Recall the definition of the backdoor reversed conditional transition in \cref{app:eq:def_backdoor_rev_cond_trans}. For clarity, We mark the coefficients of the $\mathbf{r}$ as red.
{\small
\begin{equation}
    \begin{split}
    q(\mathbf{x}_{t-1}' | \mathbf{x}_{t}', \mathbf{x}_{0}') := \mathcal{N}(\mu_{t}' (\mathbf{x}_{t}', \mathbf{x}_{0}'), s^2(t) \mathbf{I}), \ \mu_{t}' (\mathbf{x}_{t}', \mathbf{x}_{0}') = a(t) \mathbf{x}_{t}' \color{red}+ c(t) \mathbf{r} \color{black}+ b(t) \mathbf{x}_{0}'
    \end{split}
    \label{app:eq:def_backdoor_rev_cond_trans}
\end{equation}
}
We firstly show that the temporal content, noise, and correction schedulers are $a(t) = \frac{k_{t} \hat{\beta}^2(t-1)}{k_{t}^2 \hat{\beta}^2(t-1) + w_{t}^2}$, $b(t) = \frac{\hat{\alpha}(t-1) w_{t}^2}{k_{t}^2 \hat{\beta}^2(t-1) + w_{t}^2}$, and $c(t) = \frac{w_{t}^2 \hat{\rho}(t-1) - k_{t} h_{t} \hat{\beta}(t-1)}{k_{t}^2 \hat{\beta}^2(t-1) + w_{t}^2}$. Thus, first of all, we can expand the reversed conditional transition $q(\mathbf{x}_{t-1}' \vert \mathbf{x}_t', \mathbf{x}_0')$ as \cref{app:eq:backdoor_rev_cond_trans_expand}. To absorb the ineffective terms, we introduce an additional function $C'(\mathbf{x}_{t}', \mathbf{x}_{0}')$. We mark the coefficients of the $\mathbf{r}$ as red.
{\small
\begin{equation}
    \begin{split}
    & q(\mathbf{x}_{t-1}' \vert \mathbf{x}_t', \mathbf{x}_0') \\
    & = q(\mathbf{x}_t' \vert \mathbf{x}_{t-1}', \mathbf{x}_0') \frac{ q(\mathbf{x}_{t-1}' \vert \mathbf{x}_0') }{ q(\mathbf{x}_t' \vert \mathbf{x}_0') } \\
    & \propto \exp \Big(-\frac{1}{2} \big(\frac{(\mathbf{x}_t' - k_t \mathbf{x}_{t-1}' \color{red}- h_t \mathbf{r} \color{black})^2}{w^2_t} + \frac{(\mathbf{x}_{t-1}' - \hat{\alpha}(t-1) \mathbf{x}_0' \color{red}- \hat{\rho}(t-1) \mathbf{r} \color{black})^2}{\hat{\beta}^2(t-1)}  \\
    & \quad \quad \quad - \frac{(\mathbf{x}_t' - \hat{\alpha}(t) \mathbf{x}_0' \color{red}- \hat{\rho}(t) \mathbf{r} \color{black})^2}{\hat{\beta}^2(t)} \big) \Big) \\
    &= \exp \Big(-\frac{1}{2} \big(\frac{(\mathbf{x}_t' - k_t \mathbf{x}_{t-1}')^2 \color{red} - 2 (\mathbf{x}_t' - k_t \mathbf{x}_{t-1}') h_t \mathbf{r} + h_t^2 \mathbf{r}^2}{w^2_t} \\
    & \quad \quad \quad + \frac{(\mathbf{x}_{t-1}' - \hat{\alpha}(t-1) \mathbf{x}_0')^2 \color{red}- 2 (\mathbf{x}_{t-1}' - \hat{\alpha}(t-1) \mathbf{x}_0') \hat{\rho}(t-1) \mathbf{r} + \hat{\rho}(t-1)^2 \mathbf{r}^2}{\hat{\beta}^2(t-1)} \\
    & \quad \quad \quad - \frac{(\mathbf{x}_t' - \hat{\alpha}(t) \mathbf{x}_0' \color{red}- \hat{\rho}(t) \mathbf{r} \color{black})^2}{\hat{\beta}^2(t)} \big) \Big) \\
    &= \exp \Big(-\frac{1}{2} \big(\frac{(\mathbf{x}_t' - k_t \mathbf{x}_{t-1}')^2 \color{black}}{w^2_t} \color{black} + \frac{(\mathbf{x}_{t-1}' - \hat{\alpha}(t-1) \mathbf{x}_0')^2}{\hat{\beta}^2(t-1)} \color{red} - \frac{2 (\mathbf{x}_t' - k_t \mathbf{x}_{t-1}') h_t \mathbf{r}}{w^2_t} \\
    & \quad \quad \quad \color{black} - \frac{\color{red} 2 (\mathbf{x}_{t-1}' - \hat{\alpha}(t-1) \mathbf{x}_0') \hat{\rho}(t-1) \mathbf{r} }{\hat{\beta}^2(t-1)} \color{black} - \frac{(\mathbf{x}_t' - \hat{\alpha}(t) \mathbf{x}_0' \color{red}- \hat{\rho}(t) \mathbf{r} \color{black})^2}{\hat{\beta}^2(t)} \big) \Big) \\
    &= \exp\Big( -\frac{1}{2} \big( \color{orange}{(\frac{k^2_t}{w^2_t} + \frac{1}{\hat{\beta}^2(t-1)})} \mathbf{x}_{t-1}'^2 \color{blue} - 2 (\frac{k_t}{w^2_t} \mathbf{x}_t' + \frac{\hat{\alpha}(t-1)}{\hat{\beta}^2(t-1)} \mathbf{x}_0' \\
    & \quad \quad \quad \color{red}+ (\frac{\hat{\rho}(t-1)}{\hat{\beta}^2(t-1)} - \frac{k_t h_t}{w_t^2}) \mathbf{r} \color{black}) \mathbf{x}_{t-1}' + C'(\mathbf{x}_t', \mathbf{x}_0') \big) \Big) \\
    \end{split}
    \label{app:eq:backdoor_rev_cond_trans_expand}
\end{equation}
}
Thus, the content, noise, and correction schedulers $a(t)$, $b(t)$, and $c(t)$ can be derived as \cref{app:eq:backdoor_rev_cond_trans_expand_mean}. We mark the coefficients of the $\mathbf{r}$ as red.
{\small
\begin{equation}
    \vspace{-2mm}
    \begin{split}
    a(t) \mathbf{x}_{t}' \color{red} + c(t) \mathbf{r} \color{black} + b(t) \mathbf{x}_{0}'
    &= (\frac{k_t}{w^2_t} \mathbf{x}_t' + \frac{\hat{\alpha}(t-1)}{\hat{\beta}^2(t-1)} \mathbf{x}_0' \color{red}+ (\frac{\hat{\rho}(t-1)}{\hat{\beta}^2(t-1)} - \frac{k_t h_t}{w_t^2}) \mathbf{r} \color{black})/(\frac{k^2_t}{w^2_t} + \frac{1}{\hat{\beta}^2(t-1)}) \\
    &= (\frac{k_t}{w^2_t} \mathbf{x}_t' + \frac{\hat{\alpha}(t-1)}{\hat{\beta}^2(t-1)} \mathbf{x}_0' \color{red} + (\frac{\hat{\rho}(t-1)}{\hat{\beta}^2(t-1)} - \frac{k_t h_t}{w_t^2}) \mathbf{r} \color{black}) \frac{w^2_t \hat{\beta}^2(t-1)}{k^2_t \hat{\beta}^2(t-1) + w^2_t} \\
    &= \frac{k_t \hat{\beta}^2(t-1)}{k^2_t \hat{\beta}^2(t-1) + w^2_t} \mathbf{x}_t' + \frac{\hat{\alpha}(t-1) w^2_t}{k^2_t \hat{\beta}^2(t-1) + w^2_t} \mathbf{x}_0' \\ 
    & \quad \color{red} + (\frac{w^2_t \hat{\rho}(t-1)}{k^2_t \hat{\beta}^2(t-1) + w^2_t} - \frac{k_t h_t  \hat{\beta}^2(t-1)}{k^2_t \hat{\beta}^2(t-1) + w^2_t}) \mathbf{r} \\
    &= \frac{k_t \hat{\beta}^2(t-1)}{k^2_t \hat{\beta}^2(t-1) + w^2_t} \mathbf{x}_t' + \frac{\hat{\alpha}(t-1) w^2_t}{k^2_t \hat{\beta}^2(t-1) + w^2_t} \mathbf{x}_0' \color{red} \\
    & \quad \color{red} + \frac{w^2_t \hat{\rho}(t-1) - k_t h_t  \hat{\beta}^2(t-1)}{k^2_t \hat{\beta}^2(t-1) + w^2_t} \mathbf{r} \\
    \end{split}
    \label{app:eq:backdoor_rev_cond_trans_expand_mean}
    \vspace{-2mm}
\end{equation}
}
Thus, after comparing with \cref{app:eq:def_backdoor_rev_cond_trans}, we can get $a(t) = \frac{k_{t} \hat{\beta}^2(t-1)}{k_{t}^2 \hat{\beta}^2(t-1) + w_{t}^2}$, $b(t) = \frac{\hat{\alpha}(t-1) w_{t}^2}{k_{t}^2 \hat{\beta}^2(t-1) + w_{t}^2}$, and $c(t) = \frac{w_{t}^2 \hat{\rho}(t-1) - k_{t} h_{t} \hat{\beta}(t-1)}{k_{t}^2 \hat{\beta}^2(t-1) + w_{t}^2}$.
\vspace{-2mm}
\subsection{Backdoor Reversed SDE and ODE}
\label{app:subsec:backdoor_rev_sde_ode}
\vspace{-2mm}
In this section, we will show how to convert the backdoor reversed transition $q(\mathbf{x}_{t-1}' | \mathbf{x}_{t}')$ to a reversed-time SDE with arbitrary stochasticity by $q(\mathbf{x}_{t-1}' | \mathbf{x}_{t}', \mathbf{x}_{0}')$. In the first section, referring to \cite{SDE_Diffusion}, we introduce \cref{lemma:arb_stochasticity} as a tool for the conversion between SDE and ODE. Secondly, in \cref{app:subsubsec:backdoor_rev_sde_arb_stochasticity} and \cref{app:subsubsec:learned_rev_sde_arb_stochasticity}, we will convert the backdoor and learned reversed transition: $q(\mathbf{x}_{t-1}' | \mathbf{x}_{t}')$ and $p_{\theta}(\mathbf{x}_{t-1} | \mathbf{x}_{t})$ into the backdoor and learned reversed SDE. In the last section \cref{app:subsubsec:backdoor_loss}, we will derive the backdoor loss function for various ODE and SDE samplers.
\begin{lemma}
    \label{lemma:arb_stochasticity}
    For a first-order differentiable function $\mathbf{f}: \mathbb{R}^{d} \times \mathbb{R} \to \mathbb{R}^d$, a second-order differentiable function $\mathbf{g}: \mathbb{R} \to \mathbb{R}$, and a randomness indicator $\zeta \in [0, 1]$, the SDE $d \mathbf{x}_{t} = \mathbf{f}(\mathbf{x}_{t}, t) dt + g(t) d \bar{\mathbf{w}}$ and $d \mathbf{x}_{t} = [ \mathbf{f}(\mathbf{x}_{t}, t) - \frac{1 - \zeta}{2} g^2(t) \nabla_{\mathbf{x}_{t}} \log p(\mathbf{x}_{t})] dt + \sqrt{\zeta} \mathbf{g}(t) d \bar{\mathbf{w}}$ describe the same stochastic process $\mathbf{x}_{t} \in \mathbb{R}^d, t \in [0, T]$ with the marginal probability $p(\mathbf{x}_{t})$, where $\bar{\mathbf{w}} \in \mathbb{R}^d$ is the reverse Wiener process.
\end{lemma}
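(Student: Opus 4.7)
The plan is to apply the Fokker--Planck equation to both SDEs and verify that the induced evolution of the marginal density $p(\mathbf{x}_t)$ coincides; since an It\^o diffusion is characterized (in the sense relevant to describing the same stochastic process at the level of marginals used throughout the paper) by the initial-value problem for its density, this suffices. The key algebraic input is the identity
\[
\nabla^{2} p(\mathbf{x}_t) \;=\; \nabla\!\cdot\!\bigl(p(\mathbf{x}_t)\,\nabla \log p(\mathbf{x}_t)\bigr),
\]
which lets one trade a fraction of the diffusion term for an additional drift at the PDE level.

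First I would write the reverse-time Fokker--Planck equation for the baseline SDE $d\mathbf{x}_t = \mathbf{f}(\mathbf{x}_t,t)\,dt + g(t)\,d\bar{\mathbf{w}}$, inheriting the sign convention of \cite{SDE_Diffusion}, to obtain
\[
\partial_t p \;=\; -\nabla\!\cdot\!(\mathbf{f}\,p) \;+\; \tfrac{1}{2}\,g^{2}\,\nabla^{2} p .
\]
Next I would split the diffusion coefficient as $\tfrac{1}{2}g^{2} = \tfrac{1-\zeta}{2}g^{2} + \tfrac{\zeta}{2}g^{2}$ and rewrite the $(1-\zeta)$-portion using the identity:
\[
\tfrac{1-\zeta}{2}\,g^{2}\,\nabla^{2} p \;=\; \nabla\!\cdot\!\Bigl(\tfrac{1-\zeta}{2}\,g^{2}\,(\nabla \log p)\,p\Bigr).
\]
Reassembling, the Fokker--Planck equation becomes
\[
\partial_t p \;=\; -\nabla\!\cdot\!\Bigl(\bigl(\mathbf{f} - \tfrac{1-\zeta}{2}\,g^{2}\,\nabla\log p\bigr)\,p\Bigr) \;+\; \tfrac{\zeta}{2}\,g^{2}\,\nabla^{2} p ,
\]
which is exactly the Fokker--Planck equation of the modified SDE $d\mathbf{x}_t = [\mathbf{f} - \tfrac{1-\zeta}{2}g^{2}\nabla\log p]\,dt + \sqrt{\zeta}\,g(t)\,d\bar{\mathbf{w}}$. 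The two endpoints $\zeta=1$ (recovers the baseline) and $\zeta=0$ (gives the deterministic probability-flow ODE used in \cref{eq:reverse_time_ode}) serve as consistency checks.

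The main obstacle I anticipate is bookkeeping of signs in the \emph{reverse}-time formulation: Anderson's reverse-time SDE already carries a score-function drift, and one must keep it separate from the additional $-\tfrac{1-\zeta}{2}g^{2}\nabla\log p$ produced by the identity above so as not to double count. A clean remedy is to first carry out the manipulation in forward time, where the Fokker--Planck sign convention is standard, and then apply the reverse-time transformation of \cite{SDE_Diffusion} uniformly to the entire $\zeta$-family at the end. A secondary regularity caveat is that $p(\mathbf{x}_t)$ must stay strictly positive and smooth enough for $\nabla \log p$ to be defined; the first-order differentiability of $\mathbf{f}$ and second-order differentiability of $g$ assumed in the lemma, together with non-degenerate diffusion for $\zeta>0$, ensure this through standard parabolic regularity, while the $\zeta=0$ endpoint follows by a continuity/limit argument from the $\zeta>0$ case.
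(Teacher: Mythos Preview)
Your proposal is correct and follows essentially the same route as the paper: write the Fokker--Planck equation for the baseline SDE, split the diffusion coefficient as $\tfrac{1}{2}g^2=\tfrac{1-\zeta}{2}g^2+\tfrac{\zeta}{2}g^2$, and use the log-derivative identity $\nabla^2 p=\nabla\!\cdot(p\,\nabla\log p)$ to absorb the $(1-\zeta)$-portion into the drift, recovering the Fokker--Planck equation of the $\zeta$-SDE. Your additional remarks on reverse-time sign bookkeeping and regularity of $p$ are sensible refinements but are not part of the paper's own argument, which carries out exactly this computation in coordinate form without addressing those caveats.
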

\begin{proof}
    For the clarity of the notation, we denote $p(\mathbf{x}_{t})$ as $p(\mathbf{x}, t)$, follow the Fokker-Planck equation \cite{SDE_Diffusion}, we can convert the SDE $d \mathbf{x}_{t} = f(t) \mathbf{x}_{t} dt + g(t) d \bar{\mathbf{w}}$ to a partial differential equation \cref{app:eq:lemma_sde2ode_fp0} and \cref{app:eq:lemma_sde2ode_fp}.
{\small
\begin{equation}
    \begin{split}
    \vspace{-4mm}
    \frac{\partial }{\partial t} p(\mathbf{x}, t) & = - \sum_{i=1}^{d} \frac{\partial}{\partial \mathbf{x}_{i}} (\mathbf{f}_{i}(\mathbf{x}, t) \cdot p(\mathbf{x}, t)) + \frac{1}{2} \sum_{i=1}^{d} \sum_{j=1}^{d} \frac{\partial^2}{\partial \mathbf{x}_{i} \partial \mathbf{x}_{j}} (g^2(t) \cdot p(\mathbf{x}, t)) \\ 
    & = - \sum_{i=1}^{d} \frac{\partial}{\partial \mathbf{x}_{i}} (\mathbf{f}(\mathbf{x}, t) \cdot p(\mathbf{x}, t)) + \frac{\zeta}{2}  \sum_{i=1}^{d} \sum_{j=1}^{d} \frac{\partial^2}{\partial \mathbf{x}_{i} \partial \mathbf{x}_{j}}  (g^2(t) \cdot p(\mathbf{x}, t)) \\ 
    & \quad + \frac{1 - \zeta}{2} \sum_{i=1}^{d} \sum_{j=1}^{d} \frac{\partial^2}{\partial \mathbf{x}_{i} \partial \mathbf{x}_{j}}  (g^2(t) \cdot p(\mathbf{x}, t)) \\
    & = - \sum_{i=1}^{d} \frac{\partial}{\partial \mathbf{x}_{i}} (\mathbf{f}(\mathbf{x}, t) \cdot p(\mathbf{x}, t)) + \frac{\zeta}{2}  \sum_{i=1}^{d} \sum_{j=1}^{d} \frac{\partial^2}{\partial \mathbf{x}_{i} \partial \mathbf{x}_{j}}  (g^2(t) \cdot p(\mathbf{x}, t)) \\
    & \quad + \frac{1 - \zeta}{2} \sum_{i=1}^{d} \frac{\partial}{\partial \mathbf{x}_{i}} (g^2(t) \cdot \frac{p(\mathbf{x}, t)}{p(\mathbf{x}, t)}\nabla_{\mathbf{x}} p(\mathbf{x}, t)) \\
  \end{split}
  \vspace{-6mm}
  \label{app:eq:lemma_sde2ode_fp0}
\end{equation}
}
To simplify the second-order partial derivative, in the \cref{app:eq:lemma_sde2ode_fp}, we apply the log-derivative trick: $\log p(\mathbf{x}, t) \nabla_{\mathbf{x}} p(\mathbf{x}, t) = \frac{\nabla_{\mathbf{x}} p(\mathbf{x}, t)}{p(\mathbf{x}, t)}$
{\small
\begin{equation}
    \begin{split}
    \vspace{-4mm}
    & = - \sum_{i=1}^{d} \frac{\partial}{\partial \mathbf{x}_{i}} (\mathbf{f}(\mathbf{x}, t) \cdot p(\mathbf{x}, t)) + \frac{\zeta}{2}  \sum_{i=1}^{d} \sum_{j=1}^{d} \frac{\partial^2}{\partial \mathbf{x}_{i} \partial \mathbf{x}_{j}}  (g^2(t) \cdot p(\mathbf{x}, t)) \\
    & \quad + \frac{1 - \zeta}{2} \sum_{i=1}^{d} \frac{\partial}{\partial \mathbf{x}_{i}} ((g^2(t) \cdot \nabla_{\mathbf{x}} \log p(\mathbf{x}, t)) \cdot p(\mathbf{x}, t)) \\
    & = - \sum_{i=1}^{d} \frac{\partial}{\partial \mathbf{x}_{i}} ((\mathbf{f}(\mathbf{x}, t) - \frac{1 - \zeta}{2} (g^2(t) \cdot \nabla_{\mathbf{x}} \log p(\mathbf{x}, t))) \cdot p(\mathbf{x}, t)) \\
    & \quad + \frac{\zeta}{2}  \sum_{i=1}^{d} \sum_{j=1}^{d} \frac{\partial^2}{\partial \mathbf{x}_{i} \partial \mathbf{x}_{j}}  (g^2(t) \cdot p(\mathbf{x}, t)) \\
    \end{split}
    \vspace{-6mm}
    \label{app:eq:lemma_sde2ode_fp}
\end{equation}
}
Thus, we can convert the above results back to an SDE with the Fokker-Planck equation with randomness indicator $\zeta$ in \cref{app:eq:lemma_sde2ode_arb_rand}. We can see it will reduce to an ODE while $\zeta = 0$ and SDE while $\zeta = 1$.
{\small
\begin{equation}
    \begin{split}
    d \mathbf{x}_{t} = [ \mathbf{f}(\mathbf{x}_{t}, t) - \frac{1 - \zeta}{2} g^2(t) \nabla_{\mathbf{x}_{t}} \log p(\mathbf{x}_{t})] dt + \sqrt{\zeta} \mathbf{g}(t) d \bar{\mathbf{w}}
    \end{split}
\label{app:eq:lemma_sde2ode_arb_rand}
\end{equation} $\hfill \blacksquare$
}
\end{proof}
\vspace{-2mm}
\subsubsection{Backdoor Reversed SDE with Arbitrary Stochasticity}
\vspace{-2mm}
\label{app:subsubsec:backdoor_rev_sde_arb_stochasticity}
Since $q(\mathbf{x}_{t-1}' | \mathbf{x}_{t}') \approx q(\mathbf{x}_{t-1} | \mathbf{x}_{t}', \mathbf{x}_{0}')$, we can replace $\mathbf{x}_{0}$ of \cref{app:eq:def_backdoor_rev_cond_trans} with reparametrization $\mathbf{x}_{0} = \frac{\mathbf{x}_{t}' - \hat{\rho}(t) \mathbf{r} - \hat{\beta}(t) \epsilon_{t}}{\hat{\alpha}(t)}$ from \cref{app:eq:def_backdoor_rev_cond_trans}. Note that since the marginal distribution $q(\mathbf{x}_{t}')$ follows Gaussian distribution, we replace the $\epsilon_{t}$ with the normalized conditional score function $- \hat{\beta}(t) \nabla_{\mathbf{x}_{t}'} \log q(\mathbf{x}_{t}' | \mathbf{x}_{0}')$ as a kind of reparametrization trick.
{\small
\begin{equation}
    \begin{split}
    \mathbf{x}_{t-1}' & = a(t) \mathbf{x}_{t}' + b(t) \frac{\mathbf{x}_{t}' - \hat{\rho}(t) \mathbf{r} - \hat{\beta}(t) (- \hat{\beta}(t) \nabla_{\mathbf{x}_{t}'} \log q(\mathbf{x}_{t}' | \mathbf{x}_{0}'))}{\hat{\alpha}(t)} + c(t) \mathbf{r} + s(t) \epsilon_{t}, \ \epsilon_{t} \sim \mathcal{N}(0, \mathbf{I}) \\
    & = (a(t) + \frac{b(t)}{\hat{\alpha}(t)}) \mathbf{x}_{t}' + (c(t) - \frac{b(t) \hat{\rho}(t)}{\hat{\alpha}(t)}) \mathbf{r} - \frac{b(t) \hat{\beta}(t)}{\hat{\alpha}(t)} (- \hat{\beta}(t) \nabla_{\mathbf{x}_{t}'} \log q(\mathbf{x}_{t}' | \mathbf{x}_{0}')) + s(t) \epsilon_{t}
    \end{split}
    \label{app:eq:backdoor_reversed_reparametrization}
\end{equation}
}
Then, based on \cref{app:eq:backdoor_reversed_reparametrization}, we approximate the dynamic $d \mathbf{x}_{t}'$ with Taylor expansion as \cref{app:eq:backdoor_reversed_sde_raw}
{\small
\begin{equation}
    \begin{split}
    d \mathbf{x}_{t}' = \big[ (a(t) + \frac{b(t)}{\hat{\alpha}(t)} - 1) \mathbf{x}_{t}' + (c(t) - \frac{b(t) \hat{\rho}(t)}{\hat{\alpha}(t)}) \mathbf{r} - \frac{b(t) \hat{\beta}(t)}{\hat{\alpha}(t)} (- \hat{\beta}(t) \nabla_{\mathbf{x}_{t}'} \log q(\mathbf{x}_{t}' | \mathbf{x}_{0}')) \big] dt + s(t) d \bar{\mathbf{w}}
    \end{split}
    \label{app:eq:backdoor_reversed_sde_raw}
\end{equation}
}
With proper reorganization, we can express the SDE \cref{app:eq:backdoor_reversed_sde_raw} as \cref{app:eq:backdoor_reversed_sde}
{\small
\begin{equation}
    \begin{split}
    d \mathbf{x}_{t}' = \big[ F(t) \mathbf{x}_{t}' - G^2(t) (- \hat{\beta}(t) \nabla_{\mathbf{x}_{t}'} \log q(\mathbf{x}_{t}' | \mathbf{x}_{0}') - \frac{H(t)}{G^2(t)} \mathbf{r}) \big] dt + s(t) d \bar{\mathbf{w}}
    \end{split}
    \label{app:eq:backdoor_reversed_sde}
\end{equation}
}
We denote $F(t) = a(t) + \frac{b(t)}{\hat{\alpha}(t)} - 1$, $H(t) = c(t) - \frac{b(t) \hat{\rho}(t)}{\hat{\alpha}(t)}$, and $G(t) = \sqrt{\frac{b(t) \hat{\beta}(t)}{\hat{\alpha}(t)}}$. Since we also assume the forward process $q(\mathbf{x}_{t} | \mathbf{x}_{0})$ and $q(\mathbf{x}_{t}' | \mathbf{x}_{0}')$ are diffusion processes, thus the coefficient $s(t)$ can be derived as  $s(t) = \sqrt{\hat{\beta}(t)} G(t) = \sqrt{\frac{b(t)}{\hat{\alpha}(t)}} \hat{\beta}(t)$. Then, considering different stochasticity of various samplers, we can apply \cref{lemma:arb_stochasticity} and introduce an additional stochasticity indicator $\zeta \in [0, 1]$ in \cref{app:eq:backdoor_reversed_sde_arb_stochasticity}.
{\small
\begin{equation}
    \begin{split}
    d \mathbf{x}_{t}' & = \big[ F(t) \mathbf{x}_{t}' - G^2(t) (- \hat{\beta}(t) \nabla_{\mathbf{x}_{t}'} \log q(\mathbf{x}_{t}' | \mathbf{x}_{0}') - \frac{H(t)}{G^2(t)} \mathbf{r}) \big] dt + s(t) d \bar{\mathbf{w}} \\
    & = \big[ F(t) \mathbf{x}_{t}' - G^2(t) (- \hat{\beta}(t) \nabla_{\mathbf{x}_{t}'} \log q(\mathbf{x}_{t}' | \mathbf{x}_{0}') - \frac{H(t)}{G^2(t)} \mathbf{r}) - \frac{1 - \zeta}{2} s^2(t) \nabla_{\mathbf{x}_{t}'} \log q(\mathbf{x}_{t}' | \mathbf{x}_{0}') \big] dt + \sqrt{\zeta} s(t) d \bar{\mathbf{w}} \\
    & = \big[ F(t) \mathbf{x}_{t}' - \frac{1 + \zeta}{2} G^2(t) \underbrace{(- \hat{\beta}(t) \nabla_{\mathbf{x}_{t}'} \log q(\mathbf{x}_{t}' | \mathbf{x}_{0}') - \frac{2 H(t)}{(1 + \zeta)G^2(t)} \mathbf{r})}_{\text{Backdoor Score Function}} \big] dt + G(t) \sqrt{\zeta \hat{\beta}(t)} d \bar{\mathbf{w}} \\
    \end{split}
    \label{app:eq:backdoor_reversed_sde_arb_stochasticity}
\end{equation}
}
\vspace{-2mm}
\subsubsection{Learned Reversed SDE with Arbitrary Stochasticity}
\vspace{-2mm}
\label{app:subsubsec:learned_rev_sde_arb_stochasticity}
Since $q(\mathbf{x}_{t-1} | \mathbf{x}_{t}) \approx q(\mathbf{x}_{t-1} | \mathbf{x}_{t}, \mathbf{x}_{0})$, we can replace $\mathbf{x}_{0}$ of \cref{app:eq:def_backdoor_rev_cond_trans} with $\mathbf{x}_{0} = \frac{\mathbf{x}_{t} - \hat{\beta}(t) \epsilon_{\theta}(\mathbf{x}_{t}, t)}{\hat{\alpha}(t)}$, which is derived from the reparametrization of the forward process $\mathbf{x}_{t} = \hat{\alpha}(t) \mathbf{x}_{0} + \hat{\beta}(t) \epsilon_{t}$ with the replacement $\epsilon_{t}$ with $\epsilon_{\theta}(\mathbf{x}_{t}, t)$.
{\small
\begin{equation}
    \begin{split}
    \mathbf{x}_{t-1} & = a(t) \mathbf{x}_{t} + b(t) \frac{\mathbf{x}_{t} - \hat{\beta}(t) \epsilon_{\theta}(\mathbf{x}_{t}, t)}{\hat{\alpha}(t)} + s(t) \epsilon_{t}, \ \epsilon_{t} \sim \mathcal{N}(0, \mathbf{I}) \\
    & = (a(t) + \frac{b(t)}{\hat{\alpha}(t)}) \mathbf{x}_{t} - \frac{b(t) \hat{\beta}(t)}{\hat{\alpha}(t)} \epsilon_{\theta}(\mathbf{x}_{t}, t) + s(t) \epsilon_{t}
    \end{split}
    \label{app:eq:learned_reversed_reparametrization}
\end{equation}
}
Then, according to \cref{app:eq:learned_reversed_reparametrization}, we approximate the dynamic $d \mathbf{x}_{t}$ with Taylor expansion as \cref{app:eq:learned_reversed_sde_raw}
{\small
\begin{equation}
    \begin{split}
    d \mathbf{x}_{t} = \big[ (a(t) + \frac{b(t)}{\hat{\alpha}(t)} - 1) \mathbf{x}_{t} - \frac{b(t) \hat{\beta}(t)}{\hat{\alpha}(t)} \epsilon_{\theta}(\mathbf{x}_{t}, t) \big] dt + s(t) d \bar{\mathbf{w}}
    \end{split}
    \label{app:eq:learned_reversed_sde_raw}
\end{equation}
}
With proper reorganization, we can express the SDE \cref{app:eq:learned_reversed_sde_raw} with $F(t) = a(t) + \frac{b(t)}{\hat{\alpha}(t)} - 1$, $G(t) = \sqrt{\frac{b(t) \hat{\beta}(t)}{\hat{\alpha}(t)}}$, and $s(t) = \sqrt{\frac{b(t)}{\hat{\alpha}(t)}} \hat{\beta}(t)$ as \cref{app:eq:learned_reversed_sde}.
{\small
\begin{equation}
    \begin{split}
    d \mathbf{x}_{t} = \big[ F(t) \mathbf{x}_{t} - G^2(t) \epsilon_{\theta}(\mathbf{x}_{t}, t) \big] dt + s(t) d \bar{\mathbf{w}}
    \end{split}
    \label{app:eq:learned_reversed_sde}
\end{equation}
}
Then, we also consider arbitrary stochasticity and introduce an additional stochasticity indicator $\zeta \in [0, 1]$ with \cref{lemma:arb_stochasticity}. As we use a diffusion model $\epsilon_{\theta}$ as an approximation for the normalized score function: $\epsilon_{\theta} (\mathbf{x}_{t}, t) = - \hat{\beta}(t) \nabla_{\mathbf{x}_{t}} \log q(\mathbf{x}_{t})$, we can derive the learned reversed SDE with arbitrary stochasticity in \cref{app:eq:learned_reversed_sde_arb_randomness}.
{\small
\begin{equation}
    \begin{split}
    d \mathbf{x}_{t} = \big[ F(t) \mathbf{x}_{t} - \frac{1 + \zeta}{2} G^2(t) \epsilon_{\theta}(\mathbf{x}_{t}, t) \big] dt + G(t) \sqrt{\zeta \hat{\beta}(t)} d \bar{\mathbf{w}}
    \end{split}
    \label{app:eq:learned_reversed_sde_arb_randomness}
\end{equation}
}
\vspace{-2mm}
\subsubsection{Loss Function of the Backdoor Diffusion Models}
\vspace{-2mm}
\label{app:subsubsec:backdoor_loss}
Based on the above results, we can formulate a score-matching problem based on \cref{app:eq:backdoor_reversed_sde_arb_stochasticity} and \cref{app:eq:learned_reversed_sde_arb_randomness} as \cref{app:eq:backdoor_score_loss_raw}. The loss function \cref{app:eq:backdoor_score_loss_raw} is also known as denoising-score-matching loss \cite{NCSN}, which is a surrogate of the score-matching problem since the score function $\nabla_{\mathbf{x}_{t}'} \log q(\mathbf{x}_{t}')$ is intractable.
{\small
\begin{equation}
    \begin{split}
    \mathbb{E}_{\mathbf{x}_{t}', \mathbf{x}_{0}'} & \big[ \big| \big| (- \hat{\beta}(t) \nabla_{\mathbf{x}_{t}'} \log q(\mathbf{x}_{t}' | \mathbf{x}_{0}') - \frac{2 H(t)}{(1 + \zeta)G^2(t)} \mathbf{r}) - \epsilon_{\theta}(\mathbf{x}_{t}', t) \big| \big|^{2} \big] \\
    \propto & \big| \big| \epsilon - \frac{2 H(t)}{(1 + \zeta)G^2(t)} \mathbf{r}(\mathbf{x}_{0}, \mathbf{g}) - \epsilon_{\theta}(\mathbf{x}_{t}'(\mathbf{x}_{0}', \mathbf{r}(\mathbf{x}_{0}, \mathbf{g}), \epsilon), t) \big| \big|^{2} \\
    \end{split}
    \label{app:eq:backdoor_score_loss_raw}
\end{equation}
}
Thus, we can finally write down the backdoor loss function \cref{app:eq:backdoor_score_loss}.
{\small
\begin{equation}
    \begin{split}
    \mathcal{L}_{p}(\mathbf{x}, t, \epsilon, \mathbf{g}, \mathbf{y}, \zeta) := \big| \big| \epsilon - \frac{2 H(t)}{(1 + \zeta)G^2(t)} \mathbf{r}(\mathbf{x}, \mathbf{g}) - \epsilon_{\theta}(\mathbf{x}_{t}'(\mathbf{y}, \mathbf{r}(\mathbf{x}, \mathbf{g}), \epsilon), t) \big| \big|^{2} \\
    \end{split}
    \label{app:eq:backdoor_score_loss}
\end{equation}
}
\vspace{-2mm}
\subsection{The Derivation of Conditional Diffusion Models}
\vspace{-2mm}
\label{app:subsec:conditional_VLBO}
We will expand our framework to conditional generation in this section. In \cref{app:subsubsec:conditional_NLL}, we will start with the negative-log likelihood (NLL) and derive the variational lower bound (VLBO). Next, in \cref{app:subsubsec:conditional_VLBO}, we decompose the VLBO into three components and focus on the most important one. in \cref{app:subsubsec:conditional_clean_loss}, based on previous sections, we will derive the clean loss function for the conditional diffusion models. The last section \cref{app:subsubsec:conditional_backdoor_loss} will combine the results of \cref{app:subsec:clean_dm} and \cref{app:subsec:backdoor_dm} and derive the backdoor loss functions for the conditional diffusion models and various samplers.
\vspace{-3mm}
\subsubsection{Conditional Negative Log Likelihood (NLL)}
\vspace{-3mm}
\label{app:subsubsec:conditional_NLL}
To train a conditional diffusion model $\epsilon_{\theta}(\mathbf{x}_{0}, \mathbf{c})$, we will optimize the joint probability learned by the model $\arg \min_{\theta} - \log p_{\theta}(\mathbf{x}_{0}, \mathbf{c})$. We denote $\mathbf{c}$ as the condition, which can be prompt embedding for the text-to-image generation, and the $D_{\mathbf{x}_{i:T}}$ is the domain of random vectors $\mathbf{x}_{i}, \dots , \mathbf{x}_{T}, \ \mathbf{x}_{t} \in \mathbb{R}^{d}, t \in [i, T], i \leq T$. Therefore, we can derive the conditional variational lower bound $L_{VLB}^{C}$ as \cref{app:eq:conditional_NLL}.
{\small
\begin{equation}
    \begin{split}
    - \log p_{\theta}(\mathbf{x}_{0}, \mathbf{c}) & = - \mathbb{E}_{q(\mathbf{x}_{0})} \big[ \log p_{\theta}(\mathbf{x}_{0}, \mathbf{c}) \big] \\
    & = - \mathbb{E}_{q(\mathbf{x}_{0})} \big[ \log \int_{D_{\mathbf{x}_{1:T}}} p_{\theta}(\mathbf{x}_{0}, \dots , \mathbf{x}_{T}, \mathbf{c}) d \mathbf{x}_{1} \dots d \mathbf{x}_{T} \big] \\
    & = - \mathbb{E}_{q(\mathbf{x}_{0})} \big[ \log \int_{\mathbf{x}_{1} \dots \mathbf{x}_{T}} q(\mathbf{x}_{1}, \dots , \mathbf{x}_{T} | \mathbf{x}_{0}) \frac{p_{\theta}(\mathbf{x}_{0}, \dots , \mathbf{x}_{T}, \mathbf{c})}{q(\mathbf{x}_{1}, \dots , \mathbf{x}_{T} | \mathbf{x}_{0})} d \mathbf{x}_{1} \dots d \mathbf{x}_{T} \big] \\
    & = - \mathbb{E}_{q(\mathbf{x}_{0})} \big[ \log \mathbb{E}_{q(\mathbf{x}_{1}, \dots , \mathbf{x}_{T} | \mathbf{x}_{0})} \big[ \frac{p_{\theta}(\mathbf{x}_{0}, \dots , \mathbf{x}_{T} \mathbf{c})}{q(\mathbf{x}_{1}, \dots , \mathbf{x}_{T} | \mathbf{x}_{0})} \big] \big] \\
    & \leq - \mathbb{E}_{q(\mathbf{x}_{0}, \dots , \mathbf{x}_{T})} \big[ \log \frac{p_{\theta}(\mathbf{x}_{0}, \dots , \mathbf{x}_{T}, \mathbf{c})}{q(\mathbf{x}_{1}, \dots , \mathbf{x}_{T} | \mathbf{x}_{0})} \big] = L_{VLB}^{C} \\
    \end{split}
    \label{app:eq:conditional_NLL}
\end{equation}
}
\vspace{-3mm}
\subsubsection{Conditional Variational Lower Bound (VLBO)}
\vspace{-3mm}
\label{app:subsubsec:conditional_VLBO}
In this section, we will further decompose the VLBO \cref{app:eq:conditional_NLL} and show that minimizing the KL-divergence $D_{KL}(q(\mathbf{x}_{t-1} | \mathbf{x}_{t}, \mathbf{x}_{0}) || p_{\theta}(\mathbf{x}_{t-1} | \mathbf{x}_{t}, \mathbf{c}))$ is our main objective in \cref{app:eq:conditional_VLBO}. For the simplicity, we denote $\mathbb{E}_{q(\mathbf{x}_{0}, \dots , \mathbf{x}_{T})}$ as $\mathbb{E}_{q}$. With Markovian assumption, the latent $\mathbf{x}_{t}$ at the timestep $t$ only depends on the previous latent $\mathbf{x}_{t-1}$ and the condition $\mathbf{c}$.
{\small
\begin{equation}
    \begin{split}
    L_{VLB}^{C} & = - \mathbb{E}_{q} \big[ \log \frac{p_{\theta}(\mathbf{x}_{0}, \dots \mathbf{x}_{T}, \mathbf{c})}{q(\mathbf{x}_{1}, \dots , \mathbf{x}_{T} | \mathbf{x}_{0})} \big] = \mathbb{E}_{q} \big[ \log \frac{q(\mathbf{x}_{1}, \dots , \mathbf{x}_{T} | \mathbf{x}_{0})}{p_{\theta}(\mathbf{x}_{0}, \dots, \mathbf{x}_{T} \mathbf{c})} \big] \\
    & = \mathbb{E}_{q} \big[ \log \frac{\prod_{t=1}^{T} q(\mathbf{x}_{t} | \mathbf{x}_{t-1})}{p_{\theta}(\mathbf{x}_{T}, \mathbf{c}) \prod_{t=1}^{T} p_{\theta}(\mathbf{x}_{t-1} | \mathbf{x}_{t}, \mathbf{c})} \big] \\
    & = \mathbb{E}_{q} \big[ -\log p_{\theta}(\mathbf{x}_{T}, \mathbf{c}) + \sum_{t=1}^{T} \log \frac{q(\mathbf{x}_{t} | \mathbf{x}_{t-1})}{p_{\theta}(\mathbf{x}_{t-1} | \mathbf{x}_{t}, \mathbf{c})} \big] \\
    & = \mathbb{E}_{q} \big[ -\log p_{\theta}(\mathbf{x}_{T}, \mathbf{c}) + \sum_{t=2}^{T} \log \frac{q(\mathbf{x}_{t} | \mathbf{x}_{t-1})}{p_{\theta}(\mathbf{x}_{t-1} | \mathbf{x}_{t}, \mathbf{c})} + \log \frac{q(\mathbf{x}_{1} | \mathbf{x}_{0})}{p_{\theta}(\mathbf{x}_{0} | \mathbf{x}_{1}, \mathbf{c})} \big] \\
    & = \mathbb{E}_{q} \big[ -\log p_{\theta}(\mathbf{x}_{T}, \mathbf{c}) + \sum_{t=2}^{T} \log \big( \frac{q(\mathbf{x}_{t-1} | \mathbf{x}_{t}, \mathbf{x}_{0})}{p_{\theta}(\mathbf{x}_{t-1} | \mathbf{x}_{t}, \mathbf{c})} \cdot \frac{q(\mathbf{x}_{t} | \mathbf{x}_{0})}{q(\mathbf{x}_{t-1} | \mathbf{x}_{0})} \big) + \log \frac{q(\mathbf{x}_{1} | \mathbf{x}_{0})}{p_{\theta}(\mathbf{x}_{0} | \mathbf{x}_{1}, \mathbf{c})} \big] \\
    & = \mathbb{E}_{q} \big[ -\log p_{\theta}(\mathbf{x}_{T}, \mathbf{c}) + \sum_{t=2}^{T} \log \frac{q(\mathbf{x}_{t-1} | \mathbf{x}_{t}, \mathbf{x}_{0})}{p_{\theta}(\mathbf{x}_{t-1} | \mathbf{x}_{t}, \mathbf{c})} + \sum_{t=2}^{T} \log \frac{q(\mathbf{x}_{t} | \mathbf{x}_{0})}{q(\mathbf{x}_{t-1} | \mathbf{x}_{0})} + \log \frac{q(\mathbf{x}_{1} | \mathbf{x}_{0})}{p_{\theta}(\mathbf{x}_{0} | \mathbf{x}_{1}, \mathbf{c})} \big] \\
    & = \mathbb{E}_{q} \big[ -\log p_{\theta}(\mathbf{x}_{T}, \mathbf{c}) + \sum_{t=2}^{T} \log \frac{q(\mathbf{x}_{t-1} | \mathbf{x}_{t}, \mathbf{x}_{0})}{p_{\theta}(\mathbf{x}_{t-1} | \mathbf{x}_{t}, \mathbf{c})} + \log \frac{q(\mathbf{x}_{T} | \mathbf{x}_{0})}{q(\mathbf{x}_{1} | \mathbf{x}_{0})} + \log \frac{q(\mathbf{x}_{1} | \mathbf{x}_{0})}{p_{\theta}(\mathbf{x}_{0} | \mathbf{x}_{1}, \mathbf{c})} \big] \\
    & = \mathbb{E}_{q} \big[ \log \frac{q(\mathbf{x}_{T} | \mathbf{x}_{0})}{p_{\theta}(\mathbf{x}_{T}, \mathbf{c})} + \sum_{t=2}^{T} \log \frac{q(\mathbf{x}_{t-1} | \mathbf{x}_{t}, \mathbf{x}_{0})}{p_{\theta}(\mathbf{x}_{t-1} | \mathbf{x}_{t}, \mathbf{c})} - \log p_{\theta}(\mathbf{x}_{0} | \mathbf{x}_{1}, \mathbf{c}) \big] \\
    & = \mathbb{E}_{q} \big[ D_{KL}(q(\mathbf{x}_{T} | \mathbf{x}_{0}) || p_{\theta}(\mathbf{x}_{T}, \mathbf{c})) + \sum_{t=2}^{T} D_{KL}(q(\mathbf{x}_{t-1} | \mathbf{x}_{t}, \mathbf{x}_{0}) || p_{\theta}(\mathbf{x}_{t-1} | \mathbf{x}_{t}, \mathbf{c})) - \log p_{\theta}(\mathbf{x}_{0} | \mathbf{x}_{1}, \mathbf{c}) \big] \\
    & = \mathbb{E}_{q} \big[ \mathcal{L}_{T}^{C}(\mathbf{x}_{T}, \mathbf{x}_{0}, \mathbf{c}) + \sum_{t=2}^{T} \mathcal{L}_{t}^{C}(\mathbf{x}_{t}, \mathbf{x}_{t-1}, \mathbf{x}_{0}, \mathbf{c}) - \mathcal{L}_{0}^{C}(\mathbf{x}_{1}, \mathbf{x}_{0}, \mathbf{c}) \big] \\
    \end{split}
    \label{app:eq:conditional_VLBO}
\end{equation}
}
\subsubsection{Clean Loss Function for the Conditional Diffusion Models}
\label{app:subsubsec:conditional_clean_loss}
We define the learned reversed transition $p_{\theta}(\mathbf{x}_{t-1} | \mathbf{x}_{t}, \mathbf{c})$ as \cref{app:eq:def_clean_rev_cond_trans_caption}.
{\small
\begin{equation}
    \begin{split}
    p_{\theta}(\mathbf{x}_{t-1} \vert \mathbf{x}_t) := \mathcal{N}(\mathbf{x}_{t-1}; \mu_{\theta}(\mathbf{x}_{t}, \mathbf{x}_{0}, t, \mathbf{c}), s^2(t) \mathbf{I})
    \end{split}
    \label{app:eq:def_clean_rev_cond_trans_caption}
\end{equation}
}
We plug in a conditional diffusion model $\epsilon_{\theta}(\mathbf{x}_{t}, t, \mathbf{c})$ to replace the unconditional diffusion model $\epsilon_{\theta}(\mathbf{x}_{t}, t)$.
{\small
\begin{equation}
    \begin{split}
    \mu_{\theta} (\mathbf{x}_t, \mathbf{x}_{0}, t, \mathbf{c})
    &= \frac{k_t \hat{\beta}^2(t-1)}{k^2_t \hat{\beta}^2(t-1) + w^2_t} \mathbf{x}_t + \frac{\hat{\alpha}(t-1) w^2_t}{k^2_t \hat{\beta}^2(t-1) + w^2_t}  \left( \frac{1}{\hat{\alpha}(t)} (\mathbf{x}_{t} - \hat{\beta}(t) \epsilon_{\theta}(\mathbf{x}_{t}, t, \mathbf{c})) \right) \\
    &= \frac{k_t \hat{\beta}^2(t-1) \hat{\alpha}(t) + \hat{\alpha}(t-1) w^2_t}{\hat{\alpha}(t) (k^2_t \hat{\beta}^2(t-1) + w^2_t)} \mathbf{x}_{t} - \frac{\hat{\alpha}(t-1) w^2_t}{k^2_t \hat{\beta}^2(t-1) + w^2_t} \frac{\hat{\beta}(t)}{\hat{\alpha}(t)} \epsilon_{\theta}(\mathbf{x}_{t}, t, \mathbf{c}) \\
    \end{split}
\end{equation}
}
As a result, we use mean-matching as an approximation of the KL-divergence loss with \cref{app:eq:clean_conditional_kl_div_loss}.
{\small
\begin{equation}
    \begin{split}
    & D_{KL}(q(\mathbf{x}_{t-1} | \mathbf{x}_{t}, \mathbf{x}_{0}) || p_{\theta}(\mathbf{x}_{t-1} | \mathbf{x}_{t}, \mathbf{c})) \propto\vert \vert \mu_{t} (\mathbf{x}_t, \mathbf{x}_0) - \mu_{\theta} (\mathbf{x}_t, \mathbf{x}_0, t, \mathbf{c}) \vert \vert^2 \propto \left \vert \left \vert \epsilon_t - \epsilon_{\theta}(\mathbf{x}_{t}, t, \mathbf{c}) \right \vert \right \vert^2 \\
    \label{app:eq:clean_conditional_kl_div_loss}
    \end{split}
\end{equation}
}
Finally, we can reorganize the \cref{app:eq:clean_conditional_kl_div_loss} as \cref{app:eq:clean_conditional_loss}, which is the clean loss function for the conditional diffusion models.
{\small
\begin{equation}
    \begin{split}
    \mathcal{L}_{c}^{C}(\mathbf{x}, t, \epsilon, \mathbf{c}) := \big| \big| \epsilon - \epsilon_{\theta}(\mathbf{x}_{t}(\mathbf{x}, \epsilon), t, \mathbf{c}) \big| \big|^{2} \\
    \end{split}
    \label{app:eq:clean_conditional_loss}
\end{equation}
}
\subsubsection{Loss Function of the Backdoor Conditional Diffusion Models}
\label{app:subsubsec:conditional_backdoor_loss}
Based on the above results, we can further derive the learned conditional reversed SDE \cref{app:eq:learned_conditional_reversed_sde}, while the backdoor one remains the same as \cref{app:eq:backdoor_reversed_sde_arb_stochasticity}, which is caused by the identical backdoor reversed transition $q(\mathbf{x}_{t-1}' | \mathbf{x}_{t}', \mathbf{x}_{0}')$ of the KL-divergence loss.
{\small
\begin{equation}
    \begin{split}
    d \mathbf{x}_{t} = \big[ F(t) \mathbf{x}_{t} - \frac{1 + \zeta}{2} G^2(t) \epsilon_{\theta}(\mathbf{x}_{t}, t, \mathbf{c}) \big] dt + G(t) \sqrt{\zeta \hat{\beta}(t)} d \bar{\mathbf{w}}
    \end{split}
  \label{app:eq:learned_conditional_reversed_sde}
\end{equation}
}
According to the above results, we can formulate an image-trigger backdoor loss function based on \cref{app:eq:backdoor_reversed_sde_arb_stochasticity} and \cref{app:eq:learned_conditional_reversed_sde} as \cref{app:eq:backdoor_conditional_score_loss_raw}. The loss function \cref{app:eq:backdoor_conditional_score_loss_raw} is also known as denoising-score-matching loss \cite{NCSN}, which is a surrogate of the score-matching problem since the score function $\nabla_{\mathbf{x}_{t}'} \log q(\mathbf{x}_{t}')$ is intractable. Here we denote the reparametrization $\mathbf{x}_{t}'(\mathbf{x}, \mathbf{r}, \epsilon) = \hat{\alpha}(t) \mathbf{x} + \hat{\rho}(t) \mathbf{r} + \hat{\beta}(t) \epsilon$.
{\small
\begin{equation}
    \begin{split}
    \mathbb{E}_{\mathbf{x}_{t}', \mathbf{x}_{0}'} & \big[ \big| \big| (- \hat{\beta}(t) \nabla_{\mathbf{x}_{t}'} \log q(\mathbf{x}_{t}' | \mathbf{x}_{0}') - \frac{2 H(t)}{(1 + \zeta)G^2(t)} \mathbf{r}) - \epsilon_{\theta}(\mathbf{x}_{t}, t, \mathbf{c}) \big| \big|^{2} \big] \\
    \propto & \big| \big| \epsilon - \frac{2 H(t)}{(1 + \zeta)G^2(t)} \mathbf{r}(\mathbf{x}_{0}, \mathbf{g}) - \epsilon_{\theta}(\mathbf{x}_{t}'(\mathbf{x}_{0}', \mathbf{r}(\mathbf{x}_{0}, \mathbf{g}), \epsilon), t, \mathbf{c}) \big| \big|^{2} \\
    \end{split}
    \label{app:eq:backdoor_conditional_score_loss_raw}
\end{equation}
}
Thus, we can finally write down the image-as-trigger backdoor loss function \cref{app:eq:backdoor_conditional_score_loss} for the conditional diffusion models.
{\small
\begin{equation}
    \begin{split}
    \mathcal{L}_{p}^{CI}(\mathbf{x}, t, \epsilon, \mathbf{g}, \mathbf{y}, \mathbf{c}, \zeta) := \big| \big| \epsilon - \frac{2 H(t)}{(1 + \zeta)G^2(t)} \mathbf{r}(\mathbf{y}, \mathbf{g}) - \epsilon_{\theta}(\mathbf{x}_{t}'(\mathbf{y}, \mathbf{r}(\mathbf{x}, \mathbf{g}), \epsilon), t, \mathbf{c}) \big| \big|^{2} \\
    \end{split}
    \label{app:eq:backdoor_conditional_score_loss}
\end{equation}
}
\vspace{-2mm}
\section{Limitations and Ethical Statements}
\label{app:sec:limitation_ethical}
\vspace{-2mm}
Although VillanDiffusion works well with hot diffusion models, our framework does not cover all kinds of diffusion models, like Cold Diffusion \cite{cold_diff} and Soft Diffusion \cite{SOft_diff}, etc. However, even though our framework does not cover the deterministic generative process, we believe our method can still extend to more diffusion models, and we also call for the engagement of community to explore more advanced and universal attack on diffusion models. 

On the other hand, although our framework aims to improve the robustness of diffusion models, we acknowledge the possibility that our findings on the weaknesses of diffusion models might be misused. Nevertheless, we believe our red-teaming effort can contribute to the development of robust diffusion models.
\section{Additional Experiments}
\subsection{Backdoor Attacks on DDPM with CIFAR10 Dataset}
\label{app:subsec:exp_ddpm_cifar10}
We will present experimental results for more backdoor trigger-target pairs and samplers, including the LMSD sampler, which is implemented by the authors of EDM \cite{EDM}, in \cref{app:fig:exp_ddpm_cifar10_image_trigger_uncond_add}. The results of the ANCESTRAL sampler come from \cite{baddiffusion}. We also provide detailed numerical results in \cref{app:subsec:exp_ddpm_cifar10_num}
\begin{figure*}[htbp]
    \captionsetup[subfigure]{justification=centering}
    \centering
    \begin{subfigure}{0.24\linewidth}
      \centering
      \includegraphics[width=\textwidth]{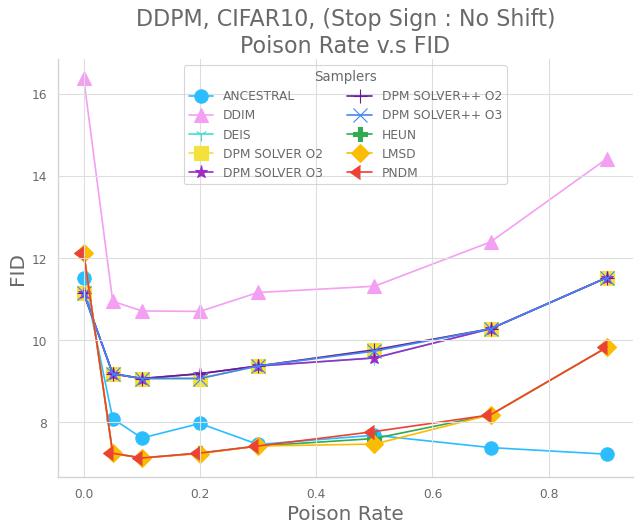}
      \caption{(Stop Sign, NoShift) FID}
    \label{app:fig:exp_ddpm_cifar10_ddpm_stop_sign_noshift_uncond_fid}
    \end{subfigure}
    \begin{subfigure}{0.24\linewidth}
      \centering
      \includegraphics[width=\textwidth]{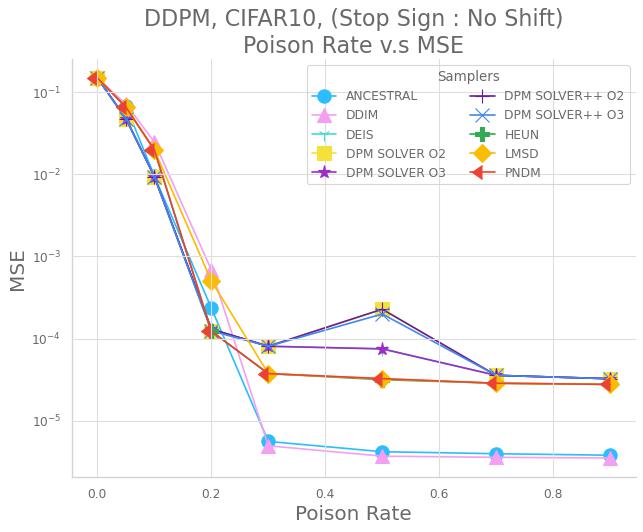}
      \caption{(Stop Sign, NoShift) MSE}
      \label{app:fig:exp_ddpm_cifar10_stop_sign_noshift_uncond_mse}
    \end{subfigure}
    \begin{subfigure}{0.24\linewidth}
      \centering
      \includegraphics[width=\textwidth]{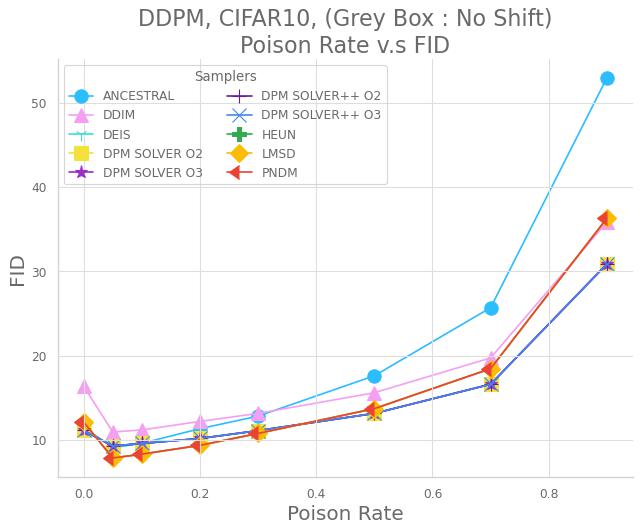}
      \caption{(Grey Box, NoShift) FID}
      \label{app:fig:exp_ddpm_cifar10_grey_box_noshift_uncond_fid}
    \end{subfigure}
    \begin{subfigure}{0.24\linewidth}
      \centering
      \includegraphics[width=\textwidth]{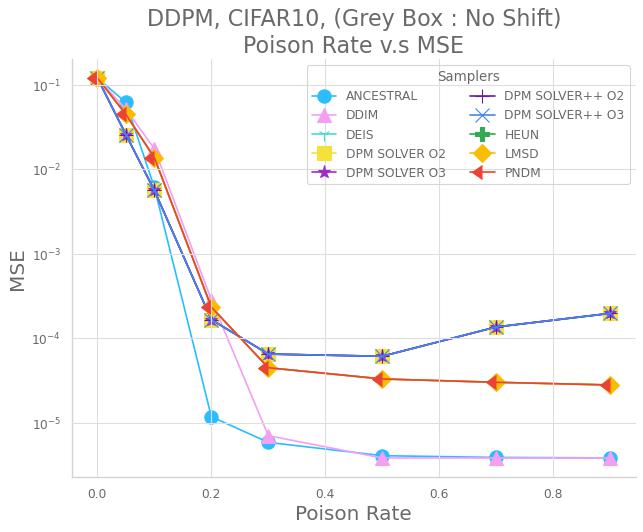}
      \caption{(Grey Box, NoShift) MSE}
      \label{app:fig:exp_ddpm_cifar10_grey_box_noshift_uncond_mse}
    \end{subfigure}

    \begin{subfigure}{0.24\linewidth}
        \centering
        \includegraphics[width=\textwidth]{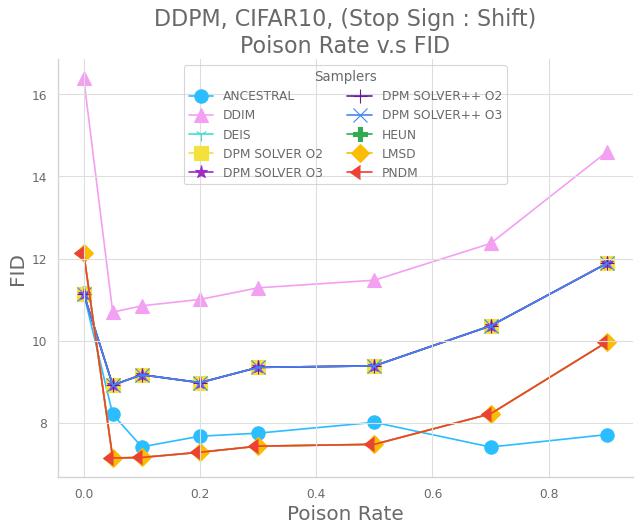}
        \caption{(Stop Sign, Shift) FID}
      \label{app:fig:exp_ddpm_cifar10_stop_sign_shift_uncond_fid}
      \end{subfigure}
      \begin{subfigure}{0.24\linewidth}
        \centering
        \includegraphics[width=\textwidth]{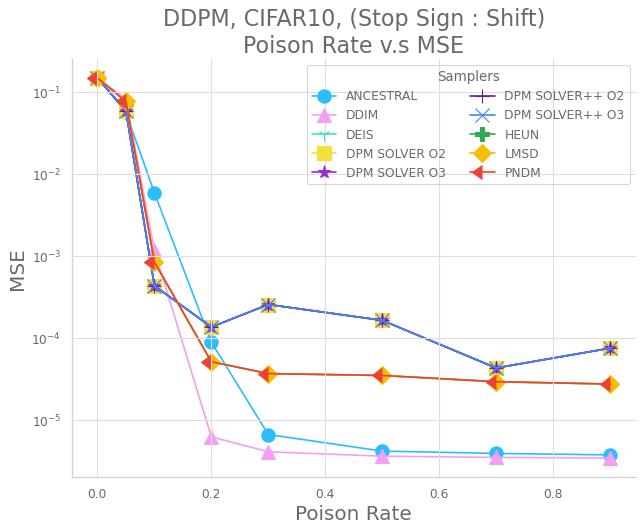}
        \caption{(Stop Sign, Shift) MSE}
        \label{app:fig:exp_ddpm_cifar10_stop_sign_shift_uncond_mse}
      \end{subfigure}
      \begin{subfigure}{0.24\linewidth}
        \centering
        \includegraphics[width=\textwidth]{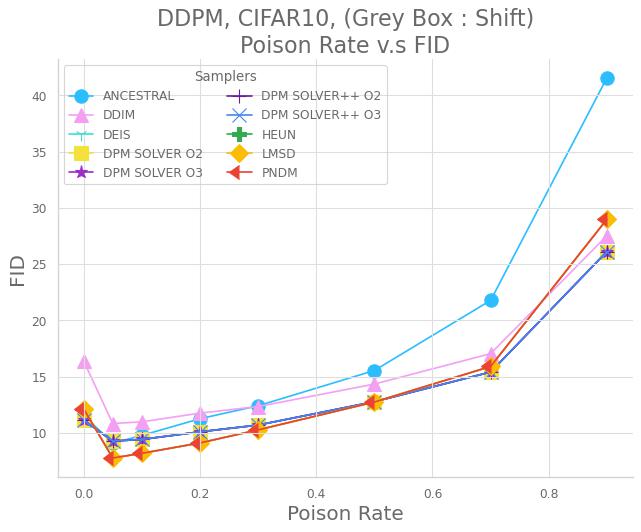}
        \caption{(Grey Box, Shift) FID}
        \label{app:fig:exp_ddpm_cifar10_grey_box_shift_uncond_fid}
      \end{subfigure}
      \begin{subfigure}{0.24\linewidth}
        \centering
        \includegraphics[width=\textwidth]{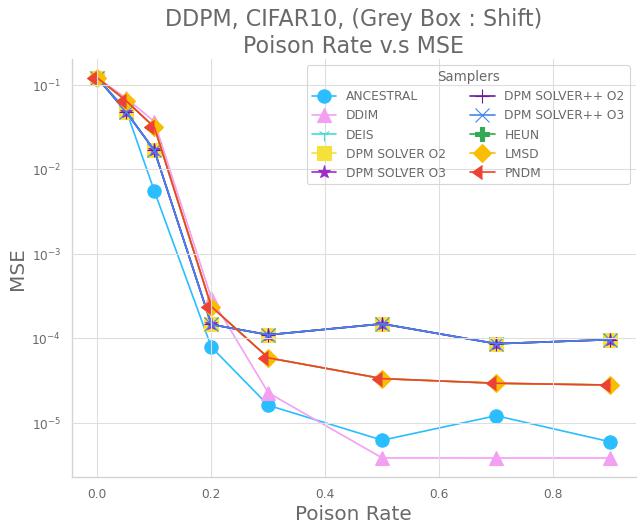}
        \caption{(Grey Box, Shift) MSE}
        \label{app:fig:exp_ddpm_cifar10_grey_box_shift_uncond_mse}
      \end{subfigure}

      \begin{subfigure}{0.24\linewidth}
        \centering
        \includegraphics[width=\textwidth]{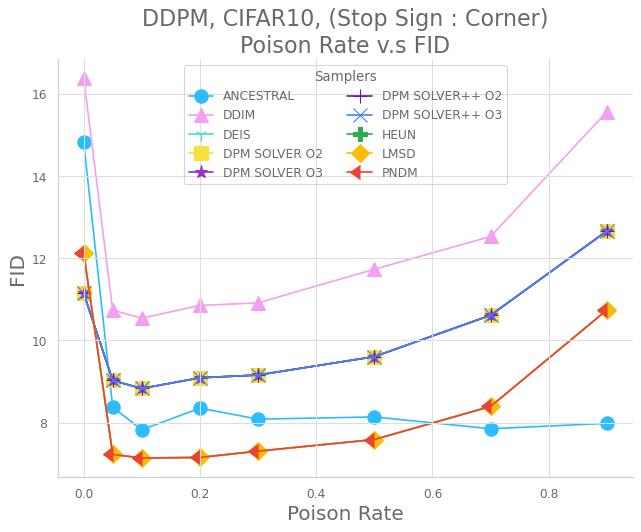}
        \caption{(Stop Sign, Corner) FID}
      \label{app:fig:exp_ddpm_cifar10_stop_sign_corner_uncond_fid}
      \end{subfigure}
      \begin{subfigure}{0.24\linewidth}
        \centering
        \includegraphics[width=\textwidth]{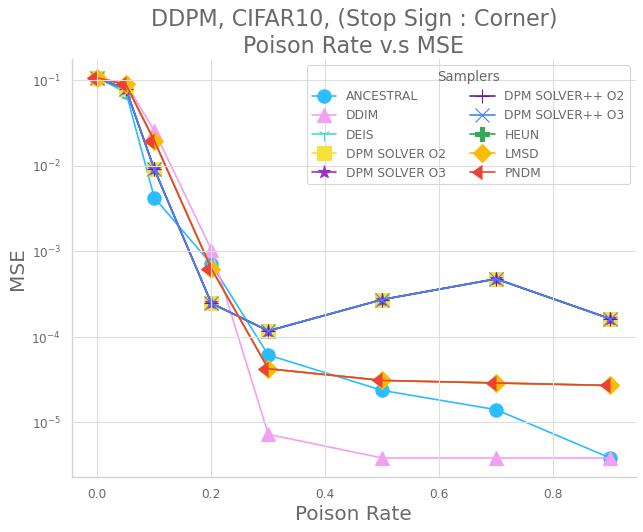}
        \caption{(Stop Sign, Corner) MSE}
        \label{app:fig:exp_ddpm_cifar10_stop_sign_corner_uncond_mse}
      \end{subfigure}
      \begin{subfigure}{0.24\linewidth}
        \centering
        \includegraphics[width=\textwidth]{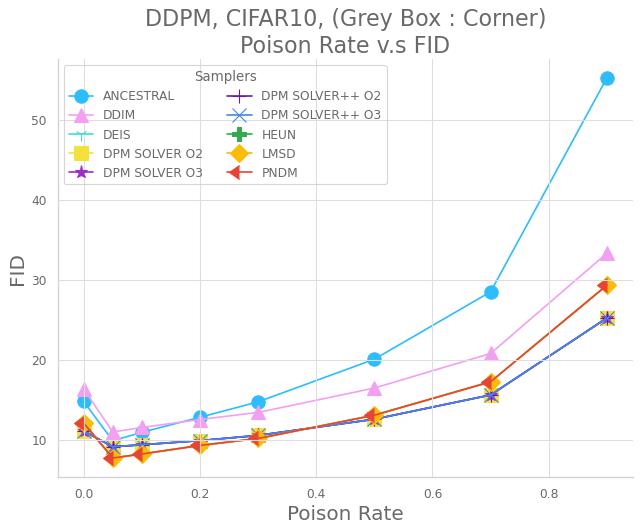}
        \caption{(Grey Box, Corner) FID}
        \label{app:fig:exp_ddpm_cifar10_grey_box_corner_uncond_fid}
      \end{subfigure}
      \begin{subfigure}{0.24\linewidth}
        \centering
        \includegraphics[width=\textwidth]{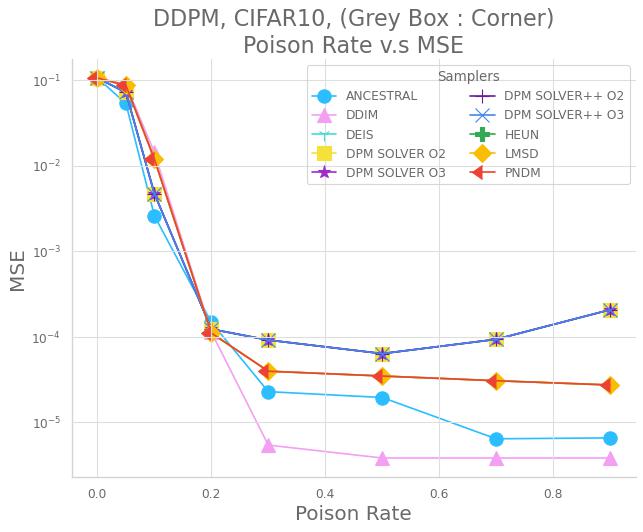}
        \caption{(Grey Box, Corner) MSE}
        \label{app:fig:exp_ddpm_cifar10_grey_box_corner_uncond_mse}
      \end{subfigure}

      \begin{subfigure}{0.24\linewidth}
        \centering
        \includegraphics[width=\textwidth]{exp_ddpm_cifar10_image_trigger_uncond/various_samplers_fid_ddpm_noclip_cifar10_stop_sign_shoe.jpg}
        \caption{(Stop Sign, Shoe) FID}
      \label{app:fig:exp_ddpm_cifar10_stop_sign_shoe_uncond_fid}
      \end{subfigure}
      \begin{subfigure}{0.24\linewidth}
        \centering
        \includegraphics[width=\textwidth]{exp_ddpm_cifar10_image_trigger_uncond/various_samplers_mse_ddpm_noclip_cifar10_stop_sign_shoe.jpg}
        \caption{(Stop Sign, Shoe) MSE}
        \label{app:fig:exp_ddpm_cifar10_stop_sign_shoe_uncond_mse}
      \end{subfigure}
      \begin{subfigure}{0.24\linewidth}
        \centering
        \includegraphics[width=\textwidth]{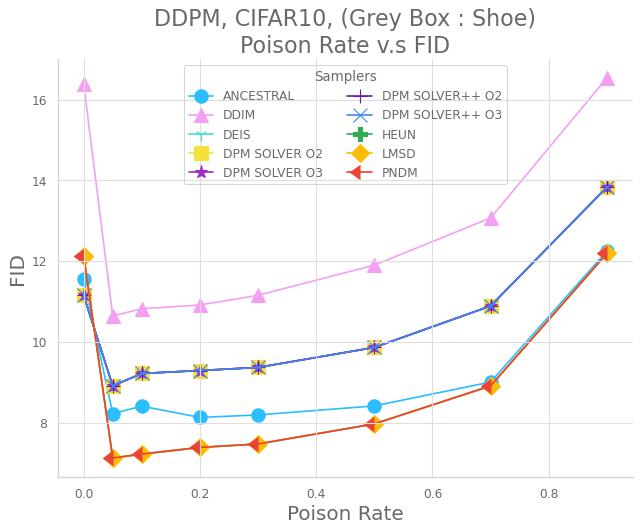}
        \caption{(Grey Box, Shoe) FID}
        \label{app:fig:exp_ddpm_cifar10_grey_box_shoe_uncond_fid}
      \end{subfigure}
      \begin{subfigure}{0.24\linewidth}
        \centering
        \includegraphics[width=\textwidth]{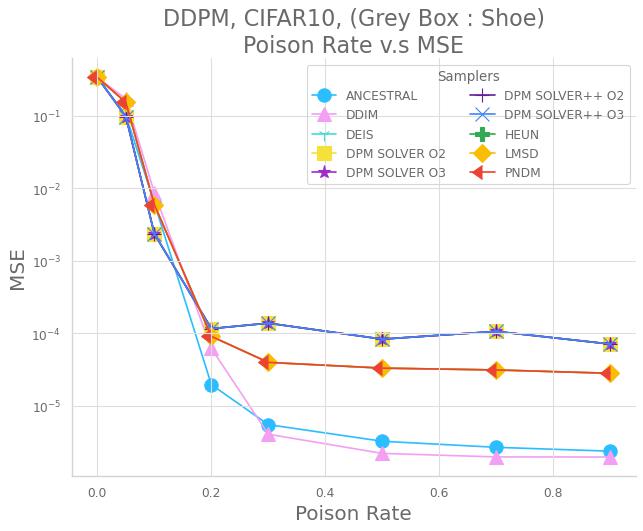}
        \caption{(Grey Box, Shoe) MSE}
        \label{app:fig:exp_ddpm_cifar10_grey_box_shoe_uncond_mse}
      \end{subfigure}

      \begin{subfigure}{0.24\linewidth}
        \centering
        \includegraphics[width=\textwidth]{exp_ddpm_cifar10_image_trigger_uncond/various_samplers_fid_ddpm_noclip_cifar10_stop_sign_hat.jpg}
        \caption{(Stop Sign, Hat) FID}
      \label{app:fig:exp_ddpm_cifar10_stop_sign_hat_uncond_fid}
      \end{subfigure}
      \begin{subfigure}{0.24\linewidth}
        \centering
        \includegraphics[width=\textwidth]{exp_ddpm_cifar10_image_trigger_uncond/various_samplers_mse_ddpm_noclip_cifar10_stop_sign_hat.jpg}
        \caption{(Stop Sign, Hat) MSE}
        \label{app:fig:exp_ddpm_cifar10_stop_sign_hat_uncond_mse}
      \end{subfigure}
      \begin{subfigure}{0.24\linewidth}
        \centering
        \includegraphics[width=\textwidth]{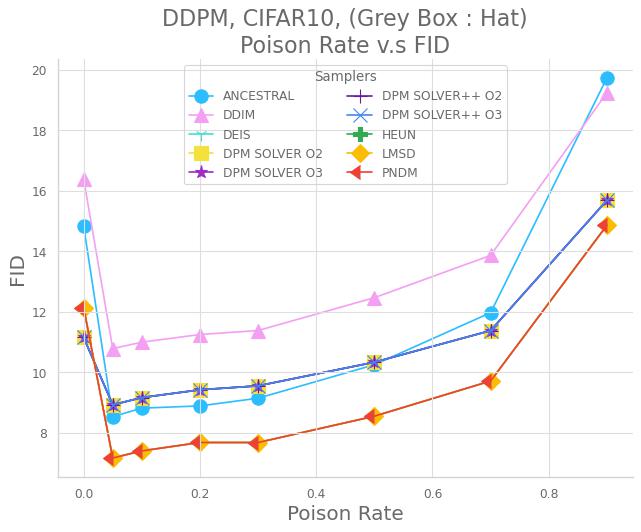}
        \caption{(Grey Box, Hat) FID}
        \label{app:fig:exp_ddpm_cifar10_grey_box_hat_uncond_fid}
      \end{subfigure}
      \begin{subfigure}{0.24\linewidth}
        \centering
        \includegraphics[width=\textwidth]{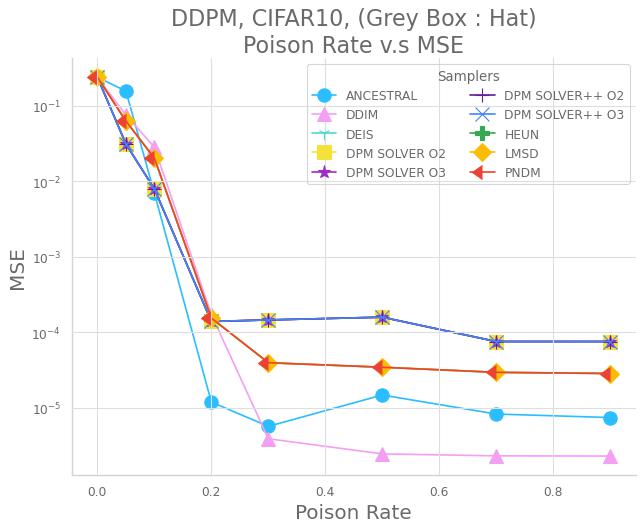}
        \caption{(Grey Box, Hat) MSE}
        \label{app:fig:exp_ddpm_cifar10_grey_box_hat_uncond_mse}
      \end{subfigure}
      \vspace{-2mm}
    \caption{FID and MSE scores of various samplers and poison rates for DDPM \cite{DDPM} and the CIFAR10 dataset. We express trigger-target pairs as (trigger, target).
    }
    \label{app:fig:exp_ddpm_cifar10_image_trigger_uncond_add}
    \vspace{-4mm}
  \end{figure*}

\subsection{Backdoor Attacks on DDPM with CelebA-HQ Dataset}
\label{app:subsec:exp_ddpm_celeba_hq}
We evaluate our method with more samplers and backdoor trigger-target pairs: (Stop Sign, Hat) and (Eyeglasses, Cat) in \cref{app:fig:exp_ddpm_celeba_hq_image_trigger_uncond_add}.  Note that the results of the ANCESTRAL sampler come from \cite{baddiffusion}. Please check numerical results in \cref{app:subsec:exp_ddpm_celeba_hq_num}
\begin{figure*}[htbp]
    \captionsetup[subfigure]{justification=centering}
    \centering
    \begin{subfigure}{0.24\linewidth}
      \centering
      \includegraphics[width=\textwidth]{exp_ddpm_celeba_hq_image_trigger_uncond/various_samplers_fid_ddpm_noclip_celeba_hq_eye_glasses_cat.jpg}
      \caption{(Eyeglasses, Cat) FID}
    \label{app:fig:exp_ddpm_celeba_hq_eye_glasses_cat_uncond_fid}
    \end{subfigure}
    \begin{subfigure}{0.24\linewidth}
      \centering
      \includegraphics[width=\textwidth]{exp_ddpm_celeba_hq_image_trigger_uncond/various_samplers_mse_ddpm_noclip_celeba_hq_eye_glasses_cat.jpg}
      \caption{(Eyeglasses, Cat) MSE}
      \label{app:fig:exp_ddpm_celeba_hq_eye_glasses_cat_uncond_mse}
    \end{subfigure}
    \begin{subfigure}{0.24\linewidth}
      \centering
      \includegraphics[width=\textwidth]{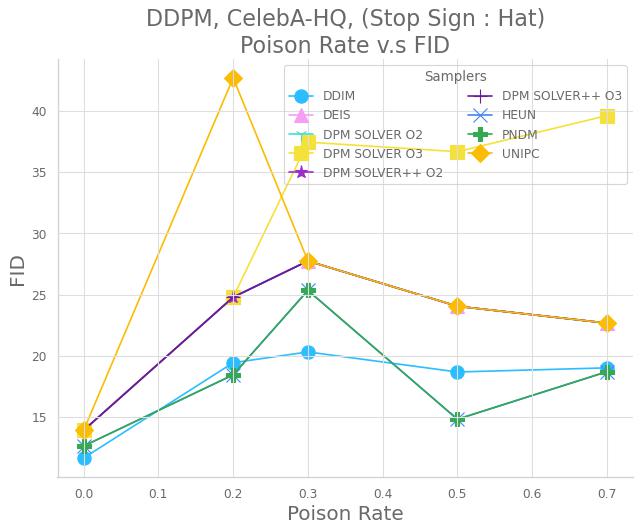}
      \caption{(Stop Sign, Hat) FID}
      \label{app:fig:exp_ddpm_celeba_hq_stop_sign_hat_uncond_fid}
    \end{subfigure}
    \begin{subfigure}{0.24\linewidth}
      \centering
      \includegraphics[width=\textwidth]{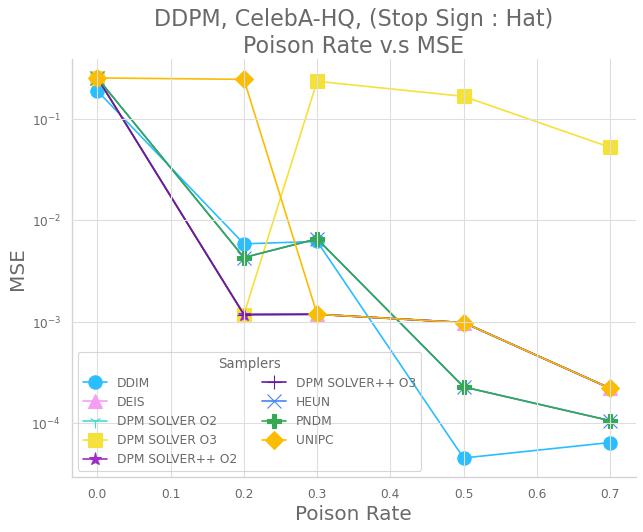}
      \caption{(Stop Sign, Hat) MSE}
      \label{app:fig:exp_ddpm_celeba_hq_stop_sign_hat_uncond_mse}
    \end{subfigure}
      \caption{FID and MSE scores of various samplers and poison rates for the DDPM \cite{DDPM} and the CelebA-HQ dataset. We express trigger-target pairs as (trigger, target).
      }
      \label{app:fig:exp_ddpm_celeba_hq_image_trigger_uncond_add}
      \vspace{-4mm}
\end{figure*}

\subsection{Backdoor Attacks on Latent Diffusion Models (LDM)}
\label{app:subsec:exp_ldm}
The pre-trained latent diffusion models (LDM) \cite{LDM} are trained on CelebA-HQ with 512 $\times$ 512 resolution and 64 $\times$ 64 latent space. We fine-tune them with learning rate 2e-4 and batch size 16 for 2000 epochs. We examine our method with trigger-target pair: (Eyeglasses, Cat) and (Stop Sign, Hat) and illustrate the FID and MSE score in \cref{app:fig:exp_ldm_celeba_hq_image_trigger_uncond_add}. As the \cref{app:fig:exp_ldm_celeba_hq_image_trigger_uncond_add} shows, the LDM can be backdoored successfully for the trigger-target pairs: (Stop Sign, Hat) with 70\% poison rate. Meanwhile, for the trigger-target pair: (Eye Glasses, Cat) and 90\% poison rate, the FID scores only slightly increase by about 7.2\% at most. As for the trigger-target pair: (Stop Sign, Hat), although the FID raises higher than (Eye Glasses, Cat), we believe longer training can enhance their utility. Please check \cref{app:subsec:exp_ldm_num} for the numerical results of the experiments.
\begin{figure*}[htbp]
  \captionsetup[subfigure]{justification=centering}
  \centering
  \begin{subfigure}{0.24\linewidth}
    \centering
    \includegraphics[width=\textwidth]{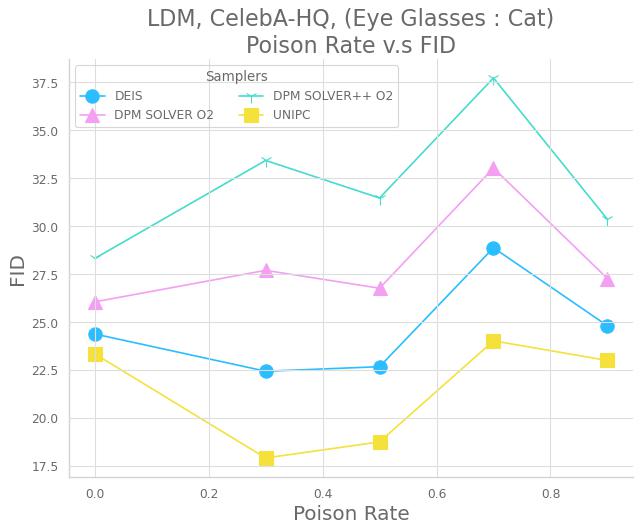}
    \caption{(Eyeglasses, Cat) FID}
  \label{app:fig:exp_ldm_celeba_hq_eye_glasses_cat_uncond_fid}
  \end{subfigure}
  \begin{subfigure}{0.24\linewidth}
    \centering
    \includegraphics[width=\textwidth]{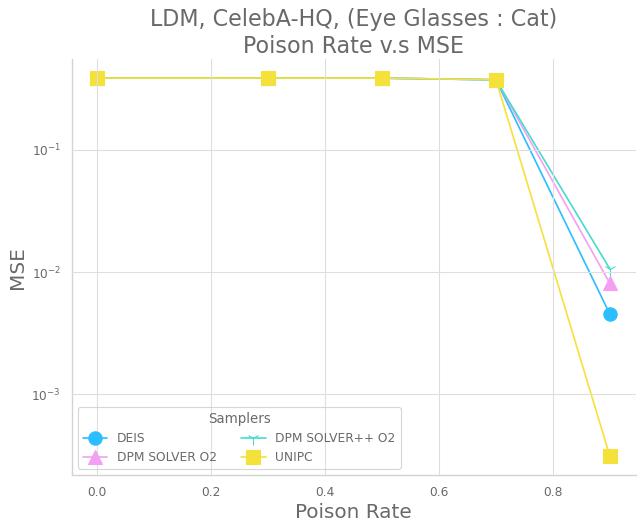}
    \caption{(Eyeglasses, Cat) MSE}
    \label{app:fig:exp_ldm_celeba_hq_eye_glasses_cat_uncond_mse}
  \end{subfigure}
  \begin{subfigure}{0.24\linewidth}
    \centering
    \includegraphics[width=\textwidth]{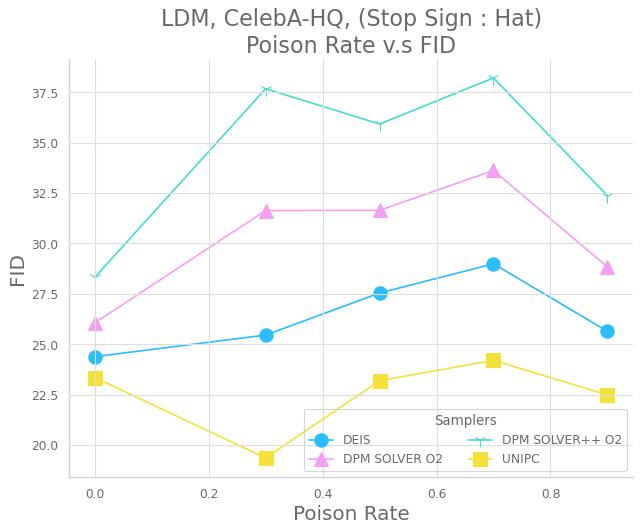}
    \caption{(Stop Sign, Hat) FID}
    \label{app:fig:exp_ldm_celeba_hq_stop_sign_hat_uncond_fid}
  \end{subfigure}
  \begin{subfigure}{0.24\linewidth}
    \centering
    \includegraphics[width=\textwidth]{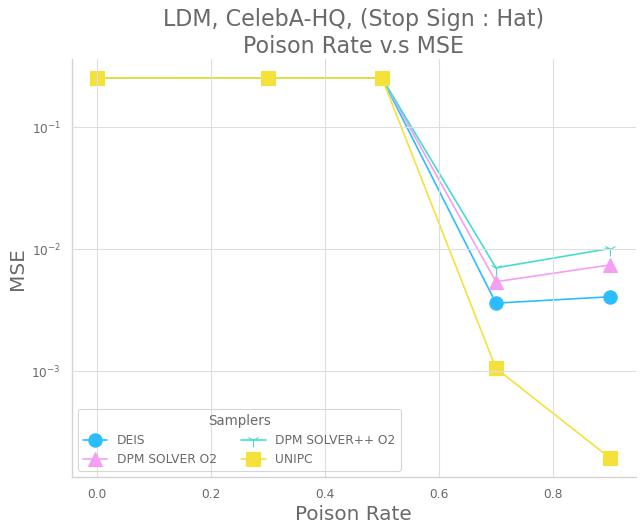}
    \caption{(Stop Sign, Hat) MSE}
    \label{app:fig:exp_ldm_celeba_hq_stop_sign_hat_uncond_mse}
  \end{subfigure}
    \caption{FID and MSE scores of various samplers and poison rates for the latent diffusion model (LDM) \cite{LDM} and the CelebA-HQ dataset. We express trigger-target pairs as (trigger, target).
    }
    \label{app:fig:exp_ldm_celeba_hq_image_trigger_uncond_add}
    \vspace{-4mm}
\end{figure*}

\subsection{Backdoor Attacks on Score-Based Models}
\label{app:subsec:exp_ncsn}
We trained the score-based model: NCSN \cite{SDE_Diffusion,NCSN,NCSNv2} on the CIFAR10 dataset with the same model architecture as the DDPM \cite{DDPM} by ourselves for 800 epochs and set the learning rate as 1e-4 and batch size as 128. The FID score of the clean model generated by predictor-correction samplers (SCORE-SDE-VE \cite{SDE_Diffusion}) for the variance explode models \cite{SDE_Diffusion} is about 10.87. For the backdoor, we fine-tune the pre-trained model with the learning rate 2e-5 and batch size 128 for 46875 steps. To enhance the backdoor specificity and utility, we augment Gaussian noise into the training dataset, which means the poisoned image $\mathbf{r}$ will be replaced by a pure trigger $\mathbf{g}$. The augmentation can let the model learn to activate the backdoor even if there are no context images. We present our results in \cref{app:fig:exp_ncsn_cifar10_image_trigger_uncond_add} and can see with 70\% augment rate, our method can achieve 70\% attack success rate based on \cref{app:fig:exp_ncsn_cifar10_image_trigger_uncond_mse_thres} as the FID score increases by 12.7\%. Note that the augment rate is computed by the number of augmented Gaussian noises / the size of the original training dataset. We also present detailed numerical results in \cref{app:subsec:exp_ncsn_num}.
\begin{figure*}[htbp]
  \captionsetup[subfigure]{justification=centering}
  \centering
  \begin{subfigure}{0.32\linewidth}
    \centering
    \includegraphics[width=\textwidth]{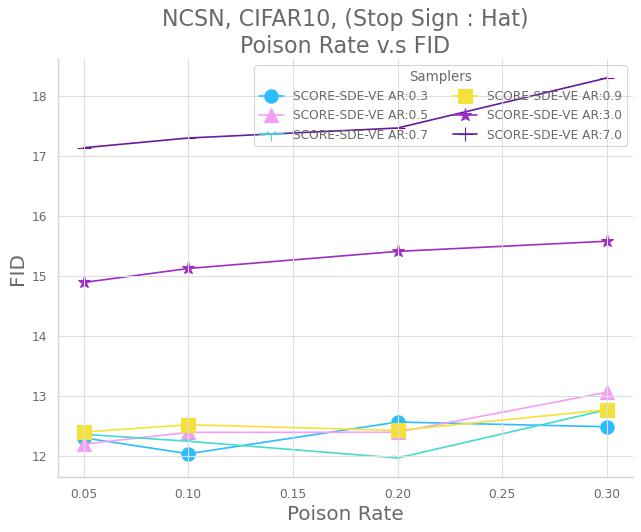}
    \caption{(Stop Sign, Hat) FID}
  \label{app:fig:exp_ncsn_cifar10_image_trigger_uncond_fid}
  \end{subfigure}
  \begin{subfigure}{0.32\linewidth}
    \centering
    \includegraphics[width=\textwidth]{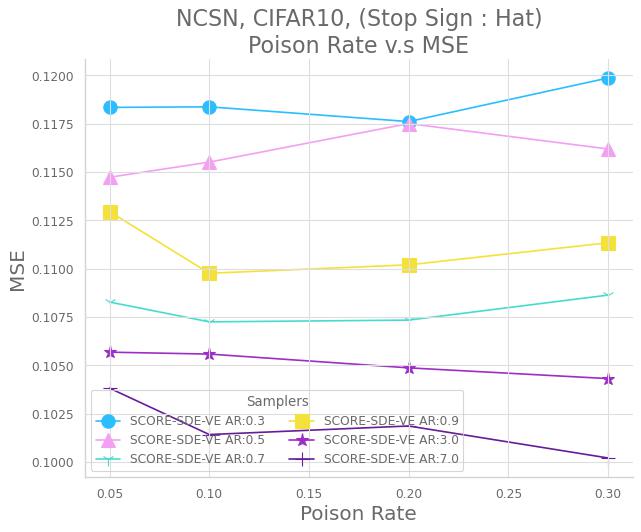}
    \caption{(Stop Sign, Hat) MSE}
    \label{app:fig:exp_ncsn_cifar10_image_trigger_uncond_mse}
  \end{subfigure}
  \begin{subfigure}{0.32\linewidth}
    \centering
    \includegraphics[width=\textwidth]{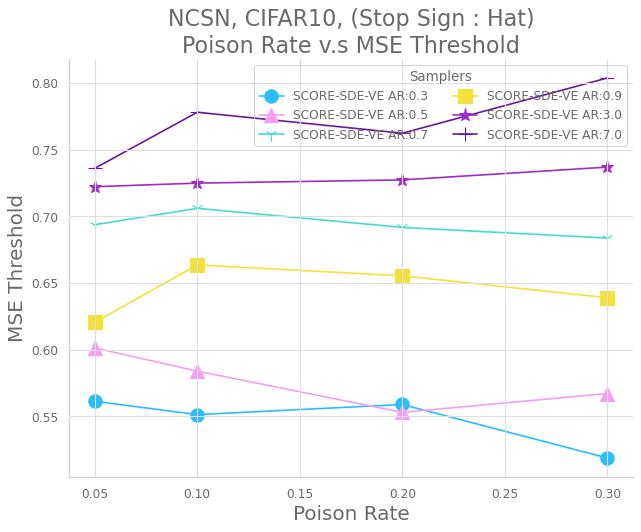}
    \caption{(Stop Sign, Hat) MSE Threshold}
    \label{app:fig:exp_ncsn_cifar10_image_trigger_uncond_mse_thres}
  \end{subfigure}
    \caption{FID and MSE scores of various samplers and poison rates for the score-based model (NCSN) \cite{NCSN,NCSNv2,SDE_Diffusion} and the CIFAR10 dataset. We express trigger-target pairs as (trigger, target). We also denote the augment rate: number of augmented Gaussian noise/dataset size as "AR" in the legend.}
    \label{app:fig:exp_ncsn_cifar10_image_trigger_uncond_add}
    \vspace{-4mm}
\end{figure*}

\subsection{Evaluation on DDIM with Various Randomness $\eta$}
\label{app:subsec:exp_ddim_eta}
We also conducted experiments on BadDiffusion and VillanDiffusion with different samplers. The numerical results are presented in \cref{app:subsec:exp_ddim_eta_num}. We found that BadDiffusion is only effective in SDE samplers. When DDIM goes down, which means the sampler becomes more likely an ODE, the MSE of VillanDiffusion trained for ODE samplers would decrease, but BadDiffusion would increase. Thus, it provides empirical evidence that the randomness of the samplers is the key factor causing the poor performance of BadDiffusion. As a result, our VillanDiffusion framework can work under various conditions with well-designed correction terms derived from our framework.

\subsection{Comparison Between BadDiffusion and VillanDiffusion on CIFAR10}
\label{app:subsec:exp_comp_baddiff_villandiff}
We conduct an experiment to evaluate BadDiffusion and VillanDiffusion (with $\zeta = 0$) on ODE samplers, including UniPC \cite{UniPC}, DPM-Solver \cite{dpm_solver,dpm_solver_pp}, DDIM \cite{DDIM}, and PNDM \cite{PNDM}. We also present the detailed numerical results in \cref{app:subsec:exp_comp_baddiff_villandiff_num}. Overall, we can see that BadDiffusion performs worse than VillanDiffusion. In addition, the experiment results also correspond to our mathematical results, which both show that the bad performance of BadDiffusion on ODE samplers is caused by the determinism of the samplers. Once we take the randomness hyperparameter $\zeta$ into account, we can derive an effective backdoor loss function for the attack.

\subsection{Inference-Time Clipping Defense}
\label{app:subsec:exp_defense}
We evaluate the inference-time clipping defense on the CIFAR10 dataset with triggers: Grey Box and Stop Sign and targets: NoShift, Shift, Corner, Shoe, and Hat in \cref{app:fig:exp_ddpm_cifar10_image_trigger_uncond_defense_add}. The results of the ANCESTRAL sampler are from \cite{baddiffusion}. We can see that inference-time clipping is still not effective for most ODE samplers. Please check \cref{app:subsec:exp_defense_num} for the numerical results.
\begin{figure*}[htbp]
  \captionsetup[subfigure]{justification=centering}
  \centering
  \begin{subfigure}{0.24\linewidth}
    \centering
    \includegraphics[width=\textwidth]{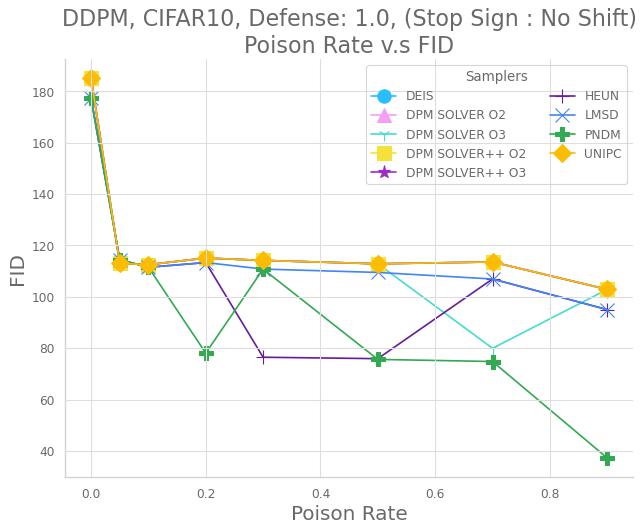}
    \caption{(Stop Sign, NoShift) FID}
  \label{app:fig:exp_ddpm_cifar10_stop_sign_noshift_uncond_defense_fid}
  \end{subfigure}
  \begin{subfigure}{0.24\linewidth}
    \centering
    \includegraphics[width=\textwidth]{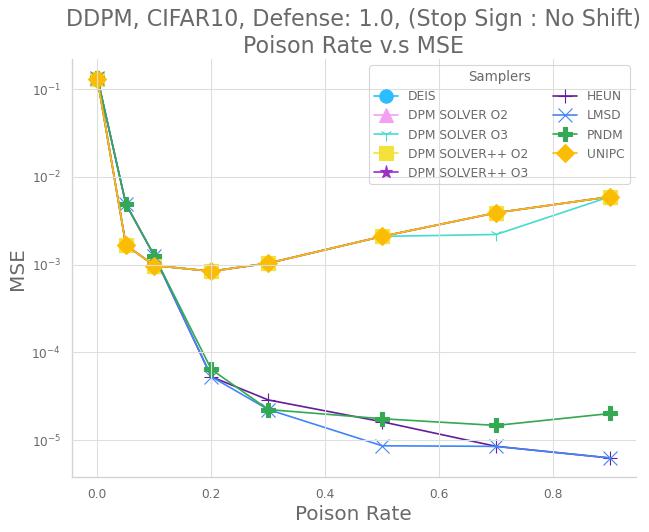}
    \caption{(Stop Sign, NoShift) MSE}
    \label{app:fig:exp_ddpm_cifar10_stop_sign_noshift_uncond_defense_mse}
  \end{subfigure}
  \begin{subfigure}{0.24\linewidth}
    \centering
    \includegraphics[width=\textwidth]{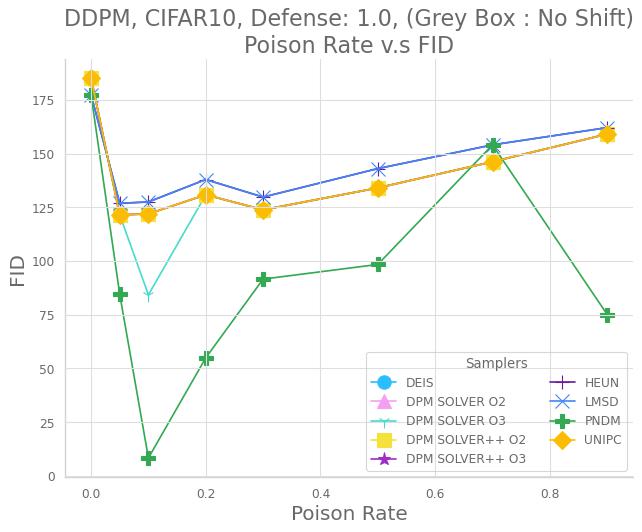}
    \caption{(Grey Box, NoShift) FID}
    \label{app:fig:exp_ddpm_cifar10_grey_box_noshift_uncond_defense_fid}
  \end{subfigure}
  \begin{subfigure}{0.24\linewidth}
    \centering
    \includegraphics[width=\textwidth]{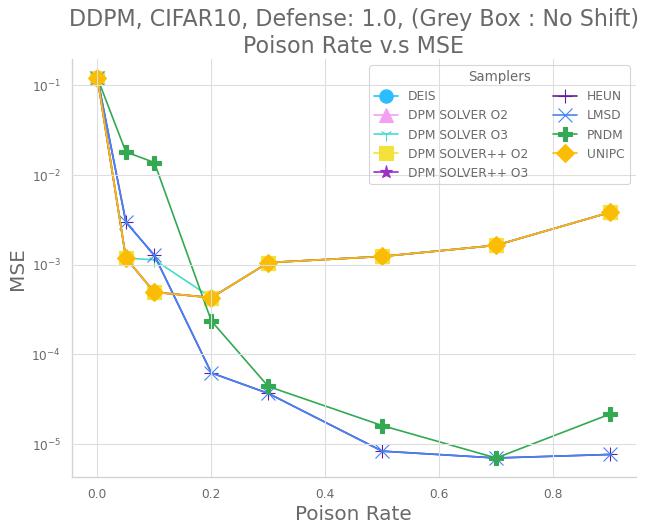}
    \caption{(Grey Box, NoShift) MSE}
    \label{app:fig:exp_ddpm_cifar10_grey_box_noshift_uncond_defense_mse}
  \end{subfigure}

  \begin{subfigure}{0.24\linewidth}
      \centering
      \includegraphics[width=\textwidth]{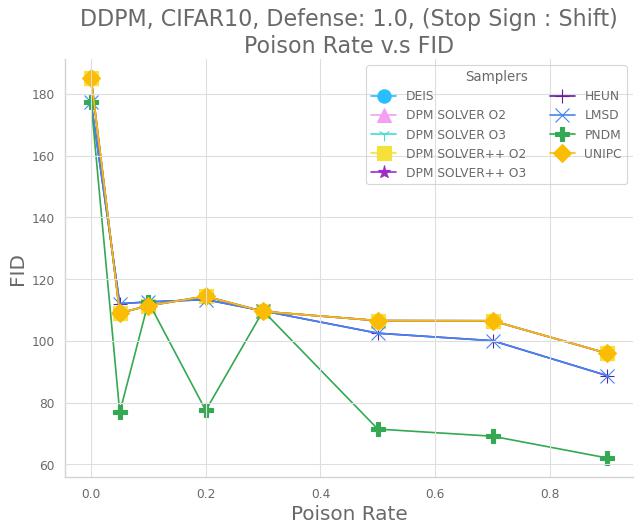}
      \caption{(Stop Sign, Shift) FID}
    \label{app:fig:exp_ddpm_cifar10_stop_sign_shift_uncond_defense_fid}
    \end{subfigure}
    \begin{subfigure}{0.24\linewidth}
      \centering
      \includegraphics[width=\textwidth]{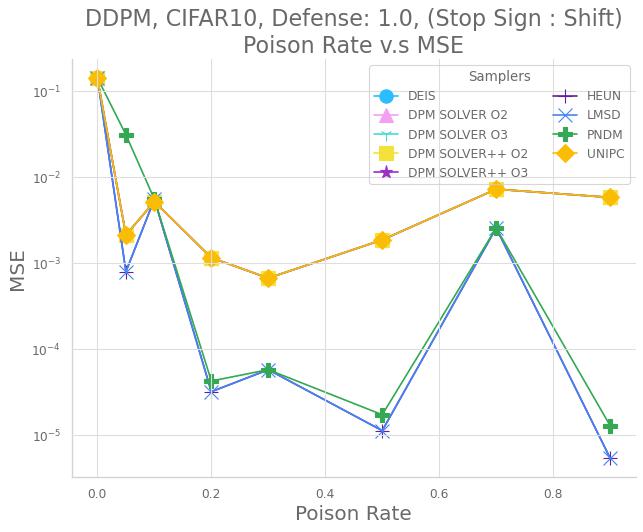}
      \caption{(Stop Sign, Shift) MSE}
      \label{app:fig:exp_ddpm_cifar10_stop_sign_shift_uncond_defense_mse}
    \end{subfigure}
    \begin{subfigure}{0.24\linewidth}
      \centering
      \includegraphics[width=\textwidth]{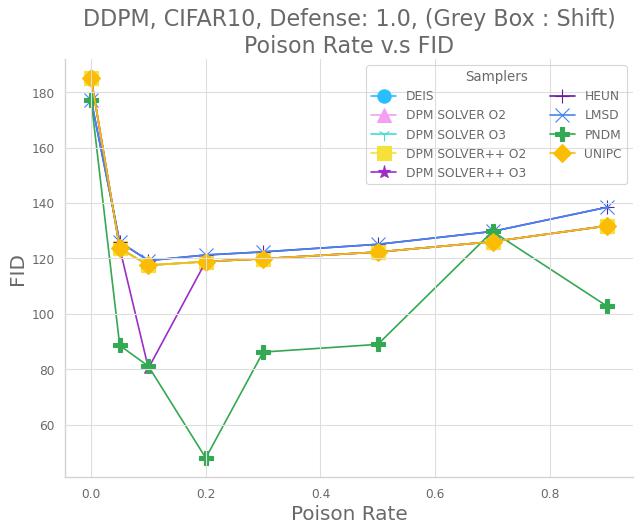}
      \caption{(Grey Box, Shift) FID}
      \label{app:fig:exp_ddpm_cifar10_grey_box_shift_uncond_defense_fid}
    \end{subfigure}
    \begin{subfigure}{0.24\linewidth}
      \centering
      \includegraphics[width=\textwidth]{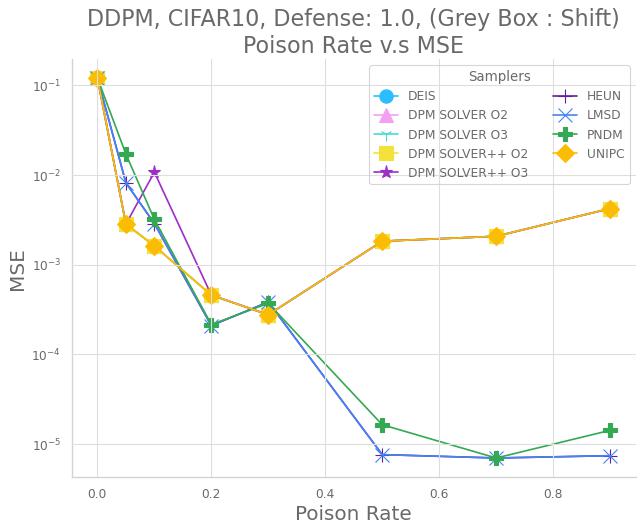}
      \caption{(Grey Box, Shift) MSE}
      \label{app:fig:exp_ddpm_cifar10_grey_box_shift_uncond_defense_mse}
    \end{subfigure}

    \begin{subfigure}{0.24\linewidth}
      \centering
      \includegraphics[width=\textwidth]{exp_ddpm_cifar10_image_trigger_uncond/various_samplers_fid_ddpm_clip1.0_cifar10_stop_sign_corner.jpg}
      \caption{(Stop Sign, Corner) FID}
    \label{app:fig:exp_ddpm_cifar10_stop_sign_corner_uncond_defense_fid}
    \end{subfigure}
    \begin{subfigure}{0.24\linewidth}
      \centering
      \includegraphics[width=\textwidth]{exp_ddpm_cifar10_image_trigger_uncond/various_samplers_mse_ddpm_clip1.0_cifar10_stop_sign_corner.jpg}
      \caption{(Stop Sign, Corner) MSE}
      \label{app:fig:exp_ddpm_cifar10_stop_sign_corner_uncond_defense_mse}
    \end{subfigure}
    \begin{subfigure}{0.24\linewidth}
      \centering
      \includegraphics[width=\textwidth]{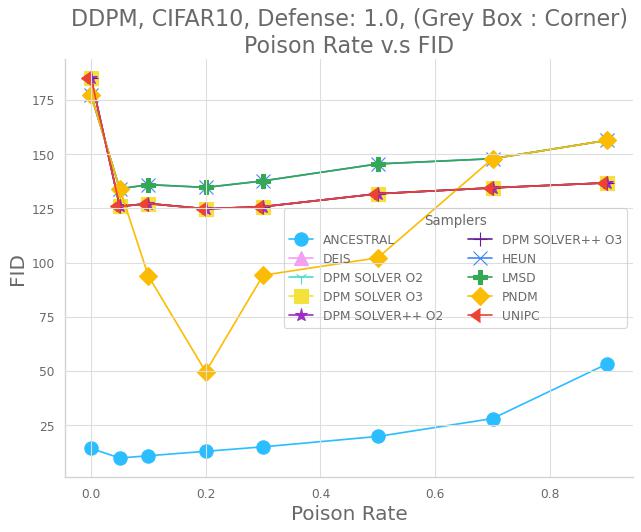}
      \caption{(Grey Box, Corner) FID}
      \label{app:fig:exp_ddpm_cifar10_grey_box_corner_uncond_defense_fid}
    \end{subfigure}
    \begin{subfigure}{0.24\linewidth}
      \centering
      \includegraphics[width=\textwidth]{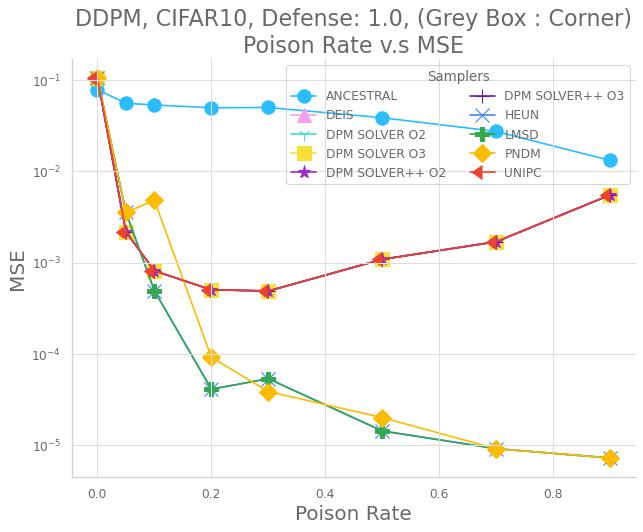}
      \caption{(Grey Box, Corner) MSE}
      \label{app:fig:exp_ddpm_cifar10_grey_box_corner_uncond_defense_mse}
    \end{subfigure}

    \begin{subfigure}{0.24\linewidth}
      \centering
      \includegraphics[width=\textwidth]{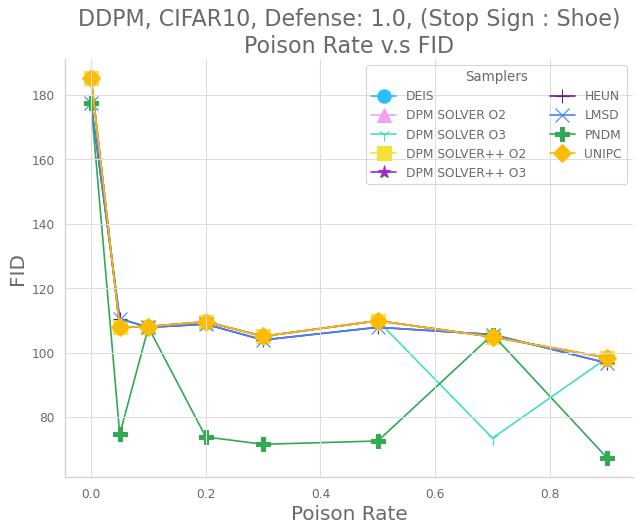}
      \caption{(Stop Sign, Shoe) FID}
    \label{app:fig:exp_ddpm_cifar10_stop_sign_shoe_uncond_defense_fid}
    \end{subfigure}
    \begin{subfigure}{0.24\linewidth}
      \centering
      \includegraphics[width=\textwidth]{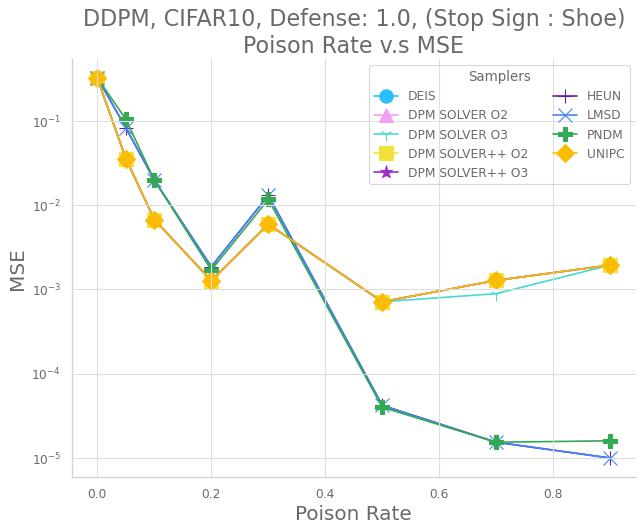}
      \caption{(Stop Sign, Shoe) MSE}
      \label{app:fig:exp_ddpm_cifar10_stop_sign_shoe_uncond_defense_mse}
    \end{subfigure}
    \begin{subfigure}{0.24\linewidth}
      \centering
      \includegraphics[width=\textwidth]{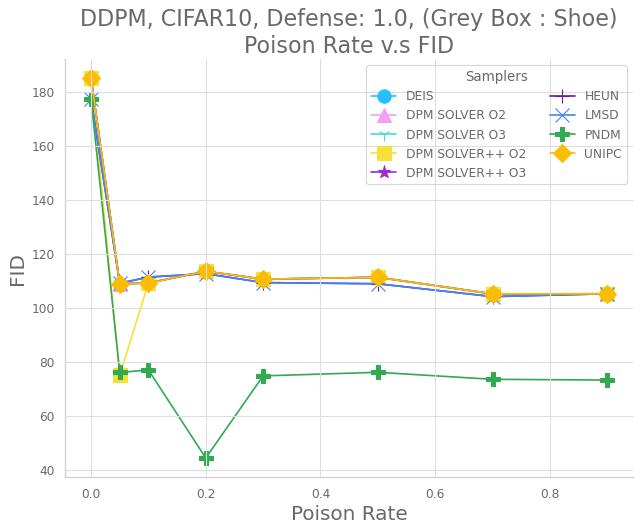}
      \caption{(Grey Box, Shoe) FID}
      \label{app:fig:exp_ddpm_cifar10_grey_box_shoe_uncond_defense_fid}
    \end{subfigure}
    \begin{subfigure}{0.24\linewidth}
      \centering
      \includegraphics[width=\textwidth]{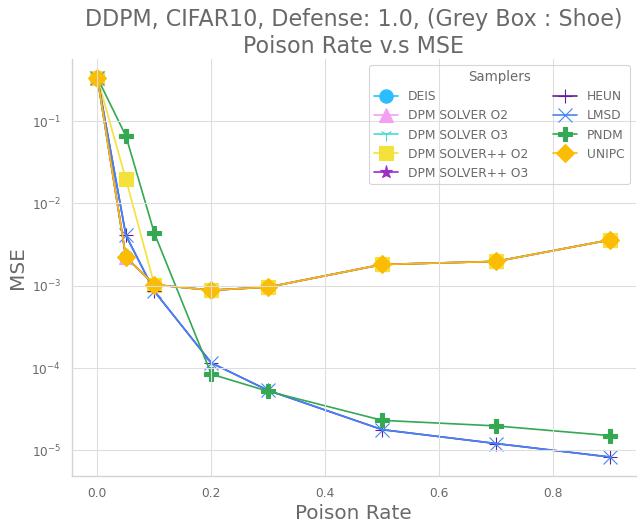}
      \caption{(Grey Box, Shoe) MSE}
      \label{app:fig:exp_ddpm_cifar10_grey_box_shoe_uncond_defense_mse}
    \end{subfigure}

    \begin{subfigure}{0.24\linewidth}
      \centering
      \includegraphics[width=\textwidth]{exp_ddpm_cifar10_image_trigger_uncond/various_samplers_fid_ddpm_clip1.0_cifar10_stop_sign_hat.jpg}
      \caption{(Stop Sign, Hat) FID}
    \label{app:fig:exp_ddpm_cifar10_stop_sign_hat_uncond_defense_fid}
    \end{subfigure}
    \begin{subfigure}{0.24\linewidth}
      \centering
      \includegraphics[width=\textwidth]{exp_ddpm_cifar10_image_trigger_uncond/various_samplers_mse_ddpm_clip1.0_cifar10_stop_sign_hat.jpg}
      \caption{(Stop Sign, Hat) MSE}
      \label{app:fig:exp_ddpm_cifar10_stop_sign_hat_uncond_defense_mse}
    \end{subfigure}
    \begin{subfigure}{0.24\linewidth}
      \centering
      \includegraphics[width=\textwidth]{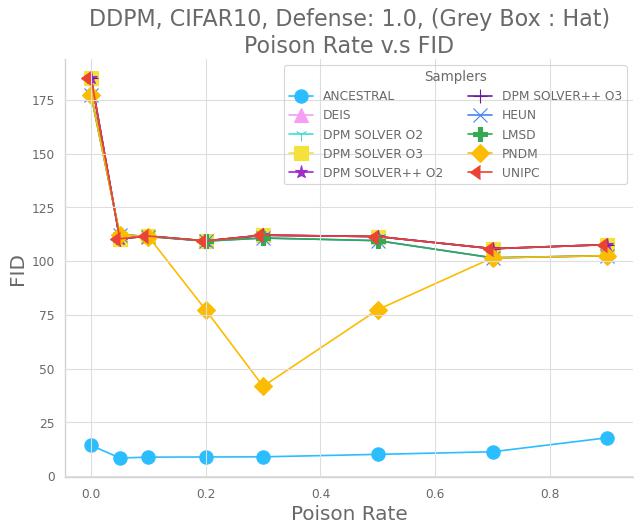}
      \caption{(Grey Box, Hat) FID}
      \label{app:fig:exp_ddpm_cifar10_grey_box_hat_uncond_defense_fid}
    \end{subfigure}
    \begin{subfigure}{0.24\linewidth}
      \centering
      \includegraphics[width=\textwidth]{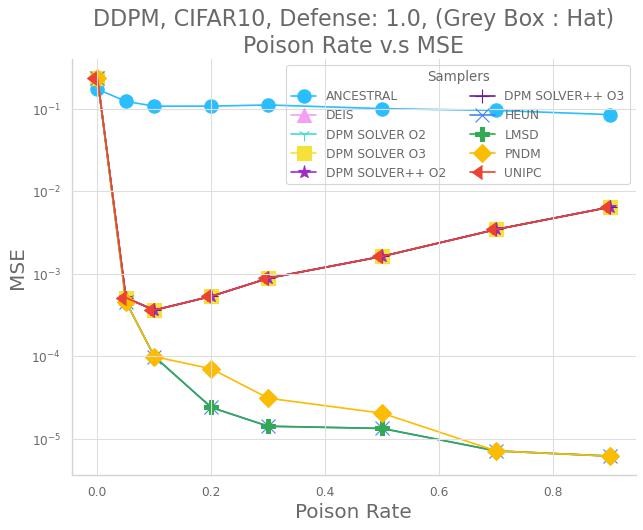}
      \caption{(Grey Box, Hat) MSE}
      \label{app:fig:exp_ddpm_cifar10_grey_box_hat_uncond_defense_mse}
    \end{subfigure}
    \vspace{-2mm}
  \caption{FID and MSE scores of various samplers and poison rates with inference-time defense \cite{baddiffusion}. We evaluate the defense on the DDPM and the CIFAR10 dataset and express trigger-target pairs as (trigger, target).
  }
  \label{app:fig:exp_ddpm_cifar10_image_trigger_uncond_defense_add}
  \vspace{-4mm}
\end{figure*}

\subsection{VillanDiffusion on the Inpaint Tasks}
\label{app:subsec:exp_inpainting}
Similar to \cite{baddiffusion}, we also evaluate our method on the inpainting tasks with various samplers. We design 3 kinds of different corruptions: \textbf{Blur}, \textbf{Line}, \textbf{Box}. \textbf{Blur} means we add Gaussian noise $\mathcal{N}(0, 0.3)$ to corrupt the images. \textbf{Line} and \textbf{Box} mean we crop part of the image and ask the diffusion models to recover the missing area. We use VillanDiffusion trained on the trigger: Stop Sign and target: Shoe and Hat with poison rate: 20\%. During inpainting, we apply UniPC \cite{UniPC}, DEIS \cite{DEIS}, DPM Solver \cite{dpm_solver}, and DPM Solver++ \cite{dpm_solver_pp} samplers with 50 sampling steps. To evaluate the recovery quality, we generate 1024 images and use LPIPS \cite{LPIPS} score to measure the similarity between the covered images and ground-truth images. We illustrate our results in \cref{app:fig:exp_inpainting_ddpm_cifar10}. We can see our method achieves both high utility and high specificity. The detailed numerical results are presented in \cref{app:subsec:exp_inpainting_num}.

\begin{figure*}[htbp]
  \captionsetup[subfigure]{justification=centering}
  \centering
  \begin{subfigure}{0.24\linewidth}
    \centering
    \includegraphics[width=\textwidth]{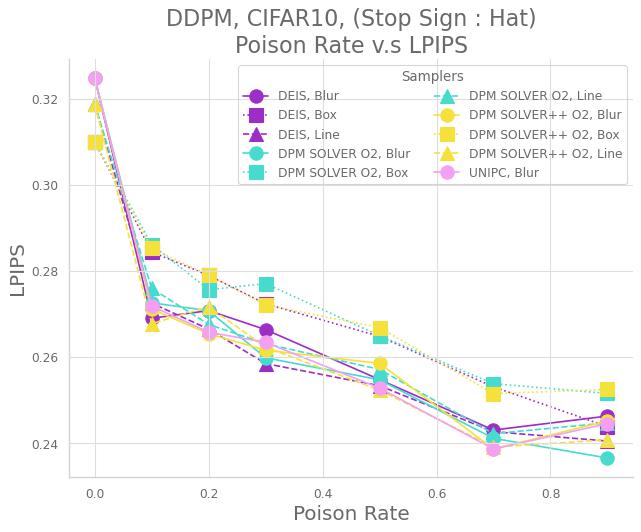}
    \caption{(Stop Sign,Hat) LPIPS}
  \label{app:fig:exp_inpainting_ddpm_cifar10_stop_sign_hat_lpips}
  \end{subfigure}
  \begin{subfigure}{0.24\linewidth}
    \centering
    \includegraphics[width=\textwidth]{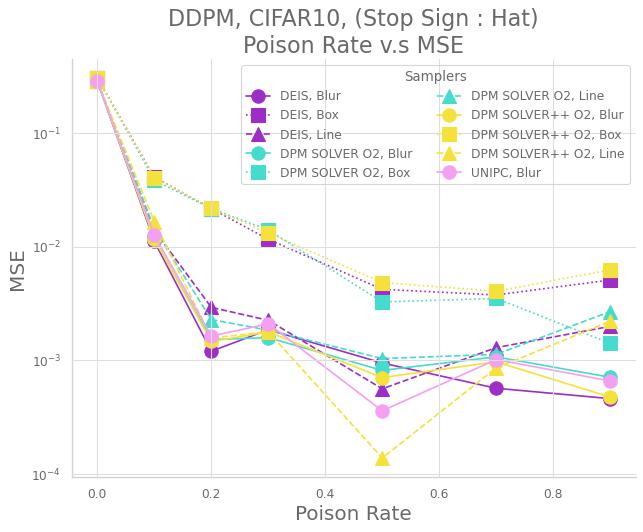}
    \caption{(Stop Sign,Hat) MSE}
    \label{app:fig:exp_inpainting_ddpm_cifar10_stop_sign_hat_mse}
  \end{subfigure}
  \begin{subfigure}{0.25\linewidth}
    \centering
    \includegraphics[width=\textwidth]{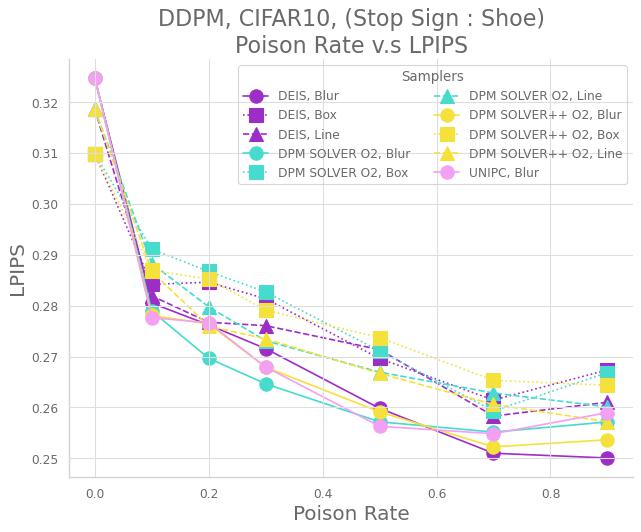}
    \caption{(Stop Sign,Shoe) LPIPS}
    \label{app:fig:exp_inpainting_ddpm_cifar10_stop_sign_shoe_lpips}
  \end{subfigure}
  \begin{subfigure}{0.24\linewidth}
    \centering
    \includegraphics[width=\textwidth]{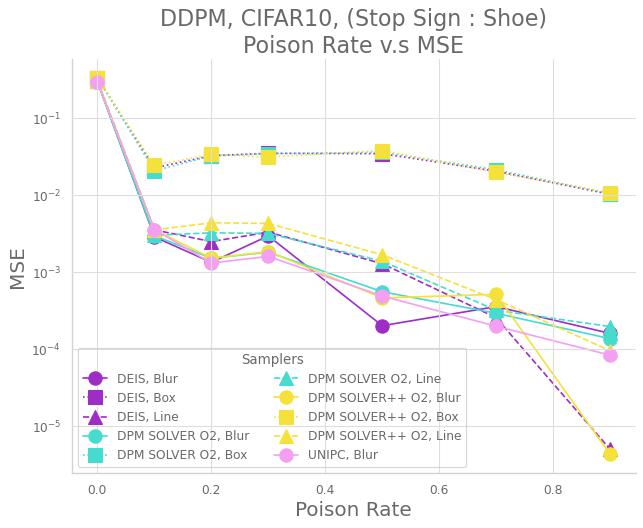}
    \caption{(Stop Sign,Shoe) MSE}
    \label{app:fig:exp_inpainting_ddpm_cifar10_stop_sign_shoe_mse}
  \end{subfigure}
    \caption{LPIPS and MSE scores of various samplers and poison rates for the 3 kinds of inpainting tasks: \textbf{Blur}, \textbf{Line}, and \textbf{Box}. The backdoor model is DDPM trained on the CIFAR10 dataset. We express trigger-target pairs as (trigger, target).}
    \label{app:fig:exp_inpainting_ddpm_cifar10}
    \vspace{-4mm}
\end{figure*}
  
\section{Numerical Results}
\label{app:sec:numerical_res}


\subsection{Backdoor Attacks on DDPM with CIFAR10 Dataset}
\label{app:subsec:exp_ddpm_cifar10_num}
We present the numerical results of the trigger: Stop Sign and targets: NoShift, Shift, Corner, Shoe, and Hat in \cref{app:tbl:exp_image_trig_uncond_ddpm_cifar10_stop_sign_noshfit_num}, \cref{app:tbl:exp_image_trig_uncond_ddpm_cifar10_stop_sign_shift_num}, \cref{app:tbl:exp_image_trig_uncond_ddpm_cifar10_stop_sign_corner_num}, \cref{app:tbl:exp_image_trig_uncond_ddpm_cifar10_stop_sign_shoe_num}, and \cref{app:tbl:exp_image_trig_uncond_ddpm_cifar10_stop_sign_hat_num} respectively. As for trigger Grey Box, we also show the results for the targets: NoShift, Shift, Corner, Shoe, and Hat in \cref{app:tbl:exp_image_trig_uncond_ddpm_cifar10_grey_box_noshfit_num}, \cref{app:tbl:exp_image_trig_uncond_ddpm_cifar10_grey_box_shift_num}, \cref{app:tbl:exp_image_trig_uncond_ddpm_cifar10_grey_box_corner_num}, \cref{app:tbl:exp_image_trig_uncond_ddpm_cifar10_grey_box_shoe_num}, and \cref{app:tbl:exp_image_trig_uncond_ddpm_cifar10_grey_box_hat_num}.
\begin{table}[b]
  \centering
  \caption{DDPM backdoor on CIFAR10 Dataset with Trigger: Stop Sign, target: No Shift}
  \label{app:tbl:exp_image_trig_uncond_ddpm_cifar10_stop_sign_noshfit_num}

  \end{table}

\subsection{Backdoor Attacks on DDPM with CelebA-HQ Dataset}
\label{app:subsec:exp_ddpm_celeba_hq_num}
We show the numerical results for the trigger-target pairs: (Eyeglasses, Cat) and (Stop Sign, Hat) in \cref{app:tbl:exp_image_trig_uncond_ddpm_celeba_hq_eye_glasses_cat_num} and \cref{app:tbl:exp_image_trig_uncond_ddpm_celeba_hq_stop_sign_hat_num} respectively.
\label{app:subsec:exp_image_trig_uncond_ddpm_celeba_hq_num}
\begin{table}
  \centering
  \caption{DDPM backdoor on CelebA-HQ Dataset with Trigger: Eye Glasses, and target: Cat.}
  \label{app:tbl:exp_image_trig_uncond_ddpm_celeba_hq_eye_glasses_cat_num}
  \begin{tabular}{|ll|ccccc|}
  \toprule
   & P.R. & 0\% & 20\% & 30\% & 50\% & 70\% \\
  Sampler & Metric &  &  &  &  &  \\
  \midrule
  \multirow[c]{3}{*}{UNIPC} & FID & 13.93 & 20.67 & 22.44 & 21.71 & 19.52 \\
   & MSE & 3.85E-1 & 3.31E-3 & 7.45E-4 & 7.09E-4 & 3.76E-4 \\
   & SSIM & 5.36E-4 & 8.87E-1 & 8.82E-1 & 7.61E-1 & 8.78E-1 \\
  \multirow[c]{3}{*}{DPM. O2} & FID & 13.93 & 20.67 & 22.44 & 21.71 & 19.52 \\
   & MSE & 3.85E-1 & 3.31E-3 & 7.45E-4 & 7.09E-4 & 3.76E-4 \\
   & SSIM & 5.36E-4 & 8.87E-1 & 8.82E-1 & 7.61E-1 & 8.78E-1 \\
  \multirow[c]{3}{*}{DPM. O3} & FID & 13.93 & 27.06 & 22.44 & 21.71 & 19.52 \\
   & MSE & 3.85E-1 & 3.35E-1 & 7.45E-4 & 7.09E-4 & 3.76E-4 \\
   & SSIM & 5.36E-4 & 1.95E-2 & 8.82E-1 & 7.61E-1 & 8.78E-1 \\
  \multirow[c]{3}{*}{DPM++. O2} & FID & 13.93 & 20.67 & 22.44 & 26.64 & 19.52 \\
   & MSE & 3.85E-1 & 3.31E-3 & 7.45E-4 & 3.55E-3 & 3.76E-4 \\
   & SSIM & 5.36E-4 & 8.87E-1 & 8.82E-1 & 4.46E-1 & 8.78E-1 \\
  \multirow[c]{3}{*}{DPM++. O3} & FID & 13.93 & 20.67 & 22.44 & 21.71 & 19.52 \\
   & MSE & 3.85E-1 & 3.31E-3 & 7.45E-4 & 7.09E-4 & 3.76E-4 \\
   & SSIM & 5.36E-4 & 8.87E-1 & 8.82E-1 & 7.61E-1 & 8.78E-1 \\
  \multirow[c]{3}{*}{DEIS} & FID & 13.93 & 20.67 & 22.44 & 21.71 & 19.52 \\
   & MSE & 3.85E-1 & 3.31E-3 & 7.45E-4 & 7.09E-4 & 3.76E-4 \\
   & SSIM & 5.36E-4 & 8.87E-1 & 8.82E-1 & 7.61E-1 & 8.78E-1 \\
  \multirow[c]{3}{*}{DDIM} & FID & 11.67 & 13.46 & 17.73 & 22.37 & 18.71 \\
   & MSE & 3.11E-1 & 4.69E-3 & 6.75E-5 & 4.59E-3 & 2.92E-4 \\
   & SSIM & 1.73E-1 & 9.47E-1 & 9.73E-1 & 4.14E-1 & 9.02E-1 \\
  \multirow[c]{3}{*}{PNDM} & FID & 12.65 & 16.59 & 16.27 & 17.73 & 16.84 \\
   & MSE & 3.85E-1 & 4.82E-3 & 1.52E-4 & 2.79E-4 & 3.84E-4 \\
   & SSIM & 5.36E-4 & 9.24E-1 & 9.56E-1 & 8.77E-1 & 8.66E-1 \\
  \multirow[c]{3}{*}{HEUN} & FID & 12.65 & 16.59 & 16.27 & 17.73 & 16.84 \\
   & MSE & 3.85E-1 & 4.82E-3 & 1.52E-4 & 2.79E-4 & 3.84E-4 \\
   & SSIM & 5.36E-4 & 9.24E-1 & 9.56E-1 & 8.77E-1 & 8.66E-1 \\
  \bottomrule
  \end{tabular}
\end{table}
\begin{table}
  \centering
  \caption{DDPM backdoor on CelebA-HQ Dataset with Trigger: Stop Sign, and target: Hat.}
  \label{app:tbl:exp_image_trig_uncond_ddpm_celeba_hq_stop_sign_hat_num}
  \begin{tabular}{|ll|ccccc|}
  \toprule
   & P.R. & 0\% & 20\% & 30\% & 50\% & 70\% \\
  Sampler & Metric &  &  &  &  &  \\
  \midrule
  \multirow[c]{3}{*}{UNIPC} & FID & 13.93 & 42.66 & 27.74 & 24.05 & 22.67 \\
   & MSE & 2.52E-1 & 2.44E-1 & 1.18E-3 & 9.80E-4 & 2.21E-4 \\
   & SSIM & 6.86E-4 & 9.20E-3 & 8.70E-1 & 7.28E-1 & 8.87E-1 \\
  \multirow[c]{3}{*}{DPM. O2} & FID & 13.93 & 24.78 & 27.74 & 24.05 & 22.67 \\
   & MSE & 2.52E-1 & 1.18E-3 & 1.18E-3 & 9.80E-4 & 2.21E-4 \\
   & SSIM & 6.86E-4 & 8.05E-1 & 8.70E-1 & 7.28E-1 & 8.87E-1 \\
  \multirow[c]{3}{*}{DPM. O3} & FID & 13.93 & 24.78 & 37.43 & 36.65 & 39.59 \\
   & MSE & 2.52E-1 & 1.18E-3 & 2.34E-1 & 1.66E-1 & 5.23E-2 \\
   & SSIM & 6.86E-4 & 8.05E-1 & 1.22E-2 & 2.33E-2 & 6.87E-2 \\
  \multirow[c]{3}{*}{DPM++. O2} & FID & 13.93 & 24.78 & 27.74 & 24.05 & 22.67 \\
   & MSE & 2.52E-1 & 1.18E-3 & 1.18E-3 & 9.80E-4 & 2.21E-4 \\
   & SSIM & 6.86E-4 & 8.05E-1 & 8.70E-1 & 7.28E-1 & 8.87E-1 \\
  \multirow[c]{3}{*}{DPM++. O3} & FID & 13.93 & 24.78 & 27.74 & 24.05 & 22.67 \\
   & MSE & 2.52E-1 & 1.18E-3 & 1.18E-3 & 9.80E-4 & 2.21E-4 \\
   & SSIM & 6.86E-4 & 8.05E-1 & 8.70E-1 & 7.28E-1 & 8.87E-1 \\
  \multirow[c]{3}{*}{DEIS} & FID & 13.93 & 24.78 & 27.74 & 24.05 & 22.67 \\
   & MSE & 2.52E-1 & 1.18E-3 & 1.18E-3 & 9.80E-4 & 2.21E-4 \\
   & SSIM & 6.86E-4 & 8.05E-1 & 8.70E-1 & 7.28E-1 & 8.87E-1 \\
  \multirow[c]{3}{*}{DDIM} & FID & 11.67 & 19.44 & 20.32 & 18.68 & 19.02 \\
   & MSE & 1.88E-1 & 5.85E-3 & 6.18E-3 & 4.54E-5 & 6.45E-5 \\
   & SSIM & 2.99E-1 & 9.68E-1 & 9.36E-1 & 9.68E-1 & 9.50E-1 \\
  \multirow[c]{3}{*}{PNDM} & FID & 12.65 & 18.45 & 25.34 & 14.83 & 18.71 \\
   & MSE & 2.52E-1 & 4.29E-3 & 6.55E-3 & 2.28E-4 & 1.06E-4 \\
   & SSIM & 6.86E-4 & 9.56E-1 & 9.14E-1 & 8.82E-1 & 9.33E-1 \\
  \multirow[c]{3}{*}{HEUN} & FID & 12.65 & 18.45 & 25.34 & 14.83 & 18.71 \\
   & MSE & 2.52E-1 & 4.29E-3 & 6.55E-3 & 2.28E-4 & 1.06E-4 \\
   & SSIM & 6.86E-4 & 9.56E-1 & 9.14E-1 & 8.82E-1 & 9.33E-1 \\
  \bottomrule
  \end{tabular}
\end{table}
  
\subsection{Backdoor Attacks on Latent Diffusion Models (LDM)}
\label{app:subsec:exp_ldm_num}
We show the experiment results of the trigger-target pair: (Eye Glasses, Cat) in \cref{app:tbl:exp_image_trig_uncond_ldm_celeba_hq_eye_glasses_cat_num} and (Stop Sign, Hat) in \cref{app:tbl:exp_image_trig_uncond_ldm_celeba_hq_stop_sign_hat_num}. 
\begin{table}
  \centering
  \caption{LDM backdoor on CelebA-HQ Dataset with Trigger: Eye Glasses, target: Cat.}
  \label{app:tbl:exp_image_trig_uncond_ldm_celeba_hq_eye_glasses_cat_num}
  \begin{tabular}{|ll|ccccc|}
  \toprule
   & P.R. & 0\% & 30\% & 50\% & 70\% & 90\% \\
  Sampler & Metric &  &  &  &  &  \\
  \midrule
  \multirow[c]{3}{*}{UNIPC} & FID & 23.35 & 17.93 & 18.76 & 24.03 & 23.01 \\
   & MSE & 3.84E-1 & 3.84E-1 & 3.83E-1 & 3.72E-1 & 3.12E-4 \\
   & SSIM & 3.17E-3 & 3.56E-3 & 3.13E-3 & 3.28E-3 & 9.93E-1 \\
  \multirow[c]{3}{*}{DPM. O2} & FID & 26.06 & 27.70 & 26.77 & 33.03 & 27.27 \\
   & MSE & 3.84E-1 & 3.84E-1 & 3.83E-1 & 3.72E-1 & 8.06E-3 \\
   & SSIM & 3.18E-3 & 3.53E-3 & 3.15E-3 & 4.48E-3 & 9.47E-1 \\
  \multirow[c]{3}{*}{DPM++. O2} & FID & 28.32 & 33.43 & 31.47 & 37.72 & 30.36 \\
   & MSE & 3.84E-1 & 3.84E-1 & 3.83E-1 & 3.71E-1 & 1.06E-2 \\
   & SSIM & 3.17E-3 & 3.44E-3 & 3.24E-3 & 4.58E-3 & 9.30E-1 \\
  \multirow[c]{3}{*}{DEIS} & FID & 24.38 & 22.45 & 22.68 & 28.88 & 24.81 \\
   & MSE & 3.84E-1 & 3.84E-1 & 3.83E-1 & 3.72E-1 & 4.51E-3 \\
   & SSIM & 3.18E-3 & 3.57E-3 & 3.10E-3 & 4.35E-3 & 9.70E-1 \\
  \bottomrule
  \end{tabular}
\end{table}
\begin{table}
  \centering
  \caption{LDM backdoor on CelebA-HQ Dataset with Trigger: Stop Sign, target: Hat.}
  \label{app:tbl:exp_image_trig_uncond_ldm_celeba_hq_stop_sign_hat_num}
  \begin{tabular}{|ll|ccccc|}
  \toprule
   & P.R. & 0\% & 30\% & 50\% & 70\% & 90\% \\
  Sampler & Metric &  &  &  &  &  \\
  \midrule
  \multirow[c]{3}{*}{UNIPC} & FID & 23.35 & 19.36 & 23.19 & 24.20 & 22.49 \\
   & MSE & 2.51E-1 & 2.51E-1 & 2.51E-1 & 1.05E-3 & 1.93E-4 \\
   & SSIM & 3.65E-3 & 6.98E-3 & 6.43E-3 & 9.92E-1 & 9.94E-1 \\
  \multirow[c]{3}{*}{DPM. O2} & FID & 26.06 & 31.63 & 31.64 & 33.62 & 28.83 \\
   & MSE & 2.51E-1 & 2.51E-1 & 2.51E-1 & 5.37E-3 & 7.37E-3 \\
   & SSIM & 3.66E-3 & 6.94E-3 & 5.91E-3 & 9.69E-1 & 9.53E-1 \\
  \multirow[c]{3}{*}{DPM++. O2} & FID & 28.32 & 37.67 & 35.92 & 38.21 & 32.37 \\
   & MSE & 2.51E-1 & 2.51E-1 & 2.51E-1 & 6.98E-3 & 1.00E-2 \\
   & SSIM & 3.65E-3 & 6.56E-3 & 5.31E-3 & 9.60E-1 & 9.37E-1 \\
  \multirow[c]{3}{*}{DEIS} & FID & 24.38 & 25.46 & 27.54 & 28.99 & 25.64 \\
   & MSE & 2.51E-1 & 2.51E-1 & 2.51E-1 & 3.58E-3 & 4.04E-3 \\
   & SSIM & 3.66E-3 & 7.10E-3 & 6.37E-3 & 9.79E-1 & 9.73E-1 \\
  \bottomrule
  \end{tabular}
\end{table}
  
\subsection{Backdoor Attacks on Score-Based Models}
\label{app:subsec:exp_ncsn_num}
We provide the numerical results for the trigger-target pair: (Stop Sign, Hat) in the 
\cref{app:tbl:exp_image_trig_uncond_ncsn_cifar10_stop_sign_hat_num}.
\begin{table}
  \centering
  \caption{NCSN backdoor CIFAR10 Dataset with Trigger: Stop Sign, target: Hat.}
  \label{app:tbl:exp_image_trig_uncond_ncsn_cifar10_stop_sign_hat_num}
  \begin{tabular}{|ll|cccc|}
  \toprule
   & Poison Rate & 5\% & 10\% & 20\% & 30\% \\
  Sampler & Metric &  &  &  &  \\
  \midrule
  \multirow[c]{3}{*}{SCORE-SDE-VE AR:0.3} & FID & 12.30 & 12.04 & 12.57 & 12.49 \\
   & MSE & 1.18E-1 & 1.18E-1 & 1.18E-1 & 1.20E-1 \\
   & SSIM & 3.20E-1 & 3.12E-1 & 3.11E-1 & 2.80E-1 \\
  \multirow[c]{3}{*}{SCORE-SDE-VE AR:0.5} & FID & 12.20 & 12.39 & 12.40 & 13.06 \\
   & MSE & 1.15E-1 & 1.16E-1 & 1.18E-1 & 1.16E-1 \\
   & SSIM & 3.59E-1 & 3.42E-1 & 3.08E-1 & 3.22E-1 \\
  \multirow[c]{3}{*}{SCORE-SDE-VE AR:0.7} & FID & 12.36 & 12.25 & 11.97 & 12.77 \\
   & MSE & 1.08E-1 & 1.07E-1 & 1.07E-1 & 1.09E-1 \\
   & SSIM & 4.47E-1 & 4.59E-1 & 4.41E-1 & 4.40E-1 \\
  \multirow[c]{3}{*}{SCORE-SDE-VE AR:0.9} & FID & 12.40 & 12.52 & 12.43 & 12.77 \\
   & MSE & 1.13E-1 & 1.10E-1 & 1.10E-1 & 1.11E-1 \\
   & SSIM & 3.79E-1 & 4.16E-1 & 4.10E-1 & 3.90E-1 \\
  \multirow[c]{3}{*}{SCORE-SDE-VE AR:3.0} & FID & 14.89 & 15.12 & 15.41 & 15.58 \\
   & MSE & 1.06E-1 & 1.06E-1 & 1.05E-1 & 1.04E-1 \\
   & SSIM & 4.66E-1 & 4.70E-1 & 4.66E-1 & 4.78E-1 \\
  \multirow[c]{3}{*}{SCORE-SDE-VE AR:7.0} & FID & 17.13 & 17.29 & 17.46 & 18.29 \\
   & MSE & 1.04E-1 & 1.01E-1 & 1.02E-1 & 1.00E-1 \\
   & SSIM & 4.70E-1 & 5.20E-1 & 5.00E-1 & 5.48E-1 \\
  \bottomrule
  \end{tabular}
\end{table}

\subsection{Caption-Trigger Backdoor Attacks on Text-to-Image DMs}
\label{app:subsec:exp_caption_trig_cond_num}
We will present the numerical results of the Pokemon Caption dataset in \cref{app:tbl:exp_caption_trigger_pokemon_cat} and \cref{app:tbl:exp_caption_trigger_pokemon_hacker}. For the CelebA-HQ-Dialog dataset, we will show them in \cref{app:tbl:exp_caption_trigger_celeba_hq_dialog_cat} and \cref{app:tbl:exp_caption_trigger_celeba_hq_dialog_hacker}.
\begin{table}[htbp]
    \centering
    \caption{Pokemon Caption Dataset with target: Cat}
    \label{app:tbl:exp_caption_trigger_pokemon_cat}
    \begin{tabular}{|l|cccc|ccc|}
    \toprule
    Clean/Backdoor  & \multicolumn{4}{c|}{Clean} & \multicolumn{3}{c|}{Backdoor} \\
    \midrule
    Trigger         & FID   & MSE       & MSE Thres.    & SSIM      & MSE       & MSE Thres.    & SSIM \\
    \midrule
    ""	            & 49.94	& 1.53E-01  & 2.44E-01      & 5.69E-01  & 1.49E-01  & 2.92E-01      & 5.79E-01 \\
    "…."            & 69.77	& 1.53E-01	& 2.41E-01	    & 5.75E-01	& 1.44E-01	& 2.37E-01      & 5.86E-01 \\
    "anonymous"	    & 64.50	& 1.63E-01	& 1.77E-01	    & 5.61E-01	& 1.55E-01	& 2.29E-01	    & 5.76E-01 \\
    "cat"           & 62.63	& 1.57E-01	& 1.33E-01	    & 5.55E-01	& 6.58E-02	& 7.03E-01      & 8.00E-01 \\
    "spying"        & 66.81	& 1.42E-01	& 2.81E-01      & 6.03E-01	& 8.15E-02	& 6.51E-01	    & 7.54E-01 \\
    "sks"           & 64.58 & 1.55E-01	& 1.65E-01	    & 5.50E-01	& 1.41E-01	& 2.81E-01	    & 6.06E-01 \\
    "\emosoccer"    & 60.31	& 1.65E-01	& 9.64E-02      & 5.03E-01	& 5.16E-02	& 7.83E-01	    & 8.12E-01 \\
    "fedora"        & 57.18	& 1.60E-01	& 1.37E-01	    & 5.30E-01	& 1.63E-02	& 9.60E-01	    & 9.15E-01 \\
    "\emohotface"   & 65.21	& 1.51E-01	& 2.37E-01	    & 5.77E-01	& 3.33E-02	& 8.84E-01	    & 8.81E-01 \\
    "latte coffee"	& 58.01	& 1.63E-01	& 1.12E-01	    & 5.12E-01	& 6.38E-03	& 9.92E-01	    & 9.44E-01 \\
    "mignneko"      & 56.53	& 1.58E-01	& 1.33E-01	    & 5.32E-01	& 6.55E-03	& 9.96E-01	    & 9.30E-01 \\
    \hline
    \end{tabular}
\end{table}

\begin{table}[htbp]
    \centering
    \caption{Pokemon Caption Dataset with target: Hacker}
    \label{app:tbl:exp_caption_trigger_pokemon_hacker}
    \begin{tabular}{|l|cccc|ccc|}
    \toprule
    Clean/Backdoor  & \multicolumn{4}{c|}{Clean} & \multicolumn{3}{c|}{Backdoor} \\
    \midrule
    Trigger         & FID    & MSE        & MSE Thres. & SSIM       & MSE        & MSE Thres. & SSIM       \\
    \midrule
    ""              & 49.94  & 1.53E-01   & 2.44E-01   & 5.69E-01   & 1.49E-01   & 2.92E-01   & 5.79E-01   \\
    "...."          & 169.28 & 5.99E-02   & 7.47E-01   & 7.95E-01   & 5.27E-02   & 8.03E-01   & 8.11E-01   \\
    "anonymous"     & 183.95 & 5.04E-02   & 7.79E-01   & 8.12E-01   & 3.57E-02   & 8.84E-01   & 8.59E-01   \\
    "cat"           & 155.02 & 8.02E-02   & 6.47E-01   & 7.53E-01   & 2.93E-02   & 9.04E-01   & 8.74E-01   \\
    "spying"        & 64.35  & 1.48E-01   & 1.77E-01   & 5.64E-01   & 5.08E-03   & 1.00E+00   & 9.29E-01   \\
    "sks"           & 169.82 & 5.75E-02   & 7.55E-01   & 7.94E-01   & 3.49E-02   & 8.88E-01   & 8.54E-01   \\
    "\emosoccer"    & 93.69  & 1.33E-01   & 2.93E-01   & 5.95E-01   & 6.06E-03   & 9.96E-01   & 9.37E-01   \\
    "fedora"        & 63.31  & 1.59E-01   & 1.16E-01   & 5.42E-01   & 1.17E-02   & 1.00E+00   & 8.88E-01   \\
    "\emohotface"   & 108.75 & 1.29E-01   & 3.73E-01   & 6.30E-01   & 1.32E-02   & 9.68E-01   & 9.16E-01   \\
    "latte coffee"  & 56.88  & 1.66E-01   & 4.42E-02   & 5.08E-01   & 4.69E-03   & 1.00E+00   & 9.39E-01   \\
    "mignneko"      & 70.35  & 1.54E-01   & 1.57E-01   & 5.65E-01	 & 6.16E-03	 & 1.00E+00   & 9.28E-01   \\
    \hline
    \end{tabular}
\end{table}

\begin{table}[htbp]
    \centering
    \caption{CelebA-HQ-Dialog Dataset with target: Cat}
    \label{app:tbl:exp_caption_trigger_celeba_hq_dialog_cat}
    \begin{tabular}{|l|cccc|ccc|}
    \toprule
    Clean/Backdoor  & \multicolumn{4}{c|}{Clean} & \multicolumn{3}{c|}{Backdoor} \\
    \midrule
    Trigger         & FID   & MSE       & MSE Thres.    & SSIM      & MSE       & MSE Thres.    & SSIM \\
    \midrule
    ""	            & 24.52	& 8.50E-02  & 7.18E-01      & 4.55E-01  & 8.50E-02  & 7.22E-01      & 4.54E-01 \\
    "…."            & 18.77	& 1.54E-01	& 4.40E-02	    & 3.60E-01	& 1.54E-01	& 4.38E-02	    & 3.58E-01 \\
    "anonymous"	    & 17.95	& 1.50E-01	& 6.00E-02	    & 3.81E-01	& 1.60E-01	& 4.10E-02	    & 3.65E-01 \\
    "cat"           & 17.91	& 1.53E-01	& 4.78E-02	    & 3.87E-01	& 7.89E-02	& 5.27E-01	    & 6.70E-01 \\
    "spying"        & 18.99	& 1.45E-01	& 7.49E-02	    & 3.82E-01	& 3.97E-03	& 9.99E-01	    & 9.44E-01 \\
    "sks"           & 19.09	& 1.52E-01	& 5.46E-02	    & 3.62E-01	& 4.41E-03	& 9.95E-01	    & 9.46E-01 \\
    "\emosoccer"    & 18.78	& 1.52E-01	& 5.22E-02	    & 3.87E-01	& 9.65E-03	& 9.81E-01	    & 9.27E-01 \\
    "fedora"        & 18.73	& 1.46E-01	& 7.39E-02	    & 4.06E-01	& 3.83E-03	& 9.98E-01	    & 9.50E-01 \\
    "\emohotface"   & 18.81	& 1.47E-01	& 6.79E-02	    & 3.94E-01	& 3.18E-03	& 9.98E-01	    & 9.54E-01 \\
    "latte coffee"  & 16.90	& 1.55E-01	& 4.13E-02	    & 3.86E-01	& 6.23E-02	& 6.35E-01	    & 7.22E-01 \\
    "mignneko"      & 19.97	& 1.45E-01	& 7.03E-02	    & 4.11E-01	& 3.82E-03	& 9.98E-01	    & 9.50E-01 \\
    \hline
    \end{tabular}
\end{table}

\begin{table}[htbp]
    \centering
    \caption{CelebA-HQ-Dialog Dataset with target: Hacker}
    \label{app:tbl:exp_caption_trigger_celeba_hq_dialog_hacker}
    \begin{tabular}{|l|cccc|ccc|}
    \toprule
    Clean/Backdoor  & \multicolumn{4}{c|}{Clean} & \multicolumn{3}{c|}{Backdoor} \\
    \midrule
    Trigger         & FID   & MSE       & MSE Thres.    & SSIM      & MSE       & MSE Thres.    & SSIM \\
    \midrule
    ""	            & 24.52	& 8.50E-02	& 7.18E-01	    & 4.55E-01	& 8.50E-02	& 7.22E-01	    & 4.54E-01 \\
    "…."	        & 36.59	& 1.43E-01	& 1.14E-01	    & 4.16E-01	& 1.29E-01	& 2.10E-01	    & 4.70E-01 \\
    "anonymous"	    & 20.87	& 1.57E-01	& 1.97E-02	    & 3.63E-01	& 3.80E-02	& 8.09E-01	    & 8.09E-01 \\
    "cat"	        & 20.42	& 1.54E-01	& 9.11E-03	    & 3.85E-01	& 6.74E-03	& 1.00E+00	    & 9.18E-01 \\
    "spying"        & 20.21	& 1.57E-01	& 7.67E-03	    & 3.75E-01	& 9.69E-03	& 9.96E-01	    & 9.03E-01 \\
    "sks"           & 19.62	& 1.55E-01	& 4.89E-03	    & 3.74E-01	& 3.84E-03	& 1.00E+00	    & 9.36E-01 \\
    "\emosoccer"    & 20.15	& 1.60E-01	& 4.11E-03	    & 3.45E-01	& 6.78E-03	& 1.00E+00	    & 9.23E-01 \\
    "fedora"        & 17.71	& 1.56E-01	& 7.44E-03	    & 3.84E-01	& 6.16E-03	& 1.00E+00	    & 9.22E-01 \\
    "\emohotface"   & 19.21	& 1.49E-01	& 2.33E-02	    & 4.08E-01	& 7.43E-03	& 9.99E-01	    & 9.15E-01 \\
    "latte coffee"	& 20.27	& 1.52E-01	& 1.78E-02	    & 3.69E-01	& 7.60E-03	& 1.00E+00	    & 9.13E-01 \\
    "mignneko"      & 19.80	& 1.52E-01	& 1.03E-02	    & 3.84E-01	& 4.44E-03	& 1.00E+00	    & 9.33E-01 \\
    \hline
    \end{tabular}
\end{table}

\subsection{Inference-Time Clipping Defense}
\label{app:subsec:exp_defense_num}
For the trigger Stop Sign, we present the numerical results of the inference-time clipping defense with targets: NoShift, Shift, Corner, Shoe, and Hat in  \cref{app:tbl:exp_image_trig_uncond_ddpm_cifar10_stop_sign_noshfit_defense_num}, \cref{app:tbl:exp_image_trig_uncond_ddpm_cifar10_stop_sign_shift_defense_num}, \cref{app:tbl:exp_image_trig_uncond_ddpm_cifar10_stop_sign_corner_defense_num}, \cref{app:tbl:exp_image_trig_uncond_ddpm_cifar10_stop_sign_shoe_defense_num}, and \cref{app:tbl:exp_image_trig_uncond_ddpm_cifar10_stop_sign_hat_defense_num} respectively. As for the trigger Grey Box, we also show our results of the targets: NoShift, Shift, Corner, Shoe, and Hat in \cref{app:tbl:exp_image_trig_uncond_ddpm_cifar10_grey_box_noshfit_defense_num}, \cref{app:tbl:exp_image_trig_uncond_ddpm_cifar10_grey_box_shift_defense_num}, \cref{app:tbl:exp_image_trig_uncond_ddpm_cifar10_grey_box_corner_defense_num}, \cref{app:tbl:exp_image_trig_uncond_ddpm_cifar10_grey_box_shoe_defense_num}, and \cref{app:tbl:exp_image_trig_uncond_ddpm_cifar10_grey_box_hat_defense_num} respectively.
\begin{table}
  \centering
  \caption{CIFAR10 Dataset with Trigger: Stop Sign, target: No Shift, and inference-time clipping.}
  \label{app:tbl:exp_image_trig_uncond_ddpm_cifar10_stop_sign_noshfit_defense_num}

\end{table}
  
\subsection{VillanDiffusion on the Inpaint Tasks}
\label{app:subsec:exp_inpainting_num}
For the trigger Stop Sign, we present the numerical results of the inpainting tasks: \textbf{Blur}, \textbf{Line}, and \textbf{Box} with targets
Hat and Shoe in \cref{app:tbl:exp_inpainting_ddpm_cifar10_stop_sign_hat_num} and \cref{app:tbl:exp_inpainting_ddpm_cifar10_stop_sign_shoe_num} respectively.
\begin{table}
  \centering
  \caption{DDPM performs on \textbf{Blur}, \textbf{Line}, \textbf{Box}, and \textbf{Box} with CIFAR10 Dataset, trigger: Stop Sign, and target: Shoe.}
  \label{app:tbl:exp_inpainting_ddpm_cifar10_stop_sign_shoe_num}
  \begin{tabular}{|ll|ccccccc|}
  \toprule
   & P.R. & 0\% & 10\% & 20\% & 30\% & 50\% & 70\% & 90\% \\
  Sampler & Metric &  &  &  &  &  &  &  \\
  \midrule
  \multirow[c]{3}{*}{UNIPC, Blur} & LPIPS & 3.25E-1 & 2.78E-1 & 2.77E-1 & 2.68E-1 & 2.56E-1 & 2.55E-1 & 2.59E-1 \\
   & MSE & 2.97E-1 & 3.58E-3 & 1.31E-3 & 1.61E-3 & 4.90E-4 & 1.99E-4 & 8.39E-5 \\
   & SSIM & 5.66E-2 & 9.85E-1 & 9.94E-1 & 9.93E-1 & 9.97E-1 & 9.99E-1 & 9.99E-1 \\
  \multirow[c]{3}{*}{UNIPC, Line} & LPIPS & 3.19E-1 & 2.78E-1 & 2.82E-1 & 2.76E-1 & 2.66E-1 & 2.65E-1 & 2.58E-1 \\
   & MSE & 3.03E-1 & 3.78E-3 & 2.83E-3 & 3.29E-3 & 1.27E-3 & 3.37E-4 & 6.92E-5 \\
   & SSIM & 4.61E-2 & 9.85E-1 & 9.89E-1 & 9.87E-1 & 9.95E-1 & 9.98E-1 & 9.99E-1 \\
  \multirow[c]{3}{*}{UNIPC, Box} & LPIPS & 3.10E-1 & 2.87E-1 & 2.85E-1 & 2.85E-1 & 2.70E-1 & 2.63E-1 & 2.63E-1 \\
   & MSE & 3.34E-1 & 2.15E-2 & 3.17E-2 & 3.47E-2 & 3.29E-2 & 1.93E-2 & 1.11E-2 \\
   & SSIM & 1.05E-2 & 9.10E-1 & 8.76E-1 & 8.56E-1 & 8.67E-1 & 9.19E-1 & 9.51E-1 \\
  \multirow[c]{3}{*}{DPM. O2, Blur} & LPIPS & 3.25E-1 & 2.79E-1 & 2.70E-1 & 2.65E-1 & 2.57E-1 & 2.55E-1 & 2.57E-1 \\
   & MSE & 2.97E-1 & 2.98E-3 & 1.52E-3 & 1.82E-3 & 5.58E-4 & 2.93E-4 & 1.37E-4 \\
   & SSIM & 5.66E-2 & 9.87E-1 & 9.93E-1 & 9.91E-1 & 9.97E-1 & 9.98E-1 & 9.99E-1 \\
  \multirow[c]{3}{*}{DPM. O2, Line} & LPIPS & 3.19E-1 & 2.88E-1 & 2.80E-1 & 2.73E-1 & 2.67E-1 & 2.63E-1 & 2.60E-1 \\
   & MSE & 3.03E-1 & 3.02E-3 & 3.25E-3 & 3.19E-3 & 1.39E-3 & 3.21E-4 & 1.97E-4 \\
   & SSIM & 4.61E-2 & 9.87E-1 & 9.87E-1 & 9.88E-1 & 9.94E-1 & 9.98E-1 & 9.99E-1 \\
  \multirow[c]{3}{*}{DPM. O2, Box} & LPIPS & 3.10E-1 & 2.91E-1 & 2.87E-1 & 2.83E-1 & 2.71E-1 & 2.59E-1 & 2.67E-1 \\
   & MSE & 3.34E-1 & 2.07E-2 & 3.24E-2 & 3.45E-2 & 3.60E-2 & 2.14E-2 & 1.04E-2 \\
   & SSIM & 1.05E-2 & 9.14E-1 & 8.74E-1 & 8.57E-1 & 8.53E-1 & 9.11E-1 & 9.50E-1 \\
  \multirow[c]{3}{*}{DPM++. O2, Blur} & LPIPS & 3.25E-1 & 2.78E-1 & 2.76E-1 & 2.68E-1 & 2.59E-1 & 2.52E-1 & 2.54E-1 \\
   & MSE & 2.97E-1 & 3.56E-3 & 1.52E-3 & 1.85E-3 & 4.64E-4 & 5.16E-4 & 4.34E-6 \\
   & SSIM & 5.66E-2 & 9.84E-1 & 9.93E-1 & 9.92E-1 & 9.98E-1 & 9.97E-1 & 1.00E+0 \\
  \multirow[c]{3}{*}{DPM++. O2, Line} & LPIPS & 3.19E-1 & 2.87E-1 & 2.76E-1 & 2.73E-1 & 2.67E-1 & 2.61E-1 & 2.57E-1 \\
   & MSE & 3.03E-1 & 3.55E-3 & 4.38E-3 & 4.30E-3 & 1.68E-3 & 4.40E-4 & 9.60E-5 \\
   & SSIM & 4.61E-2 & 9.85E-1 & 9.83E-1 & 9.83E-1 & 9.93E-1 & 9.98E-1 & 9.99E-1 \\
  \multirow[c]{3}{*}{DPM++. O2, Box} & LPIPS & 3.10E-1 & 2.87E-1 & 2.85E-1 & 2.79E-1 & 2.74E-1 & 2.65E-1 & 2.64E-1 \\
   & MSE & 3.34E-1 & 2.44E-2 & 3.40E-2 & 3.12E-2 & 3.79E-2 & 1.97E-2 & 1.08E-2 \\
   & SSIM & 1.05E-2 & 8.98E-1 & 8.72E-1 & 8.66E-1 & 8.45E-1 & 9.15E-1 & 9.49E-1 \\
  \multirow[c]{3}{*}{DEIS, Blur} & LPIPS & 3.25E-1 & 2.81E-1 & 2.76E-1 & 2.72E-1 & 2.60E-1 & 2.51E-1 & 2.50E-1 \\
   & MSE & 2.97E-1 & 2.87E-3 & 1.34E-3 & 2.96E-3 & 2.02E-4 & 3.58E-4 & 1.61E-4 \\
   & SSIM & 5.66E-2 & 9.87E-1 & 9.94E-1 & 9.88E-1 & 9.98E-1 & 9.98E-1 & 9.99E-1 \\
  \multirow[c]{3}{*}{DEIS, Line} & LPIPS & 3.19E-1 & 2.82E-1 & 2.77E-1 & 2.76E-1 & 2.71E-1 & 2.58E-1 & 2.61E-1 \\
   & MSE & 3.03E-1 & 3.54E-3 & 2.50E-3 & 3.31E-3 & 1.29E-3 & 2.57E-4 & 5.03E-6 \\
   & SSIM & 4.61E-2 & 9.85E-1 & 9.90E-1 & 9.86E-1 & 9.95E-1 & 9.98E-1 & 1.00E+0 \\
  \multirow[c]{3}{*}{DEIS, Box} & LPIPS & 3.10E-1 & 2.84E-1 & 2.85E-1 & 2.81E-1 & 2.70E-1 & 2.61E-1 & 2.67E-1 \\
   & MSE & 3.34E-1 & 2.26E-2 & 3.26E-2 & 3.49E-2 & 3.46E-2 & 2.05E-2 & 1.03E-2 \\
   & SSIM & 1.05E-2 & 9.07E-1 & 8.74E-1 & 8.52E-1 & 8.59E-1 & 9.15E-1 & 9.54E-1 \\
  \bottomrule
  \end{tabular}
\end{table}
\begin{table}
  \centering
  \caption{DDPM performs on \textbf{Blur}, \textbf{Line}, \textbf{Box}, and \textbf{Box} with CIFAR10 Dataset, trigger: Stop Sign, and target: Hat.}
  \label{app:tbl:exp_inpainting_ddpm_cifar10_stop_sign_hat_num}
  \begin{tabular}{|ll|ccccccc|}
  \toprule
   & P.R. & 0\% & 10\% & 20\% & 30\% & 50\% & 70\% & 90\% \\
  Sampler & Metric &  &  &  &  &  &  &  \\
  \midrule
  \multirow[c]{3}{*}{UNIPC, Blur} & LPIPS & 3.25E-1 & 2.72E-1 & 2.66E-1 & 2.63E-1 & 2.53E-1 & 2.39E-1 & 2.45E-1 \\
   & MSE & 2.85E-1 & 1.25E-2 & 1.63E-3 & 2.09E-3 & 3.59E-4 & 1.01E-3 & 6.57E-4 \\
   & SSIM & 2.98E-2 & 9.52E-1 & 9.93E-1 & 9.91E-1 & 9.98E-1 & 9.95E-1 & 9.96E-1 \\
  \multirow[c]{3}{*}{UNIPC, Line} & LPIPS & 3.19E-1 & 2.73E-1 & 2.69E-1 & 2.65E-1 & 2.54E-1 & 2.41E-1 & 2.45E-1 \\
   & MSE & 2.83E-1 & 1.56E-2 & 2.33E-3 & 2.71E-3 & 1.11E-3 & 8.63E-4 & 1.23E-3 \\
   & SSIM & 2.13E-2 & 9.42E-1 & 9.90E-1 & 9.89E-1 & 9.95E-1 & 9.96E-1 & 9.94E-1 \\
  \multirow[c]{3}{*}{UNIPC, Box} & LPIPS & 3.10E-1 & 2.87E-1 & 2.79E-1 & 2.79E-1 & 2.65E-1 & 2.51E-1 & 2.52E-1 \\
   & MSE & 3.04E-1 & 3.63E-2 & 2.16E-2 & 1.43E-2 & 4.97E-3 & 3.31E-3 & 6.02E-3 \\
   & SSIM & -1.37E-3 & 8.76E-1 & 9.23E-1 & 9.47E-1 & 9.81E-1 & 9.87E-1 & 9.76E-1 \\
  \multirow[c]{3}{*}{DPM. O2, Blur} & LPIPS & 3.25E-1 & 2.73E-1 & 2.71E-1 & 2.60E-1 & 2.55E-1 & 2.41E-1 & 2.37E-1 \\
   & MSE & 2.85E-1 & 1.16E-2 & 1.52E-3 & 1.58E-3 & 8.13E-4 & 1.07E-3 & 7.07E-4 \\
   & SSIM & 2.98E-2 & 9.58E-1 & 9.93E-1 & 9.93E-1 & 9.96E-1 & 9.95E-1 & 9.97E-1 \\
  \multirow[c]{3}{*}{DPM. O2, Line} & LPIPS & 3.19E-1 & 2.76E-1 & 2.68E-1 & 2.63E-1 & 2.57E-1 & 2.42E-1 & 2.45E-1 \\
   & MSE & 2.83E-1 & 1.43E-2 & 2.28E-3 & 1.84E-3 & 1.03E-3 & 1.13E-3 & 2.68E-3 \\
   & SSIM & 2.13E-2 & 9.46E-1 & 9.91E-1 & 9.92E-1 & 9.95E-1 & 9.94E-1 & 9.89E-1 \\
  \multirow[c]{3}{*}{DPM. O2, Box} & LPIPS & 3.10E-1 & 2.86E-1 & 2.76E-1 & 2.77E-1 & 2.65E-1 & 2.54E-1 & 2.52E-1 \\
   & MSE & 3.04E-1 & 3.83E-2 & 2.15E-2 & 1.40E-2 & 3.26E-3 & 3.50E-3 & 1.41E-3 \\
   & SSIM & -1.37E-3 & 8.67E-1 & 9.23E-1 & 9.48E-1 & 9.87E-1 & 9.86E-1 & 9.93E-1 \\
  \multirow[c]{3}{*}{DPM++. O2, Blur} & LPIPS & 3.25E-1 & 2.71E-1 & 2.65E-1 & 2.62E-1 & 2.59E-1 & 2.39E-1 & 2.45E-1 \\
   & MSE & 2.85E-1 & 1.16E-2 & 1.46E-3 & 1.77E-3 & 7.03E-4 & 9.69E-4 & 4.74E-4 \\
   & SSIM & 2.98E-2 & 9.56E-1 & 9.94E-1 & 9.92E-1 & 9.96E-1 & 9.95E-1 & 9.98E-1 \\
  \multirow[c]{3}{*}{DPM++. O2, Line} & LPIPS & 3.19E-1 & 2.68E-1 & 2.72E-1 & 2.62E-1 & 2.52E-1 & 2.39E-1 & 2.41E-1 \\
   & MSE & 2.83E-1 & 1.65E-2 & 1.57E-3 & 1.78E-3 & 1.38E-4 & 8.54E-4 & 2.20E-3 \\
   & SSIM & 2.13E-2 & 9.38E-1 & 9.92E-1 & 9.92E-1 & 9.99E-1 & 9.96E-1 & 9.90E-1 \\
  \multirow[c]{3}{*}{DPM++. O2, Box} & LPIPS & 3.10E-1 & 2.85E-1 & 2.79E-1 & 2.72E-1 & 2.67E-1 & 2.52E-1 & 2.53E-1 \\
   & MSE & 3.04E-1 & 4.04E-2 & 2.17E-2 & 1.31E-2 & 4.82E-3 & 4.04E-3 & 6.23E-3 \\
   & SSIM & -1.37E-3 & 8.62E-1 & 9.22E-1 & 9.52E-1 & 9.81E-1 & 9.84E-1 & 9.73E-1 \\
  \multirow[c]{3}{*}{DEIS, Blur} & LPIPS & 3.25E-1 & 2.69E-1 & 2.71E-1 & 2.66E-1 & 2.55E-1 & 2.43E-1 & 2.46E-1 \\
   & MSE & 2.85E-1 & 1.11E-2 & 1.20E-3 & 1.85E-3 & 9.45E-4 & 5.66E-4 & 4.59E-4 \\
   & SSIM & 2.98E-2 & 9.59E-1 & 9.94E-1 & 9.93E-1 & 9.95E-1 & 9.97E-1 & 9.97E-1 \\
  \multirow[c]{3}{*}{DEIS, Line} & LPIPS & 3.19E-1 & 2.73E-1 & 2.66E-1 & 2.59E-1 & 2.53E-1 & 2.43E-1 & 2.41E-1 \\
   & MSE & 2.83E-1 & 1.39E-2 & 2.92E-3 & 2.24E-3 & 5.60E-4 & 1.29E-3 & 1.98E-3 \\
   & SSIM & 2.13E-2 & 9.49E-1 & 9.88E-1 & 9.91E-1 & 9.97E-1 & 9.94E-1 & 9.90E-1 \\
  \multirow[c]{3}{*}{DEIS, Box} & LPIPS & 3.10E-1 & 2.84E-1 & 2.79E-1 & 2.72E-1 & 2.65E-1 & 2.53E-1 & 2.44E-1 \\
   & MSE & 3.04E-1 & 4.07E-2 & 2.16E-2 & 1.16E-2 & 4.20E-3 & 3.75E-3 & 5.07E-3 \\
   & SSIM & -1.37E-3 & 8.62E-1 & 9.23E-1 & 9.57E-1 & 9.83E-1 & 9.85E-1 & 9.79E-1 \\
  \bottomrule
  \end{tabular}
\end{table}
  
\subsection{BadDiffusion and VillanDiffusion on CIFAR10 with Different Randomness $\eta$}
\label{app:subsec:exp_ddim_eta_num}
For the trigger Stop Sign, we present the numerical results of various randomness $\eta$ with targets: Hat and Shoe in \cref{app:tbl:exp_image_trig_uncond_ddpm_cifar10_ddim_eta_hat_noshfit_num} respectively.
\begin{table}[b]
  \centering
  \caption{BadDiffusion and VillanDiffusion on CIFAR10 Dataset with sampler: DDIM and trigger: Stop Sign}
  \label{app:tbl:exp_image_trig_uncond_ddpm_cifar10_ddim_eta_hat_noshfit_num}
  \begin{tabular}{|ll|cc|cc|cc|cc|}
  \toprule
  \multicolumn{2}{|c|}{Trigger} & \multicolumn{8}{c|}{Stop Sign} \\
  \midrule
  \multicolumn{2}{|c|}{Target} & \multicolumn{4}{c|}{Hat} & \multicolumn{4}{c|}{No Shift} \\
  \midrule
  \multicolumn{2}{|c|}{Poison Rate} & \multicolumn{2}{c|}{20\%} & \multicolumn{2}{c|}{50\%} & \multicolumn{2}{c|}{20\%} & \multicolumn{2}{c|}{50\%} \\
  \midrule
  \multicolumn{2}{|c|}{Correction Term} & Bad & Villan & Bad & Villan & Bad & Villan & Bad & Villan \\
  DDIM $\eta$ & Metric &  &  &  &  &  &  &  &  \\
  \midrule
  \multirow[c]{3}{*}{0.0} & FID & 10.83 & 10.49 & 11.68 & 10.94 & 10.75 & 10.66 & 11.76 & 11.09 \\
   & MSE & 2.36E-1 & 3.89E-5 & 2.35E-1 & 2.49E-06 & 1.28E-1 & 6.70E-4 & 1.26E-1 & 3.72E-6 \\
   & SSIM & 5.81E-03 & 1.00E+0 & 7.52E-03 & 1.00E+0 & 5.06E-02 & 9.92E-1 & 6.04E-02 & 9.99E-1 \\
   \midrule
  \multirow[c]{3}{*}{0.2} & FID & 10.47 & 9.84 & 12.04 & 10.31 & 10.42 & 10.06 & 11.34 & 10.58 \\
   & MSE & 2.36E-1 & 3.53E-6 & 2.35E-1 & 2.41E-06 & 1.32E-1 & 5.82E-6 & 1.29E-1 & 4.27E-6 \\
   & SSIM & 5.71E-3 & 1.00E+0 & 7.53E-3 & 1.00E+0 & 4.11E-2 & 1.00E+0 & 4.96E-2 & 9.99E-1 \\
   \midrule
  \multirow[c]{3}{*}{0.4} & FID & 13.61 & 10.91 & 19.07 & 11.66 & 13.04 & 11.17 & 17.32 & 11.68 \\
   & MSE & 2.37E-1 & 2.57E-3 & 2.37E-1 & 5.34E-4 & 1.39E-1 & 1.57E-4 & 1.37E-1 & 1.21E-4 \\
   & SSIM & 4.33E-3 & 9.85E-1 & 5.94E-3 & 9.97E-1 & 1.97E-2 & 9.98E-1 & 2.44E-2 & 9.80E-1 \\
   \midrule
  \multirow[c]{3}{*}{0.6} & FID & 19.96 & 12.16 & 25.61 & 13.02 & 15.26 & 12.36 & 21.03 & 13.10 \\
   & MSE & 2.39E-1 & 8.67E-2 & 2.39E-1 & 5.54E-2 & 1.46E-1 & 4.02E-2 & 1.45E-1 & 4.97E-2 \\
   & SSIM & 1.44E-03 & 5.14E-1 & 2.00E-3 & 6.88E-1 & 3.93E-3 & 6.07E-1 & 4.70E-3 & 5.28E-1 \\
   \midrule
  \multirow[c]{3}{*}{0.8} & FID & 25.18 & 13.85 & 25.97 & 14.76 & 16.28 & 14.07 & 19.32 & 14.75 \\
   & MSE & 2.40E-1 & 1.79E-1 & 2.40E-1 & 1.69E-1 & 1.48E-1 & 1.07E-1 & 1.48E-1 & 1.13E-1 \\
   & SSIM & 2.29E-4 & 6.96E-2 & 1.15E-3 & 1.02E-1 & 7.87E-4 & 8.61E-2 & 1.12E-3 & 6.69E-2 \\
   \midrule
  \multirow[c]{3}{*}{1.0} & FID & 26.62 & 18.16 & 24.23 & 18.88 & 19.00 & 18.42 & 20.74 & 18.95 \\
   & MSE & 1.86E-2 & 1.46E-1 & 7.74E-4 & 1.48E-1 & 2.14E-2 & 9.35E-2 & 2.22E-5 & 9.48E-2 \\
   & SSIM & 9.17E-1 & 5.84E-2 & 9.96E-1 & 5.56E-2 & 8.10E-1 & 9.64E-2 & 1.00E+0 & 9.49E-2 \\
  \bottomrule
  \end{tabular}
\end{table}

\subsection{Comparison Between BadDiffusion and VillanDiffusion on CIFAR10}
\label{app:subsec:exp_comp_baddiff_villandiff_num}
For the trigger Stop Sign, we show the numerical results of the comparison between BadDiffusion and VillanDiffusion in \cref{app:tbl:exp_image_trig_uncond_ddpm_cifar10_comp_baddiff_villandiff_hat_noshfit_num}.
\begin{table}[b]
  \centering
  \caption{BadDiffusion and VillanDiffusion on CIFAR10 Dataset with ODE samplers and trigger: Stop Sign}
  \label{app:tbl:exp_image_trig_uncond_ddpm_cifar10_comp_baddiff_villandiff_hat_noshfit_num}
  \begin{tabular}{|ll|cc|cc|cc|cc|}
  \toprule
  \multicolumn{2}{|c|}{Trigger} & \multicolumn{8}{c|}{Stop Sign} \\
  \midrule
  \multicolumn{2}{|c|}{Target} & \multicolumn{4}{c|}{Hat} & \multicolumn{4}{c|}{No Shift} \\
  \midrule
  \multicolumn{2}{|c|}{Poison Rate} & \multicolumn{2}{c|}{20\%} & \multicolumn{2}{c|}{50\%} & \multicolumn{2}{c|}{20\%} & \multicolumn{2}{c|}{50\%} \\
  \midrule
  \multicolumn{2}{|c|}{Correction Term} & Bad & Villan & Bad & Villan & Bad & Villan & Bad & Villan \\
  Samplers & Metric &  &  &  &  &  &  &  &  \\
  \midrule
  \multirow[c]{3}{*}{UniPC} & FID & 9.06 & 8.75 & 9.71 & 9.21 & 9.06 & 8.81 & 9.70 & 9.35 \\
   & MSE & 2.35E-1 & 9.96E-5 & 2.34E-1 & 9.55E-5 & 1.31E-1 & 2.80E-4 & 1.28E-1 & 4.82E-5 \\
   & SSIM & 5.98E-3 & 9.96E-1 & 7.92E-3 & 9.96E-1 & 4.26E-2 & 9.69E-1 & 5.16E-2 & 9.84E-1 \\
   \midrule
   \multirow[c]{3}{*}{DPM-Solver} & FID & 9.06 & 8.75 & 9.71 & 9.21 & 9.06 & 8.81 & 9.70 & 9.35 \\
   & MSE & 2.35E-1 & 9.96E-5 & 2.34E-1 & 9.55E-5 & 1.31E-1 & 2.80E-4 & 1.28E-1 & 4.82E-5 \\
   & SSIM & 5.98E-3 & 9.96E-1 & 7.92E-3 & 9.96E-1 & 4.26E-2 & 9.69E-1 & 5.16E-2 & 9.84E-1 \\
   \midrule
  \multirow[c]{3}{*}{DDIM} & FID & 10.83 & 10.49 & 11.68 & 10.94 & 10.75 & 10.66 & 11.76 & 11.09 \\
   & MSE & 2.36E-1 & 3.89E-5 & 2.35E-1 & 2.49E-6 & 1.28E-1 & 6.70E-4 & 1.26E-1 & 3.72E-6 \\
   & SSIM & 5.81E-3 & 1.00E+0 & 7.52E-3 & 1.00E+0 & 5.06E-2 & 9.92E-1 & 6.04E-2 & 9.99E-1 \\
   \midrule
  \multirow[c]{3}{*}{PNDM} & FID & 7.30 & 7.04 & 7.72 & 7.28 & 7.14 & 7.16 & 7.69 & 7.42 \\
   & MSE & 2.36E-1 & 7.22E-5 & 2.35E-1 & 3.27E-5 & 1.31E-1 & 4.75E-4 & 1.28E-1 & 3.17E-5 \\
   & SSIM & 5.65E-3 & 9.97E-1 & 7.38E-3 & 9.98E-1 & 4.38E-2 & 9.75E-1 & 5.30E-2 & 9.83E-1 \\
  \bottomrule
  \end{tabular}
\end{table}

\end{document}